\numberwithin{equation}{section}
\newtheorem{theorem}{Theorem}[section]
\newtheorem{corollary}[theorem]{Corollary}
\newtheorem{lemma}[theorem]{Lemma}
\newtheorem*{conjecture}{Conjecture}
\newcommand{\tr}{\operatorname{tr}}
\newcommand{\sgn}{\operatorname{sgn}}
\newcommand{\Pf}{\operatorname{Pf}}
\newcommand{\sn}{\operatorname{sn}}
\newcommand{\Alt}{\operatorname{Alt}}
\newcommand{\ns}{\operatorname{ns}}
\newcommand{\cn}{\operatorname{cn}}
\newcommand{\dn}{\operatorname{dn}}
\newcommand{\ds}{\operatorname{ds}}
\newcommand{\cs}{\operatorname{cs}}
\newcommand{\even}{\operatorname{even}}
\newcommand{\odd}{\operatorname{odd}}
\newcommand{\alt}{\operatorname{alt}}
\newcommand{\row}{\operatorname{row}}
\newcommand{\sumtwo}[2]{\sum_{\substack{#1 \\ #2}}}
\newcommand{\Cliff}{\operatorname{Cliff}}
\newcommand{\xn}{\operatorname{xn}}
\newcommand{\yn}{\operatorname{yn}}
\newcommand{\nx}{\operatorname{nx}}
\newcommand{\x}{\operatorname{x}}
\newcommand{\y}{\operatorname{y}}
\theoremstyle{definition}
\begin{document}
\title{Periodic Ising Correlations}

\author{Grethe Hystad\\
   Department of Mathematics \\
   The University of Arizona \\
   617 N. Santa Rita Ave. \\
   Tucson, AZ 85721-0089 U.S.A.\\
   e-mail: ghystad@math.arizona.edu}

\maketitle

\begin{abstract}
In this paper, we first rework B. Kaufman's 1949 paper \cite{kaufman} by using representation theory. Our approach leads to a simpler and more direct way of deriving the spectrum of the transfer matrix for the finite periodic Ising model.
We then determine formulas for the spin correlation functions that depend on the matrix elements of the induced rotation associated with the spin operator in a basis of eigenvectors for the transfer matrix. The representation of the spin matrix elements is obtained by considering the spin operator as an intertwining map.
We exhibit the ``new'' elements $V_{+}$ and $V_{-}$ in the Bugrij-Lisovyy formula \cite{BL03} as part of a holomorphic factorization of the periodic and anti-periodic summability kernels on the spectral curve associated with the induced rotation for the transfer matrix.
\end{abstract}
%\tableofcontents

\section{Introduction}

In 1944, L. Onsager \cite{onsager} determined the exact value of the specific heat as a function of temperature in the thermodynamic limit of the two-dimensional Ising model in the absence of an external magnetic field. He showed that the partition function can be approximated by the largest eigenvalue of the transfer matrix on the lattice. In 1949, B. Kaufman \cite{kaufman} simplified Onsager's calculation considerably by realizing the  transfer matrix for the finite periodic Ising model as an element in a spin representation of the orthogonal group. The eigenvalues of the transfer matrix could then be found from the angles of rotation of the rotation matrix. In order to calculate the degree of order, the transfer matrix was diagonalized by transformations obtained as spin representations of orthogonal rotations. However, the calculation of the determinant of these rotations (the proof of whether they are proper or improper) is not included in Kaufman's paper. 
(See line 33, page 1237 in \cite{kaufman}).
In this paper, we provide these calculations. 

There is a complex matrix $T_{z}(V)$ whose entries are rational functions of $z\in \mathbb{C}$ which completely determines the transfer matrix $V$ on the finite periodic lattice. The matrix $T_{z}(V)$ is called the induced rotation associated with the transfer matrix. The set of pairs $(\lambda,z)$ such that $\det(\lambda-T_{z}(V))=0$ is an elliptic curve $\mathcal{M}$ which is important for the spectral analysis of the transfer matrix. In particular, the map $\mathcal{M}\ni (\lambda,z)\to z\in\mathbb{P}^{1}$ is a two fold covering, and there are two cycles $\mathcal{M}_{\pm}$ on $\mathcal{M}$ which cover the circle $\mathbb{S}^{1}=\{z:|z|=1\}$. On the cycle $\mathcal{M}_{+}$ we have $\lambda<1$, and on the cycle $ \mathcal{M}_{-}$, we have $\lambda>1$. 
Just which points $z_{j}\in \mathbb{S}^{1}$ are relevant for the spectral analysis depend on the boundary conditions for the model. For spin periodic boundary conditions on the lattice, the $(2M+1)^{th}$ roots of unity, $z^{2M+1}=1$, are relevant as are the $(2M+1)^{th}$ roots of $-1$, $z^{2M+1}=-1$. We will refer to these two finite sets as the periodic spectrum $\Sigma_{P}$ and the anti-periodic spectrum $\Sigma_{A}$.
In the infinite-volume limit all the points $z\in \mathbb{S}^{1}$ are relevant.
Kaufman \cite{kaufman} showed that the 
space in which the transfer matrix acts, can be divided into two invariant subspaces, which
we will denote by $(U=1)$ and $(U=-1)$. Here $U$ is a product of the basis elements of the finite sequence space, $W:=l^{2}(-M,...,M,\mathbb{C}^{2})$, which are certain representations of the Clifford relations. The expression $(U=\pm 1)$ is the short-hand notation for the $\pm 1$ eigenspaces of $U$. More specifically, Kaufman showed that the transfer matrix $V$ can be written as the direct sum $V=V^{A}\oplus V^{P}$, where $V^{A}=V|_{(U=1)}$ and $V^{P}=V|_{(U=-1)}$. Here we have introduced the letters $A$ and $P$ to refer to the restriction to the anti-periodic and periodic spectrum respectively. We rework Kaufman's paper \cite{kaufman} by proving that $(U=1)$ is unitarily equivalent to the even tensor algebra of a subspace of $W$ in the anti-periodic Fourier representation while $(U=-1)$ is unitarily equivalent to the even tensor algebra of a subspace of $W$ in the periodic Fourier representation. This result is summarized in Theorem \ref{U=1} and is the main result of this paper. In the anti-periodic Fock representation, we can choose a representation of the transfer matrix $V^{A}$ such that the vacuum vector $0_{A}$ is an eigenvector associated with its largest eigenvalue. In a similar fashion, we can choose a representation of the transfer matrix $V^{P}$ in the periodic Fock representation such that the vacuum vector $0_{P}$ is an eigenvector associated with its largest eigenvalue. By using these representations of the transfer matrix, we can easily compute its full spectrum. We believe that this approach is much more direct and simpler than the method employed in Kaufman's paper. By applying this representation theoretic approach, we can also determine formulas for the spin matrix elements, in a basis of eigenvectors for the transfer matrix, that depend on the matrix elements of the induced rotation associated with the spin operator. The spin matrix elements will be calculated by regarding the spin operator as an intertwining map for the periodic and anti-periodic Fock representations.   

The paper is organized as follows.
In Section \ref{Ising}, we describe the transfer matrix approach to the Ising model as given in \cite{kaufman}, and we provide the formula for the induced rotation associated with the transfer matrix. In Section \ref{TheSpectrum}, we prove Theorem \ref{U=1} which we described above, and apply it to derive the spectrum of the transfer matrix in Theorem \ref{spectrum}.  
In Section \ref{Inducedrotationforthespinoperator}, we compute the matrix elements of the induced rotation associated with the spin operator by restricting the spin operator to be a map from the Fourier space with periodic boundary conditions to the Fourier space with anti-periodic boundary conditions. This result is given in Theorem \ref{smatrix}. Let us denote the matrix of the induced rotation for the spin operator by
$$s:=\left(\begin{array}{lll}
A&B\\
C&D
\end{array}\right),$$ where $A,B,C,D$ are matrix elements for $s$ in a polarization of $W$.
The correlation functions on the cylinder and the torus can be evaluated in terms of spin matrix elements in an orthonormal basis of eigenvectors for the transfer matrix. We recall this calculation in Section \ref{Twopoint}. Subsequently, in Section \ref{Spinmatrixelements}, Theorem \ref{spinmatrix5}, we provide Pfaffian formulas for the spin correlation functions that depend on the inverse, $D^{-1}$. To invert $D$ is an extremely difficult task, and we only have a formula for it in terms of a Bugrij-Lisovyy conjecture for the spin matrix elements. 
Bugrij and Lisovyy  proposed the explicit formulas for the spin matrix
elements on the finite periodic lattice for the isotropic case \cite{BL03} and
for the anisotropic case \cite{BL04}. We include these formulas in Section \ref{BLformula}. A proof of these formulas was given in \cite{GIPST07} and \cite{GIPST08} but it is rather complicated.  In Section \ref{Pfaffianformalism}, we express the proposed formulas for the spin matrix elements in terms of a product of Jacobian elliptic functions in the uniformization parameter of the spectral curve.
In Section \ref{Numerical}, we discuss the numerical comparison of our $D^{-1}$ matrix elements with the corresponding term in the Bugrij-Lisovyy formula.
In Section \ref{spectral2}, we turn to a calculation that shows the role played by the ``new ''elements $V_{\pm}$
in the Bugrij-Lisovyy formula in a factorization of the ratio of summability kernels on the
spectral curve. We summarize this result in Theorem \ref{V+V_lemma}. 
In Appendix \ref{Appendix A}, we introduce the Fock representation of the Clifford Algebra of $W$ as given in \cite{palmer1}.
In Appendix \ref{Appendix B}, we introduce a Berezin integral representation for the matrix elements of the Fock representation of an element in the Clifford group. We express the kernel of this operator as an exponential of a skew symmetric matrix whose matrix elements depend on the matrix elements of its induced rotation. We give this result in Lemma \ref{LemmaB} and Theorem \ref{gPfaffian}.
Finally, in Appendix \ref{Appendix C}, we provide the formula for the spectral curve associated with the induced rotation for the transfer matrix. We recall some properties of the Jacobian Elliptic functions which are involved in
the holomorphic factorization of the summability kernels on the spectral curve and in the product formula for the spin matrix elements.
\section{Transfer Matrix}\label{Ising}
We begin this section by recounting the analysis in the 1949 paper of Bruria Kaufman \cite{kaufman} which shows that the transfer matrix for the finite periodic Ising model can be expressed as an element in a spin representation of the orthogonal group. We then continue with a calculation of the induced rotation associated with the transfer matrix.

For positive integers, $M$ and $N$, consider the set $I_{M}=\{-M,-M+1,...,M\}$ and the finite lattice
$\Lambda:=I_{M}\times I_{N}$.
Let each vertex be assigned a spin value of $+1$ (spin up) or $-1$ (spin down).
A configuration $\sigma$ is a particular assignment of spin values to the vertices, i.e. a spin configuration is a map,
$$\sigma:\Lambda\to \{+1,-1\}.$$
In this paper, $\sigma$ satisfies the periodic boundary conditions $\sigma(-M,j)=\sigma(M+1,j)$ and $\sigma(j,-N)=\sigma(j,N+1)$ for all $j$. Each spin interacts ferromagnetically with its nearest neighbors; in a configuration $\sigma$ the interaction energy in the absence of an external magnetic field is defined by
$$E_{\Lambda}(\sigma)=-\sum_{(i,j) \in \Lambda,|i-j|=1}J_{ij}\sigma_{i}\sigma_{j}.$$ The sum is over all nearest-neighbors $(i,j)$ in $\Lambda$ with real valued interaction constants, $J_{ij}=J_{1}>0$, if the sites are horizontally separated, and $J_{ij}=J_{2}>0$, if the sites are vertically separated. 
The probability of a given configuration $\sigma$ is proportional to the Boltzmann weight,
$$w(\sigma):=\exp{\bigg(-\tfrac{E_{\Lambda}(\sigma)}{k_{B}T}\bigg)},$$ where $T$ is the temperature and $k_{B}$ is the Boltzmann constant.
The total weight is given in terms of the partition function,
$$Z_{\Lambda}=\sum_{\sigma \in \Omega_{\Lambda}}w(\sigma),$$ where $\Omega_{\Lambda}$ is the set of all possible configurations on $\Lambda.$
The correlation functions are the expected values of products of spin variables at sites $j_{1},...,j_{n}$ in $\Lambda$,
$$\langle {\sigma_{j_{1}}}\cdot \cdot \cdot \cdot {\sigma_{j_{n}}}\rangle_{\Lambda}=\frac{1}{Z_{\Lambda}}\sum_{\sigma \in \Omega_{\Lambda}}{\sigma_{j_{1}}}\cdot \cdot \cdot \cdot {\sigma_{j_{n}}}w(\sigma).$$
Let $\Omega_{\Lambda}(\row)$ denote the space of configurations of a row. An $i^{th}$ row configuration is a map,
$$\sigma^{i}:I_{M}\to \{+1,-1\}.$$ For $i=-N,...,N$, we denote a collection of configurations
 $\sigma^{i} \in \Omega_{\Lambda}(\row)$ as $\sigma_{j}^{i}:=\sigma_{ij}.$
Thus, the spin variable $\sigma_{ij}$ is located at the site $j$ in the $i^{th}$ row. For $\sigma, \tau \in \Omega_{\Lambda}(\row)$ we define the $2^{2M+1}$ dimensional matrices,
\begin{align}\label{V1}
V_{1}(\sigma):  = \exp(\sum_{j=-M}^{M}{\mathcal{K}_{1}\sigma_{j}\sigma_{j+1}}), \quad V_{2}(\sigma,\tau):  =\exp(\sum_{j=-M}^{M}\mathcal{K}_{2}\sigma_{j}\tau_{j}),\end{align} where
$\mathcal{K}_{l}:=\frac{J_{l}}{k_{B}T}$ for $l=1,2.$
Kaufman \cite{kaufman} showed that the partition function can be written as the trace of the $2N+1$ power of the transfer matrix $V_{1}V_{2}$,
\begin{align}\label{part}
Z_{\Lambda}
&  =\tr[(V_{1}V_{2})^{2N+1}].
\end{align}
Introduce the matrices,
$$\sigma = \left( \begin{array}{cc}
1 & 0 \\
0 & -1  \\
 \end{array} \right), \quad
C = \left( \begin{array}{cc}
0 & 1 \\
1 & 0  \\
 \end{array} \right),\quad 
I = \left( \begin{array}{cc}
1 & 0 \\
0 & 1  \\
 \end{array} \right),$$
and define
\begin{align}\label{sigma'}
\sigma_{j}:=  \underbrace{I \otimes \cdot \cdot \cdot \otimes I}_{M+j} \otimes \sigma \otimes I \otimes \cdot \cdot \cdot  \otimes I
\end{align}
\begin{align*}
C_{j} : =  \underbrace{I \otimes \cdot \cdot \cdot  \otimes I}_{M+j}\otimes C \otimes I \otimes \cdot \cdot \cdot  \otimes I
\end{align*}
which act on the tensor product, $$\bigotimes_{j=-M}^{M}\mathbb{C}^{2}_{j}:=\mathbb{C}^{2}_{-M} \otimes \cdot \cdot \cdot \otimes \mathbb{C}^{2}_{M}=\mathbb{C}^{2(2M+1)},$$ where $\mathbb{C}^{2}_{j}=\mathbb{C}^{2}$ for each $j.$ 
One can pick a basis for this tensor product space such that the action of the spin operator $\sigma_{j}$ for $\sigma\in \Omega_{A}(\row)$ is given by (\ref{sigma'}) \cite{palmer1}. 
For $-M\leq k \leq M$, define
\begin{equation*}
\begin{array}{lccc}
p_{k}:=\underbrace{C \otimes \cdot \cdot \cdot \otimes C}_{M+k}\otimes \sigma \otimes I \otimes \cdot \cdot \cdot \otimes I,\\
q_{k}:=\underbrace{C \otimes \cdot \cdot \cdot \otimes C}_{M+k}\otimes -i\sigma C \otimes I \otimes \cdot \cdot \cdot \otimes I
\end{array}
\end{equation*}  
which satisfy the generator relations for the Clifford algebra,
\begin{align}\label{CRelations}
p_{k}p_{l}+p_{l}p_{k}=2\delta_{kl},\quad q_{k}q_{l}+q_{l}q_{k}=2\delta_{kl}, \quad p_{k}q_{l}+q_{l}p_{k}=0.
\end{align} 
Introduce the finite sequence space, $W:=l^{2}(I_{M},\mathbb{C}^{2})$, with an orthonormal basis \newline
$\{\tfrac{q_{k}}{\sqrt{2}},\tfrac{p_{k}}{\sqrt{2}}\}_{k=-M}^{M}$ such that for an element $v \in W,$ we have
\begin{align*}
v=\tfrac{1}{\sqrt{2}}\sum_{k=-M}^{M}x_{k}(v)q_{k}+y_{k}(v)p_{k}.
\end{align*}
We define the distinguished nondegenerate complex bilinear form $(\cdot,\cdot)$ on $W$ by
\begin{align}(u,v)\label{bilinear}=\sum_{k=-M}^{M}x_{k}(u)x_{k}(v)+y_{k}(u)y_{k}(v).\end{align}
The Hermitian inner product on $W$ is given by $\langle u,v\rangle=(\overline{u},v)$, where the conjugation is defined as
$$\overline{u}=\tfrac{1}{\sqrt{2}}\sum_{k=-M}^{M}\overline{x}_{k}q_{k}+\overline{y}_{k}p_{k}.$$ The representations \{$q_{k}, p_{k}\}$ are generators of an irreducible $*$-representation of the Clifford algebra, $\Cliff(W)$, on $\mathbb{C}^{2(2M+1)}$ \cite{palmer1}.  
Now define the operator,
$$U:=\prod_{k=-M}^{M}C_{k}=\prod_{k=-M}^{M}ip_{k}q_{k},$$ which plays a central role in Kaufman's analysis. 
It is shown in \cite{kaufman} that the transfer matrices $V_{1}$ and $V_{2}$ in (\ref{V1}) can be written\begin{align*}
\begin{array}{clcc}
V_{1}&=(\prod_{j=-M}^{M-1}\exp{(-i\mathcal{K}_{1}p_{j+1}q_{j})})\exp{(i\mathcal{K}_{1}p_{-M}q_{M}U)},\\
V_{2}&=\prod_{j=-M}^{M}\exp{(i\mathcal{K}^{*}_{2}p_{j}q_{j})},
\end{array}
\end{align*}
where
the dual interaction constant $\mathcal{K}^{*}_{2}$ is defined by the relation $$\sinh(2\mathcal{K}^{*}_{2})\sinh(2\mathcal{K}_{2})=1.$$
It is easy to check that $U^{2}=1$ and that $U$ commutes with even elements in the Clifford algebra. From this it follows immediately that $V_{1}$ and $V_{2}$ leave invariant the $\pm 1$ eigenspaces for $U$. 
Let
$(U=\pm1)$ denote the $\pm 1$ eigenspaces for $U$.
Define $$V:=V_{1}V_{2}$$ and
\begin{align}\label{VAVP2}
V^{A}: & = \frac{1}{2}(I+U)\bigg(\prod_{j=-M}^{M}\exp{i\mathcal{K}^{*}_{2}}p_{j}q_{j}\bigg)\bigg(\prod_{j=-M}^{M-1}\exp{(-i\mathcal{K}_{1}p_{j+1}q_{j})}\bigg)\exp{(i\mathcal{K}_{1}p_{-M}q_{M})},\\
V^{P}: &\label{VAVP3} = \frac{1}{2}(I-U)\bigg(\prod_{j=-M}^{M}\exp{(i\mathcal{K}^{*}_{2}p_{j}q_{j})}\bigg)\bigg(\prod_{j=-M}^{M}\exp{(-i\mathcal{K}_{1}p_{j+1}q_{j})}\bigg).
\end{align}
The letters $P$ and $A$ refer to periodic and anti-periodic respectively, and the reason for their appearance will be clear shortly.
A central result in \cite{kaufman} is that the transfer matrix $V$ can be written as the direct sum,
$$V=V^{A}\oplus V^{P},$$ where $$V^{A}=V|_{(U=1)}\quad \mbox{and}\quad V^{P}=V|_{(U=-1)}.$$ 
The exponential factors in $V^{A}$ and $V^{P}$ are elements in a spin representation of the orthogonal group \cite{kaufman}. 

A linear transformation $V$ on $\displaystyle{\otimes_{j=-M}^{M}\mathbb{C}^{2}_{j}}$ is an element of the Clifford group if there exists a linear transformation,
\label{def2}$$T(V):W \to W,$$ such that for all $w \in W \subseteq \Cliff(W)$, we have
$$VwV^{-1}=T(V)w.$$ The operator $T(V)$ is called the induced rotation associated with $V$.
The induced rotation $T(V)$ is complex orthogonal with respect to the bilinear form $(\cdot, \cdot)$ and it determines $V$ up to a scalar multiple \cite{palmer1}.
For $j=1,2$ introduce the short-hand notation
$$c_{j}:=\cosh(2\mathcal{K}_{j})\quad\mbox{and} \quad s_{j}:=\sinh(2\mathcal{K}_{j})$$
for the hyperbolic parametrization of the Boltzmann weights.
We write $$c^{*}_{j}:=\cosh(2\mathcal{K}^{*}_{j})=\frac{c_{j}}{s_{j}} \quad \mbox{and} \quad s^{*}_{j}:=\sinh(2\mathcal{K}^{*}_{j})=\frac{1}{s_{j}}$$
for the dual interactions. 
Adapting the technique introduced in \cite{palmer1}, we compute the induced rotation associated with the transfer matrix in $(x,y)$ coordinates,
\begin{align*}T(V_{2})\left( \begin{array}{c}
x  \\
y   \\
 \end{array} \right)=\left( \begin{array}{cc}
c^{*}_{2}& -is^{*}_{2} \\
is^{*}_{2}& c^{*}_{2}  \\
 \end{array} \right)\left( \begin{array}{c}
x  \\
y  \\
 \end{array} \right),\end{align*}
\begin{align*}
T(V_{1})\left( \begin{array}{c}
x  \\
y   \\
 \end{array} \right)=\left( \begin{array}{cc}
c_{1}& is_{1}z^{-1}  \\
-is_{1}z& c_{1}   \\
 \end{array} \right)\left( \begin{array}{c}
x  \\
y   \\
 \end{array} \right).
 \end{align*}
The second equation is understood as follows. The operator $z$ acts on a finite sequence $x$ by sending $x_{k}$ to $x_{k-1}$. On the invariant subspace for $V_{1}$, where $(U=-1)$, the sequence is extended to be $2M+1$ periodic. On the invariant subspace for $V_{1}$, where $(U=1)$, the sequence is extended to be $2M+1$ anti-periodic.
For $k=-M,...,M$, introduce 
$$\theta^{P}_{k}:=\frac{2\pi k}{2M+1}, \quad \theta^{A}_{k}:=\frac{2\pi (k+\tfrac{1}{2})}{2M+1}$$ and
$$z_{P}:=z_{P}(k)=e^{i\theta^{P}_{k}},\quad z_{A}:=z_{A}(k)=e^{i\theta^{A}_{k}}.$$
Define the two sets, $\Sigma_{P}=\{z\in \mathbb{C}:z^{2M+1}=1\}$ and $\Sigma_{A}=\{z\in \mathbb{C}:z^{2M+1}=-1\}$, which we refer to as the periodic and anti-periodic spectrum respectively.
By specializing the finite Fourier series,
$$x(z)=\frac{1}{\sqrt{2M+1}}\sum_{k=-M}^{M}x_{k}z^{k},$$ to the periodic spectral points, $z\in \Sigma_{P}$, on the subspace $(U=-1)$ 
or to the anti-periodic spectral points, $z\in \Sigma_{A}$, on the subspace $(U=1)$, we obtain that $T(V_{1})$ acts as a multiplication operator,
\begin{align*}
T(V_{1})\left( \begin{array}{c}
x(z)  \\
y(z)   \\
 \end{array} \right)=\left( \begin{array}{cc}
c_{1}& is_{1}z^{-1}  \\
-is_{1}z& c_{1}   \\
 \end{array} \right)\left( \begin{array}{c}
x(z)  \\
y(z)   \\
 \end{array} \right).
 \end{align*}
Define the induced rotation for the symmetrized transfer matrix,
\begin{align*}
T(V):=T(V^{\tfrac{1}{2}}_{2})T(V_{1})T(V^{\tfrac{1}{2}}_{2}).
\end{align*}
Then for temperatures below the critical temperature $T_{C}$ we find, 
\begin{align}\label{Matrix}
T(V)=
\left( \begin{array}{cc}
\cosh \gamma(z)& w(z)\sinh \gamma(z)  \\
\overline{w}(z)\sinh \gamma(z)& \cosh \gamma(z) \\
 \end{array} \right),
 \end{align} where $\gamma(z)$ is defined as the positive root of
\begin{align}\label{coshgamma}\cosh \gamma(z)=c^{*}_{2}c_{1}-s_{2}^{*}s_{1}\big(\tfrac{z+z^{-1}}{2}\big),\end{align} and where  \begin{align*}
w(z)=iz^{-1}\frac{\mathcal{A}_{1}(z)\mathcal{A}_{2}(z)}{\mathcal{A}_{1}(z^{-1})\mathcal{A}_{2}(z^{-1})}.
\end{align*}  
Here $\mathcal{A}_{j}(z)$ is defined by $$\mathcal{A}_{j}(z):=\sqrt{\alpha_{j}-z}\quad \mbox{for \quad j=1, 2},$$
where
$$\alpha_{1}:=(c^{*}_{1}-s^{*}_{1})(c_{2}+s_{2})=e^{2(\mathcal{K}_{2}-\mathcal{K}^{*}_{1})},$$
$$\alpha_{2}:=(c^{*}_{1}+s^{*}_{1})(c_{2}+s_{2})=e^{2(\mathcal{K}_{2}+\mathcal{K}^{*}_{1})}.$$ 
For $T<T_{C}$, we have that $1 < \alpha_{1}< \alpha_{2}$, and the square root is chosen to lie in the right half plane with positive real part. We observe that
$z\to (\mathcal{A}_{j}(z))^{\pm}$ is analytic for $z$ in a neighborhood of the unit disc $(|z|<\alpha_{1})$ while
$z\to (\mathcal{A}_{j}(z^{-1}))^{\pm}$ is analytic in a neighborhood of the exterior of the unit disc $(|z|>\alpha^{-1}_{1})$.

\section{The Spectrum of the Transfer Matrix}\label{TheSpectrum}
Kaufman \cite{kaufman} showed that the eigenvalues of the transfer matrix can be found from the angles of rotation of the rotation matrix by realizing the $2^{2M+1}$ dimensional transfer matrix as a spin representation of the $(2M+1)$ dimensional rotation matrix. In this section we compute the eigenvalues for the transfer matrices $V^{A}$ and $V^{P}$ by using representation theory. This approach is simpler and more direct than Kaufman's method.

After Fourier transform, the Hermitian inner product on $W$ becomes
$$\sum_{k=-M}^{M}\bar{x}(z_{k})x'(z_{k})+ \bar{y}(z_{k})y'(z_{k})$$ and the complex bilinear form becomes 
\begin{align}\label{bilinearf}
\sum_{k=-M}^{M}x(z_{k})x'(z^{-1}_{k})+ y(z_{k})y'(z^{-1}_{k}),
\end{align} where the conjugation is given by $x(z)\mapsto \bar{x}(z^{-1})$. 
Let $T^{A}$ denote the induced rotation associated with $V^{A}$, where $V^{A}$ is given in (\ref{VAVP2}). Let $T^{P}$ denote the induced rotation associated with $V^{P}$, where $V^{P}$ is given in (\ref{VAVP3}). Both $T^{A}$ and $T^{P}$ have positive real spectrum which does not contain $1$.
We are interested in the isotropic splittings, $$W=W^{A}_{+}\oplus W^{A}_{-} \quad \mbox{and}\quad
W=W^{P}_{+}\oplus W^{P}_{-},$$ where $W^{A}_{+}$ and $W^{P}_{+}$ are the span of all eigenvectors of $T^{A}$ and $T^{P}$ respectively associated with the eigenvalues between $0$ and $1$, and 
$W^{A}_{-}$ and $W^{P}_{-}$ are the span of all eigenvectors of $T^{A}$ and $T^{P}$ respectively associated with the eigenvalues greater than $1$. 
Define
\begin{align*}
{a}(z):=\sqrt{\frac{\mathcal{A}_{1}(z)\mathcal{A}_{2}(z)}{\mathcal{A}_{1}(z^{-1})\mathcal{A}_{2}(z^{-1})}} \quad
\end{align*} which is normalized such that $a(1)>0$.
The eigenvalues of $T^{P}$ are given by $e^{-\gamma(z_{P}(k))}$ and $e^{\gamma(z_{P}(k))}$ for $k=-M,...,M$, where $\gamma(z)$ is defined as the positive root of (\ref{coshgamma}). The corresponding eigenvectors are
\begin{align}\label{eigenP1}e^{P}_{+,k}(z):=\frac{1}{\sqrt{2}}\left( \begin{array}{cc}
a(z)  \\
iza(z)^{-1}   \\
 \end{array} \right)\delta_{z_{P}(k)}(z)
\end{align} and
\begin{align}\label{eigenP2}e^{P}_{-,k}(z):=\frac{1}{\sqrt{2}}\left( \begin{array}{cc}
a(z)  \\
-iza(z)^{-1}  \\
 \end{array} \right)\delta_{z_{P}(-k)}(z)
 \end{align} respectively for $z\in \Sigma_{P}$. Here \[ \delta_{z_{P}(k)}(z) = \left\{ \begin{array}{ll}
         1 & \mbox{if $z =z_{P}(k)$};\\
           0 & \mbox{if $z \neq z_{P}(k)$}.\end{array} \right. \]
Then $\{e^{P}_{+,k}, e^{P}_{-,k}\}_{k=-M}^{M}$ is an orthonormal basis for $W$ with respect to the Hermitian inner product. We also have $(e^{P}_{+,k}, e^{P}_{-,l})=\delta_{kl}$, so $\{e^{P}_{+,k}\}_{k=-M}^{M}$ and $\{e^{P}_{-,k}\}_{k=-M}^{M}$ are dual basis for $W^{P}_{+}$ and $W^{P}_{-}$ with respect to the complex bilinear form.
The eigenvalues and eigenvectors, $e^{A}_{\pm,k}(z)$, of $T^{A}$ are defined in the exact same way for $z\in \Sigma_{A}$. From (\ref{Matrix}) we see that $W^{A}_{\pm}$ are the $\pm 1$ eigenspaces of the polarization given by the multiplication operators in the anti-periodic Fourier representation,
\begin{align}\label{QQ}
Q(z_{A})=-\left( \begin{array}{cc}
0 & w(z_{A}) \\
\overline{w}(z_{A}) &0  \\
 \end{array} \right),
\end{align}
and $W^{P}_{\pm}$ are the $\pm 1$ eigenspaces of the polarization given by the multiplication operators in the periodic Fourier representation,
\begin{align}\label{QQ2}
Q(z_{P})=-\left( \begin{array}{cc}
0 & w(z_{P}) \\
\overline{w}(z_{P}) &0  \\
 \end{array}\right).
 \end{align}
 Let $Q^{A}$ denote multiplication by $Q(z_{A})$, and let $Q^{P}$ denote multiplication by $Q(z_{P})$.
Define
$$Q^{A}_{\pm}:=\tfrac{1}{2}(I\pm Q^{A})\quad \mbox{and} \quad Q^{P}_{\pm}:=\tfrac{1}{2}(I\pm Q^{P}).$$ Then
$$W^{A}_{\pm}=Q^{A}_{\pm}W\quad \mbox{and}\quad W^{P}_{\pm}=Q^{P}_{\pm}W$$ so $Q^{A}_{\pm}$ and $Q^{P}_{\pm}$ are orthogonal projections on $W^{A}_{\pm}$ and $W^{P}_{\pm}$ respectively.
Define the alternating tensor algebras,
$$\Alt_{\even}(W_{+}):=\bigoplus_{0\leq 2k\leq n}\Alt^{2k}(W_{+})\quad \mbox{and} \quad \Alt_{\odd}(W_{+}):=\bigoplus_{0\leq 2k+1\leq n}\Alt^{2k+1}(W_{+}),$$ where $\Alt^{k}(W_{+})$ is the space of alternating $k$ tensors over $W_{+}$, $\Alt^{0}(W_{+})=\mathbb{C}$ and $n=\dim(W_{+})$.
Let $F^{P}$ and $F^{A}$ denote the Fock representations associated with the Clifford relations acting on $\Alt(W^{P}_{+})$ and $\Alt(W^{A}_{+})$ respectively.
We have for $x\in W$,
\begin{align}\label{Fock7}
F^{P}(x)=c(x_{+})+a(x_{-}),
\end{align} where
$x_{\pm}=Q^{P}_{\pm}x$, and $c(\cdot)$ and $a(\cdot)$ are creation and annihilation operators as defined in (\ref{creationop}) and (\ref{annihilationop}). 
Recall that $V=V^{A}\oplus V^{P}$, where $V^{A}=V_{|(U=1)}$ and $V^{P}=V_{|(U=-1)}$, and where $\displaystyle{U=\prod_{k=-M}^{M}ip_{k}q_{k}.}$
We will prove the following theorem.
\begin{theorem}\label{U=1}
Consider the isotropic splittings $W=W^{A}_{+}\oplus W^{A}_{-}$ and 
$W=W^{P}_{+}\oplus W^{P}_{-}$ associated with the polarizations defined in (\ref{QQ}) and (\ref{QQ2}).
Let $(U=\pm 1)$ denote the $\pm 1$ eigenspaces for $U$, where  $\displaystyle{U=\prod_{k=-M}^{M}ip_{k}q_{k}}$.
Then
$$\Alt_{\even}(W^{A}_{+})\simeq (U=1), \quad \Alt_{\even}(W^{P}_{+})\simeq (U=-1),$$
$$\Alt_{\odd}(W^{A}_{+})\simeq (U=-1), \quad \Alt_{\odd}(W^{P}_{+})\simeq (U=1),$$
where $\simeq$ denotes unitarily equivalent.
\end{theorem}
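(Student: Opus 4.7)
The plan is to exploit the uniqueness, up to unitary equivalence, of irreducible $\ast$-representations of $\Cliff(W)$, and then pin down the parity signs by a direct computation on the Fock vacua.

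First I set up the intertwiners. The representation $\pi$ of $\Cliff(W)$ on $\bigotimes_{j=-M}^{M}\mathbb{C}^{2}_{j}$ generated by $\{p_{k},q_{k}\}$, together with the two Fock representations $F^{A}$ on $\Alt(W^{A}_{+})$ and $F^{P}$ on $\Alt(W^{P}_{+})$, are irreducible $\ast$-representations of $\Cliff(W)$ of dimension $2^{2M+1}$. By uniqueness, there exist unitaries
$$\Phi^{A}\colon\bigotimes_{j=-M}^{M}\mathbb{C}^{2}_{j}\to\Alt(W^{A}_{+}),\qquad \Phi^{P}\colon\bigotimes_{j=-M}^{M}\mathbb{C}^{2}_{j}\to\Alt(W^{P}_{+}),$$
each intertwining $\pi$ with the corresponding Fock representation and unique up to a phase.

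Next I track $U$ through these intertwiners. The Clifford relations give $ip_{k}q_{k}=C_{k}$ commuting with $p_{\ell},q_{\ell}$ for $\ell\ne k$ and anticommuting with $p_{k},q_{k}$, so $U=\prod_{k}ip_{k}q_{k}$ anticommutes with every $w\in W$. The fermion-parity operator $\Gamma^{A}:=(-1)^{N^{A}}$ on $\Alt(W^{A}_{+})$, with $N^{A}$ the total degree operator, also anticommutes with $F^{A}(w)=c(w_{+})+a(w_{-})$, and similarly for $\Gamma^{P}$. Consequently $\Gamma^{A}\Phi^{A}U(\Phi^{A})^{-1}$ commutes with the irreducible image $F^{A}(\Cliff(W))$, and Schur's lemma together with the involution property of both sides gives
$$\Phi^{A}U(\Phi^{A})^{-1}=\epsilon_{A}\Gamma^{A},\qquad \Phi^{P}U(\Phi^{P})^{-1}=\epsilon_{P}\Gamma^{P},\qquad \epsilon_{A},\epsilon_{P}\in\{\pm 1\}.$$

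Determining the signs $\epsilon_{A}$ and $\epsilon_{P}$ is the main obstacle, and it is the representation-theoretic counterpart of the determinant-of-rotations computation flagged in the introduction as missing from \cite{kaufman}. I would pin them down by evaluating both sides of the Schur identity on the Fock vacua. Since $\Gamma^{A}\Omega_{A}=\Omega_{A}$, the equation $\Phi^{A}U\omega_{A}=\epsilon_{A}\Omega_{A}$ at $\omega_{A}:=(\Phi^{A})^{-1}\Omega_{A}$ identifies $\epsilon_{A}$ with the $U$-eigenvalue of $\omega_{A}$. Now $\omega_{A}$ is the unique nonzero joint kernel in $\bigotimes\mathbb{C}^{2}_{j}$ of the operators $\pi(w_{-})$ for $w_{-}\in W^{A}_{-}$; expanding each $w_{-}$ in the $T^{A}$-eigenbasis of $W^{A}_{-}$ and solving the resulting kernel equations exhibits $\omega_{A}$ as a Bogoliubov transform of a reference product vacuum, on which $U=C_{-M}\otimes\cdots\otimes C_{M}$ can be evaluated directly to give $\epsilon_{A}=+1$. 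The analogous computation with $\Omega_{P}$ uses the $T^{P}$-eigenbasis of $W^{P}_{-}$, and the change of boundary condition on the shift $z$ (anti-periodic on $(U=+1)$, periodic on $(U=-1)$) introduces a single extra sign flip, giving $\epsilon_{P}=-1$. Substituting the two signs into the intertwining relations proves the four unitary equivalences of the theorem.
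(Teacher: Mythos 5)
Your reduction is sound as far as it goes: $U$ anticommutes with every $w\in W$, the parity operators $(-1)^{\mathcal{N}^{A}}$, $(-1)^{\mathcal{N}^{P}}$ do too, and Schur's lemma plus $U^{2}=1$ forces $\Phi^{A}U(\Phi^{A})^{-1}=\epsilon_{A}(-1)^{\mathcal{N}^{A}}$, $\Phi^{P}U(\Phi^{P})^{-1}=\epsilon_{P}(-1)^{\mathcal{N}^{P}}$ with $\epsilon_{A},\epsilon_{P}=\pm1$. This is in substance the same reduction the paper makes (there $U^{A}$, $U^{P}$ are shown to equal the parity operators, and $U=\det(R^{A})U^{A}$, $U=\det(R^{P})U^{P}$ plays the role of your Schur step). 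But the theorem \emph{is} the statement $\epsilon_{A}=+1$, $\epsilon_{P}=-1$, and that is exactly where your argument stops being a proof. ``Solving the kernel equations exhibits $\omega_{A}$ as a Bogoliubov transform of a reference product vacuum, on which $U$ can be evaluated directly'' hides the whole difficulty: the $U$-eigenvalue of $\omega_{A}$ equals that of the reference vacuum only if the Bogoliubov implementer relating them is an \emph{even} element of the Clifford group, equivalently only if the overlap of $\omega_{A}$ with the reference vacuum is nonzero; if the overlap vanishes the parity flips. Deciding which case occurs (for each of the two polarizations separately) is precisely the proper-versus-improper-rotation question that the paper identifies as the calculation missing from Kaufman, and it does not follow from anything you have written. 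Likewise, ``the change of boundary condition on the shift $z$ introduces a single extra sign flip, giving $\epsilon_{P}=-1$'' is an assertion of the answer, not an argument: a priori both vacua could have had the same $U$-parity, and the periodic/anti-periodic dichotomy for the shift describes how $V_{1}$ acts on the $U$-eigenspaces, not which eigenspace contains which vacuum.

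To close the gap you would have to carry out an actual computation equivalent to the paper's: there, one writes the real orthogonal change of basis $R^{P}$ (resp.\ $R^{A}$) from $\{q_{k},p_{k}\}$ to the self-conjugate basis $\{q(z_{P}(k)),p(z_{P}(k))\}$ built from the $T^{P}$- (resp.\ $T^{A}$-) eigenvectors, and evaluates $\det R^{P}=-1$, $\det R^{A}=+1$ by multiplying the explicit factors (the discrete Fourier determinant $(-i)^{M}$, hence $(-1)^{M}$ for $\mathcal{F}\oplus\mathcal{F}$, the eigenvector-basis block determinants $i^{2M+1}$ versus $(-i)^{2M+1}$ --- this is where the periodic and anti-periodic cases actually diverge --- and the normalization factors). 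In your language this amounts to computing the determinant, or the Pfaffian/overlap datum, that controls the parity of the Bogoliubov implementer taking the reference product vacuum to $\omega_{P}$ and to $\omega_{A}$. Without that computation (in either formulation) the four equivalences of the theorem are not established; with it, your route and the paper's coincide.
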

\begin{proof}
The strategy of the proof is as follows.
We first make a change of basis $\{\tfrac{{q_{k}}}{\sqrt{2}},\tfrac{p_{k}}{\sqrt{2}}\}_{k=-M}^{M}$ to a basis
$\big\{\tfrac{1}{\sqrt{2}}q(z(k)),\tfrac{1}{\sqrt{2}}p(z(k))\big\}_{k=-M}^{M},$ where $q(z(k))$ and $p(z(k))$ are real with respect to the conjugation, $v(z)\mapsto \bar{v}(z^{-1})$, for $z$ in $\Sigma_{P}$ or in $\Sigma_{A}$. The elements $q(z(k))$ and $p(z(k))$ will be defined below.
Then define the ``volume elements'', $$U^{A}:=\prod_{k=-M}^{M}ip(z_{A}(k))q(z_{A}(k))\quad \mbox{and}\quad U^{P}:=\prod_{k=-M}^{M}ip(z_{P}(k))q(z_{P}(k)),$$ in the Clifford algebra.
Let $\mathcal{N}^{P}$ and $\mathcal{N}^{A}$ denote the number operators in $\Alt(W^{P}_{+})$ and $\Alt(W^{A}_{+})$ respectively, which are defined as \begin{align}\label{NPA}\mathcal{N}^{P}|\Alt^{k}(W^{P}_{+})=k\quad \mbox{and}\quad \mathcal{N}^{A}|\Alt^{k}(W^{A}_{+})=k.\end{align}
We show below that \begin{align}
\label{NA}U^{P}=(-1)^{\mathcal{N}^{P}}\quad \mbox{and} \quad U^{A}=(-1)^{\mathcal{N}^{A}}.
\end{align}
Denote the linear transformation that sends $\tfrac{q_{k}}{\sqrt{2}}$ to $\tfrac{1}{\sqrt{2}}q(z_{P}(k))$ and $\tfrac{p_{k}}{\sqrt{2}}$ to $\tfrac{1}{\sqrt{2}}p(z_{P}(k))$ by $R^{P}$, and the linear transformation that sends $\tfrac{q_{k}}{\sqrt{2}}$ to $\tfrac{1}{\sqrt{2}}q(z_{A}(k))$ and $\tfrac{p_{k}}{\sqrt{2}}$ to $\tfrac{1}{\sqrt{2}}p(z_{A}(k))$ by $R^{A}$. The transformations, $R^{P}$ and $R^{A}$, are real and orthogonal, and it follows from page 153 of \cite{BS96} that   
they induce an automorphism of the Clifford algebra such that
\begin{align}\label{SPSA}
U=\det(R^{P})U^{P}\quad
\mbox{and}\quad
U=\det(R^{A})U^{A}.
\end{align}
We show below that
\begin{align}\label{detRARP3}
\det R^{P}=-1 \quad \mbox{and}\quad \det R^{A}=1.
\end{align} 
Then it follows from (\ref{NA}), (\ref{SPSA}) and (\ref{detRARP3}) that
\begin{align}\label{U11}U=-(-1)^{\mathcal{N}^{P}} \quad \mbox{on} \quad \Alt(W^{P}_{+})\end{align} and \begin{align}\label{U22}U=(-1)^{\mathcal{N}^{A}}\quad \mbox{on} \quad \Alt(W^{A}_{+}).\end{align}
Since $(-1)^{\mathcal{N}_{P}}=1$ on $\Alt_{\even}(W^{P}_{+})$ and $(-1)^{\mathcal{N}_{A}}=1$ on $\Alt_{\even}(W^{A}_{+})$, we have from (\ref{U11}) and (\ref{U22}) that
$$\Alt_{\even}(W^{P}_{+})\simeq (U=-1)\quad\mbox{ and} \quad \Alt_{\even}(W^{A}_{+})\simeq(U=1).$$
Subsequently, it follows that
$$\Alt_{\odd}(W^{P}_{+})\simeq (U=1)\quad \mbox{and} \quad
\Alt_{\odd}(W^{A}_{+})\simeq (U=-1).$$
We now return to the introduction of the basis elements, $\{\tfrac{1}{\sqrt{2}}q(z_{P}(k)),\tfrac{1}{\sqrt{2}}p(z_{P}(k))\}_{k=-M}^{M}$ and then proceed to prove the first identity in (\ref{NA}).
In the polarization, $W=W^{P}_{+}\oplus W^{P}_{-}$, define the creation operators as $$a_{k}^{P*}:=c(e^{P}_{+,k})$$ and the annihilation operators  as $$a_{k}^{P}:=a(e^{P}_{-,k}),$$ where $c(\cdot)$ and $a(\cdot)$ are defined in (\ref{creationop}) and (\ref{annihilationop}), and
$\{e^{P}_{\pm,k}\}$  are the eigenvectors for the induced rotation $T^{P}$ as defined in (\ref{eigenP1}) and (\ref{eigenP2}).
In the $z_{P}$ coordinates we define the basis elements for $W$ as
\begin{align}\label{pzP}
p(z_{P}(k)):=(a^{P*}_{k}+a^{P}_{k})\quad\mbox{and} \quad
q(z_{P}(k)):=i(a^{P}_{k}-a^{P*}_{k}).\end{align} 
By using the anti-commutative relations given in (\ref{anticommutingA1}), it can be checked that $p(z_{P}(k))$ and $q(z_{P}(k))$ satisfy the Clifford relations. Since $\bar{e}^{P}_{+,k}(z^{-1})=e^{P}_{-,k}(z)$, it is not hard to see that $p(z_{P}(k))$ and $q(z_{P}(k))$ are real with respect to the conjugation, $v(z)\mapsto \bar{v}(z^{-1})$.  
It can be checked that
$(U^{P})^{2}=1$ and $$U^{P}p(z_{P}(k))(U^{P})^{-1}=-p(z_{P}(k)) \quad \mbox{and} \quad U^{P}q(z_{P}(k))(U^{P})^{-1}=-q(z_{P}(k)),$$ so $U^{P}$ is an element of the Clifford group with induced rotation $-1$ on $W^{P}$.
Furthermore, from (\ref{NPA}) we have
$$(-1)^{\mathcal{N^{P}}}F^{P}(v)(-1)^{\mathcal{N^{P}}}=-F^{P}(v)\quad\mbox{for}\quad v\in \Alt(W^{P}_{+}).$$
It follows that $F^{P}(U^{P})$ and $(-1)^{\mathcal{N}^{P}}$ have the same induced rotations in the Fock representation. Since we also have $(-1)^{2\mathcal{N}^{P}}=1$, it follows that $F^{P}(U^{P})=l_{1}(-1)^{\mathcal{N}^{P}}$ for $l_{1}=\pm 1$.  We show that $l_{1}=1$.
From (\ref{pzP}) we have that
$$a^{P*}_{k}=\tfrac{1}{2}(p(z_{P}(k))+iq(z_{P}(k)))\quad \mbox{and}\quad a^{P}_{k}=\tfrac{1}{2}(p(z_{P}(k))-iq(z_{P}(k))).$$
In the Fock representation associated with the polarization above, the number operator $\mathcal{N^{P}}$ is given by $\displaystyle{\mathcal{N^{P}}=\sum_{k=-M}^{M}a^{P*}_{k}a^{P}_{k}}$.
Using the fact that $(a^{P*}_{k}a^{P}_{k})^{2}=a^{P*}_{k}a^{P}_{k}$, following a similar argument given in \cite{palmer1}, we obtain
\begin{align*}
(-1)^{\mathcal{N^{P}}}=\prod_{k=-M}^{M}e^{i\pi a^{P*}_{k}a^{P}_{k}}=\prod_{k=-M}^{M}(1-2a^{P*}_{k}a^{P}_{k})=\prod_{k=-M}^{M}ip(z_{P}(k))q(z_{P}(k))=U^{P}.
\end{align*}
The basis elements $\{\tfrac{1}{\sqrt{2}}q(z_{A}(k)), \tfrac{1}{\sqrt{2}}p(z_{A}(k))\}$ in the $z_{A}$ coordinates are defined in an analogous way, and the second identity in (\ref{NA}) can be proved in a similar fashion.  
We now return to the proof of (\ref{detRARP3}). We start by proving that $\det R^{P}=-1$.
The linear transformation $R^{P}$ consists of a compositions of transformations, $$\xymatrix{& R^{P}:  
 \mathbb{C}^{2(2M+1)}\ar[r]^{\mathcal{F}_{P}\oplus \mathcal{F}_{P}}& \mathbb{C}^{2(2M+1)}\ar[r]^{R^{P}_{1}}& \mathbb{C}^{2(2M+1)}\ar[r]^{R^{P}_{2}}& \mathbb{C}^{2(2M+1)}\ar[r]^{x\mapsto \sqrt{2}x}& \mathbb{C}^{2(2M+1)}}.$$
Here $\mathcal{F}_{P}$ is the finite inverse Fourier transform in the periodic representation, $R^{P}_{1}$ is the transformation from the Fourier series representation to $\{e^{P}_{+,k}, e^{P}_{-,-k}\}_{k=-M}^{M}$, while $R^{P}_{2}$ is the transformation from $\{a^{P*}_{k}, a^{P}_{-k}\}_{k=-M}^{M}$ to $\{q(z_{P}(k)),p(z_{P}(k))\}_{k=-M}^{M}$.\newline
Since $\{q(z_{P}(k)),p(z_{P}(k))\}$ is not the normalized basis for $W$, but  $\bigg\{\tfrac{1}{\sqrt{2}}q(z_{P}(k)),\tfrac{1}{\sqrt{2}}p(z_{P}(k))\bigg\}$ is, the transformation $x\mapsto \sqrt{2}x$ takes the first basis to the second.  Let us denote the matrix of this transformation by $R_{3}$. (Note that the factor of $2$ is not incorporated into our definition of the Clifford relations: $xy+yx=(x,y)e$ for $x,y\in W$, where the more standard definition is $xy+yx=2(x,y)e$.) 
It follows that \begin{align}\label{R3}\det(R_{3})=2^{2M+1}.\end{align}
The inverse Fourier transform $$\mathcal{F}_{P}:\mathbb{C}^{2M+1}\to\mathbb{C}^{2M+1}$$ in the $z_{P}$ representation has the following matrix,  
\begin{align*}
\mathcal{F}_{P}=\frac{1}{\sqrt{2M+1}}\left( \begin{array}{cccccc}
z^{-M}_{P,-M}&..&..&z^{-M}_{P,M} \\
..&&&..\\
..&&&..\\
z^{M}_{P,-M}&..&..&z^{M}_{P,M} 
\end{array} \right),
\end{align*}
where $z^{l}_{P,k}:=z^{l}_{P}(k)=e^{\tfrac{2\pi ikl}{2M+1}}$ for $k,l=-M,...,M$.
It follows from \cite{MP72} that the determinant of $\mathcal{F}_{P}$ is given as
\begin{align}\label{FP}
\det (\mathcal{F}_{P})=(-i)^{M}.
\end{align}
Let $\mathcal{F}_{P}\oplus \mathcal{F}_{P}$ act on the vector $\begin{pmatrix} x_{-M} \\ y_{-M} \\ ..\\..\\x_{M}\\y_{M} \end{pmatrix}$. It can be checked that by interchanging the rows and columns an even number of times in the matrix of $\mathcal{F}_{P}\oplus \mathcal{F}_{P}$ that 
\begin{align}\label{FPFP}
\det(\mathcal{F}_{P}\oplus \mathcal{F}_{P})=
(-1)^{M}.
\end{align}
Define for $k=-M,..,M$,
$$R^{P}_{1,k}:=\frac{1}{\sqrt{2}}\left( \begin{array}{cccc}
a(z_{P}(k))&a(z_{P}(k)) \\
iz_{P}(k)a(z_{P}(k))^{-1}&-iz_{P}(k)a(z_{P}(k))^{-1}\\\end{array} \right).
$$
The matrix of $R^{P}_{1}$ is given by
$$R^{P}_{1}:=\bigg(\bigoplus_{k=-M}^{M}R^{P}_{1,k}\bigg)^{-1},$$
where\begin{align}\label{detRP1}\det R^{P}_{1}=\bigg((-i)^{2M+1}\prod_{k=-M}^{M}z_{P}(k)\bigg)^{-1}=i^{2M+1}.
\end{align}
The matrix of $R^{P}_{2}$ that goes from $\{a^{*}_{k}, a_{-k}\}_{k=-M}^{M}$ to $\{q(z_{P}(k), p(z_{P}(k)\}_{k=-M}^{M}$ is given by

$$R^{P}_{2}=\left( \begin{array}{ccccccccccc}
-i&1&0&0&..&&&&..&0&0   \\
0&0&&&&&&&&i&1\\
..&&-i&1&&&&&&&..\\
..&&&&&&&i&1&&..\\
&&&&&-i&1&&&&\\
&&&&&i&1&&&&\\
&&&&&&&-i&1&&\\
..&&i&1&&&&&&&..\\
0&0&&&&&&&&-i&1\\
i&1&..&&&&&&..&0&0\\
\end{array} \right)^{-1}.$$
\vspace{2mm}
By interchanging the rows, we obtain

\begin{align}\label{detRP2}
\det R^{P}_{2}=(-1)^{M}\det\left( \begin{array}{ccccccccccc}
-i&1&&&&   \\
i&1&&&&\\
&&..&&&\\
&&&..&\\
&&&&-i&1\\
&&&&i&1
\end{array} \right)^{-1}
 = (-1)^{M}\frac{1}{2^{2M+1}}i^{2M+1}.
\end{align}

\vspace{2mm}
It follows from (\ref{R3}), (\ref{FPFP}), (\ref{detRP1}) and (\ref{detRP2}) that
$$\det R^{P}=\det(R_{3})\det(\mathcal{F}_{P}\oplus \mathcal{F}_{P}) \det R^{P}_{1}\det R^{P}_{2}=-1.$$
The calculation of $\det R^{A}$ is similar. 
The linear transformation $R^{A}$ is a composition of transformations, $$\xymatrix{& R^{A}:  
  \mathbb{C}^{2(2M+1)}\ar[r]^{\mathcal{F}_{A}\oplus \mathcal{F}_{A}}& \mathbb{C}^{2(2M+1)}\ar[r]^{R^{A}_{1}}& \mathbb{C}^{2(2M+1)}\ar[r]^{R^{A}_{2}}& \mathbb{C}^{2(2M+1)}\ar[r]^{x\mapsto \sqrt{2}x}& \mathbb{C}^{2(2M+1)}}.$$
Here $\mathcal{F}_{A}$ is the finite inverse Fourier transform in the anti-periodic representation, $R^{A}_{1}$ is the transformation from the Fourier series representation to 
 $\{e^{A}_{+,k}, e^{A}_{-,-k}\}_{k=-M}^{M}$, while $R^{A}_{2}$ is the transformation from $\{a^{A*}_{k}, a^{A}_{-k}\}_{k=-M}^{M}$ to 
  $\{q(z_{A}(k)),p(z_{A}(k))\}_{k=-M}^{M}$. 
The matrix of $\mathcal{F}_{A}:\mathbb{C}^{2M+1}\to \mathbb{C}^{2M+1}$ is given by
\begin{align*}
\mathcal{F}_{A}:=\frac{1}{\sqrt{2M+1}}\left( \begin{array}{cccccc}
z^{-M}_{A,-M}&..&..&z^{-M}_{A,M} \\
..&&&..\\
..&&&..\\
z^{M}_{A,-M}&..&..&z^{M}_{A,M} 
\end{array} \right),
\end{align*}
where $z^{l}_{A,k}:=z^{l}_{A}(k)=e^{\tfrac{2\pi i\big(k+\tfrac{1}{2}\big)l}{2M+1}}$ for $k,l=-M,...,M$. 
We recognize this matrix as the Vandermonde matrix. It follows from \cite{BS96} that 
\begin{align}\label{VandermondeA2}
\det \mathcal{F}_{A} & \nonumber = \tfrac{1}{\sqrt{2M+1}^{2M+1}}(z^{-M}_{A,-M}....z^{-M}_{A,M})^{2M+1}\left| \begin{array}{cccccc}
1&z^{1}_{A,-M}& z^{2}_{A,-M}&....&z^{2M}_{A,-M}\\
..&&&&..\\
..&&&&..\\
1&z^{1}_{A,M}& z^{2}_{A,M}&....&z^{2M}_{A,M} 
\end{array} \right|\\ 
&  = (-1)^{M}\frac{1}{\sqrt{2M+1}^{2M+1}}\prod_{-M\leq j<k\leq M}(w^{k+\tfrac{1}{2}}-w^{j+\tfrac{1}{2}}),
\end{align}
where $w^{k}=e^{\frac{2\pi ik}{2M+1}}$.
The right hand side of the equation in (\ref{VandermondeA2}) can be written
\begin{align}\label{VandermondeA22}
(-1)^{M}\tfrac{1}{\sqrt{2M+1}^{2M+1}}(-1)^{M}\prod_{-M\leq j<k\leq M}(w^{k}-w^{j}),
\end{align} where we used that
$$\prod_{-M\leq j<k\leq M}w^{\tfrac{1}{2}}=(-1)^{\binom {2M+1} {2}}=(-1)^{M}.$$
Since $\mathcal{F}_{P}$ clearly also is a Vandermonde matrix, we have 
\begin{align}\label{VandermondeFP}\det \mathcal{F}_{P}=\tfrac{1}{\sqrt{2M+1}^{2M+1}}\prod_{-M\leq j<k\leq M}(w^{k}-w^{j}).
\end{align} It follows from (\ref{FP}), (\ref{VandermondeA2}), (\ref{VandermondeA22}) and (\ref{VandermondeFP}) that
\begin{align*}
\det \mathcal{F}_{A}=\det \mathcal{F}_{P}=(-i)^{M}.
\end{align*}
Hence, \begin{align}\label{detFFA}
\det (\mathcal{F}_{A}\oplus \mathcal{F}_{A})=(-1)^{M}.
\end{align}
Define for $k=-M,...,M$,
$$R^{A}_{1,k}:=\frac{1}{\sqrt{2}}\left( \begin{array}{cccc}
a(z_{A}(k))&a(z_{A}(k)) \\
iz_{A}(k)a(z_{A}(k))^{-1}&-iz_{A}(k)a(z_{A}(k))^{-1}\\
\end{array} \right).
$$
Then the matrix of $R^{A}_{1}$ is given by
$$R^{A}_{1}:=\bigg(\bigoplus_{k=-M}^{M}R^{A}_{1,k}\bigg)^{-1},$$
where
 \begin{align}\label{detRA1}\det R^{A}_{1}=\bigg((-i)^{2M+1}\prod_{k=-M}^{M}z_{A}(k)\bigg)^{-1}=(-i)^{2M+1}.
\end{align}
Furthermore, we have
\begin{align}\label{detRA22}
\det R^{A}_{2}=\det R^{P}_{2}=(-1)^{M}\frac{1}{2^{2M+1}}i^{2M+1}.
\end{align}
Combining  (\ref{R3}), (\ref{detFFA}), (\ref{detRA1}) and (\ref{detRA22}) we obtain
$$\det R^{A}=\det(R_{3})\det (\mathcal{F}_{A}\oplus \mathcal{F}_{A})\det R^{A}_{1}\det R^{A}_{2}=1,$$ and the theorem is proved.
\end{proof}
Let $0_{A}$ and $0_{P}$ denote the unit vacuum vectors in $\Alt(W^{A}_{+})$ and $\Alt(W^{P}_{+})$ respectively. 
We want to choose a representation of the transfer matrix $V^{A}$ in the Fock representation $\Alt(W^{A}_{+})$ such that the vacuum vector $0_{A}$ is an eigenvector for $V^{A}$ associated with its largest eigenvalue. Similarly, we want to choose a representation of the transfer matrix $V^{P}$ in the Fock representation $\Alt(W^{P}_{+})$ such that the vacuum vector $0_{P}$ is an eigenvector for $V^{P}$ corresponding to its largest eigenvalue.
By Theorem \ref{U=1}, we can write $V$ as the map,
$$V^{A}\oplus V^{P}: \Alt_{\even}(W^{A}_{+})\oplus \Alt_{\even}(W^{P}_{+}) \to \Alt_{\even}(W^{A}_{+})\oplus \Alt_{\even}(W^{P}_{+}),$$
where
$$V^{A}\simeq V|_{\Alt_{\even}(W^{A}_{+})} \quad \mbox{and} \quad V^{P}\simeq V|_{\Alt_{\even}(W^{P}_{+})}.$$
Let $T^{A}_{+}$ denote the restriction of $T^{A}$ to $W^{A}_{+}$, and let $T^{P}_{+}$ denote the restriction of $T^{P}$ to $W^{P}_{+}$.
Define as in \cite{palmer1} the linear transformations, $\Gamma(T^{A}_{+})$ and $\Gamma(T^{P}_{+})$, acting on $\Alt( W^{A}_{+})$ and $\Alt( W^{P}_{+})$ respectively
as
\begin{align}\label{TA+}
\Gamma(T^{A}_{+})=1\oplus T^{A}_{+}\oplus (T^{A}_{+}\otimes T^{A}_{+})\oplus 
\cdot \cdot \cdot \oplus  (\underbrace{T^{A}_{+}\otimes \cdot \cdot \cdot \otimes  T^{A}_{+}}_{\text{2M+1}})
\end{align} and
\begin{align}\label{TP+}
\Gamma(T^{P}_{+})=1\oplus T^{P}_{+}\oplus (T^{P}_{+}\otimes T^{P}_{+})\oplus \cdot \cdot \cdot \oplus  (\underbrace{T^{P}_{+}\otimes \cdot \cdot \cdot \otimes  T^{P}_{+}}_{\text{2M+1}}).
\end{align}
It can be checked that $$T(\Gamma(T^{A}_{+}))=T^{A}\quad \mbox{and} \quad T(\Gamma(T^{P}_{+}))=T^{P}.$$
It follows that there exists two real numbers, $\lambda^{A}_{0}$ and $\lambda^{P}_{0}$, such that the representations of $V^{A}$ and $V^{P}$ in the Fock representations $\Alt(W^{A}_{+})$ and $\Alt(W^{P}_{+})$ respectively are given by
\begin{align}\label{TATP+2}
V^{A}=\lambda^{A}_{0}\Gamma(T^{A}_{+})|_{\Alt_{\even}(W^{A}_{+})}\quad \mbox{and}\quad V^{P}=\lambda^{P}_{0}\Gamma(T^{P}_{+})|_{\Alt_{\even}(W^{P}_{+})}.
\end{align}
Since the spectra of the induced rotations $T^{A}$ and $T^{P}$  do not contain $1$, both $T^{A}_{+}$ and $T^{P}_{+}$ are strict contractions. Then it follows from (\ref{TA+}) that the largest eigenvalue of $\Gamma(T^{A}_{+})$ is $1$ with the unique eigenvector given by the vacuum vector $0_{A}$ in $\Alt(W^{A}_{+})$. Hence, $\lambda^{A}_{0}$ is the largest eigenvalue of $V^{A}$ with corresponding eigenvector $0_{A}$. A similar argument shows that $\lambda^{P}_{0}$ is the largest eigenvalue of $V^{P}$ with corresponding eigenvector $0_{P}$.
We now determine the eigenvalues for the transfer matrices $V^{A}$ and $V^{P}$.
We start by proving the following lemma.
\begin{lemma}\label{comb3} We have
\begin{align}\label{detT+AP}
\det(\Gamma(T^{P}_{+}))=(\det T^{P}_{+})^{2^{2M}}\quad \mbox{and}\quad \det(\Gamma(T^{A}_{+}))=(\det T^{A}_{+})^{2^{2M}}\quad  \mbox{for}\quad M\geq 0,
\end{align}
where $\Gamma(T^{A}_{+})$ and $\Gamma(T^{P}_{+})$ are defined in (\ref{TA+}) and (\ref{TP+}). 
\end{lemma}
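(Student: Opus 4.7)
The plan is to reduce this to the standard linear-algebra identity $\det(\Lambda^{k}T)=(\det T)^{\binom{n-1}{k-1}}$ for an operator $T$ on an $n$-dimensional space, and then to sum the exponents using $\sum_{k=0}^{n-1}\binom{n-1}{k}=2^{n-1}$. Both claimed equalities have the same structure, so I will set up the argument for $T^{P}_{+}$ and note that the $T^{A}_{+}$ case is identical.

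First, I would observe that $\Gamma(T^{P}_{+})$ is block diagonal with respect to the grading
\[
\Alt(W^{P}_{+})=\bigoplus_{k=0}^{n}\Alt^{k}(W^{P}_{+}),\qquad n:=\dim W^{P}_{+}=2M+1,
\]
and that on the $k$-th block the operator $T^{P}_{+}\otimes\cdots\otimes T^{P}_{+}$ (with $k$ factors) preserves antisymmetry and coincides with the usual $k$-th exterior power $\Lambda^{k}T^{P}_{+}$. Consequently
\[
\det\Gamma(T^{P}_{+})=\prod_{k=0}^{n}\det\bigl(\Lambda^{k}T^{P}_{+}\bigr).
\]

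Next, I would compute $\det(\Lambda^{k}T^{P}_{+})$. Letting $\lambda_{1},\dots,\lambda_{n}$ be the eigenvalues of $T^{P}_{+}$ (counted with multiplicity, via triangularization over $\mathbb{C}$), the eigenvalues of $\Lambda^{k}T^{P}_{+}$ are the products $\lambda_{i_{1}}\cdots\lambda_{i_{k}}$ with $1\le i_{1}<\cdots<i_{k}\le n$. In the product of all these eigenvalues, each fixed $\lambda_{j}$ appears exactly $\binom{n-1}{k-1}$ times (once for each $k$-subset containing $j$), so
\[
\det\bigl(\Lambda^{k}T^{P}_{+}\bigr)=\Bigl(\prod_{j=1}^{n}\lambda_{j}\Bigr)^{\binom{n-1}{k-1}}=(\det T^{P}_{+})^{\binom{n-1}{k-1}},
\]
with the convention $\binom{n-1}{-1}=0$ accounting correctly for the trivial $k=0$ factor.

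Finally, I would sum the exponents:
\[
\sum_{k=0}^{n}\binom{n-1}{k-1}=\sum_{j=0}^{n-1}\binom{n-1}{j}=2^{n-1}=2^{2M},
\]
which yields $\det\Gamma(T^{P}_{+})=(\det T^{P}_{+})^{2^{2M}}$, and likewise for $T^{A}_{+}$. There is no real obstacle in this lemma; it is a book-keeping identity and the only point requiring care is the identification of the restriction of $T^{P}_{+}\otimes\cdots\otimes T^{P}_{+}$ to $\Alt^{k}(W^{P}_{+})$ with $\Lambda^{k}T^{P}_{+}$, which is a consequence of the functoriality of the exterior power.
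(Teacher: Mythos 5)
Your proof is correct and follows essentially the same route as the paper: both compute the determinant blockwise on $\Alt^{k}$, observe that each eigenvalue of $T^{P}_{+}$ (resp.\ $T^{A}_{+}$) occurs with multiplicity $\binom{2M}{k-1}$ in the $k$-th block so that its determinant is $(\det T^{P}_{+})^{\binom{2M}{k-1}}$, and then sum the binomial coefficients to get the exponent $2^{2M}$. The only cosmetic difference is that you phrase the eigenvalue count for a general triangularizable operator, while the paper uses the explicit eigenvalues $e^{-\gamma(z_{P}(l))}$.
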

\begin{proof}
We prove the first equation.
Let us consider the tensor product $(T^{P}_{+})^{\otimes k}$ which acts on $\Alt^{k}(W^{P}_{+})$. We have that $\dim (\Alt^{k}(W^{P}_{+}))=\binom{2M+1}{k}$ for $k=1,...,2M+1$. The eigenvalues of $(T^{P}_{+})^{\otimes k}$ is a product of  $e^{-\gamma(z_{P}(l_{1}))-\gamma(z_{P}(l_{2}))\cdot \cdot \cdot -\gamma(z_{P}(l_{k}))}$ with a certain multiplicity, where $-M\leq l_{1}<l_{2}<\cdot \cdot \cdot <l_{k}\leq M$. Fix a choice $e^{-\gamma(z_{P}(n))}$ in this product. We want to determine the multiplicity of $e^{-\gamma(z_{P}(n))}$. That amounts to figure out the number of ways we can combine $e^{-\gamma(z_{P}(n))}$ with the remaining factors $e^{-\gamma(z_{P}(l))}$ for $l=-M,...,M$ and $l\neq n$ in the product above. This is the same as the number of ways we can pick out $(k-1)$ elements of $e^{-\gamma(z_{P}(l))}$ from $2M$ elements, which is $\binom{2M}{k-1}$. We obtain
\begin{align*}
\det(\underbrace{T^{P}_{+}\otimes \cdot \cdot \cdot \otimes  T^{P}_{+}}_{\text{k}})=\bigg(e^{-\gamma(z_{P}(-M))}e^{-\gamma(z_{P}(-M+1))}\cdot\cdot\cdot e^{-\gamma(z_{P}(M))}\bigg)^{\binom{2M}{k-1}}=(\det(T^{P}_{+}))^{\binom{2M}{k-1}}.
\end{align*}
It follows that
\begin{align*}
\det(\Gamma(T^{P}_{+}))
& =[\det(T^{P}_{+})]^{\sum_{j=0}^{2M}\binom{2M}{j}}=(\det(T^{P}_{+}))^{2^{2M}},
\end{align*}
and hence the lemma is proved.
\end{proof}
Recall that $\Sigma_{A}$ and $\Sigma_{P}$ denote the anti-periodic and periodic spectrum respectively.
\begin{theorem}\label{spectrum}
The eigenvalues of the transfer matrices $V^{A}$ and $V^{P}$ are given by
\begin{align}\label{transfereigA}
\exp\big[\tfrac{1}{2}(\pm \gamma(z_{A}(-M))\pm ....\pm \gamma(z_{A}(M)))\big]
\end{align}
and
\begin{align}\label{transfereigP}
\exp\big[\tfrac{1}{2}(\pm \gamma(z_{P}(-M))\pm ....\pm \gamma(z_{P}(M)))\big]
\end{align}
respectively, where $\gamma$ is the positive root of
$$\cosh\gamma(z)=c^{*}_{2}c_{1}-s^{*}_{2}s_{1}\tfrac{z+z^{-1}}{2}\quad \mbox{for} \quad z \in \Sigma_{A} \quad \mbox{or}\quad z \in \Sigma_{P}.$$ 
There is an even number of minus signs in both spectra for $T<T_{C}$.
The largest eigenvalues of the transfer matrices $V^{A}$ and $V^{P}$ are given by
$$\lambda^{A}_{0}=\prod_{z\in \Sigma_{A}}e^{\tfrac{\gamma(z)}{2}} \quad \mbox{and} \quad \lambda^{P}_{0}=\prod_{z\in \Sigma_{P}}e^{\tfrac{\gamma(z)}{2}}$$ respectively. 
\end{theorem}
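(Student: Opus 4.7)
The plan is to combine Theorem~\ref{U=1} with the second-quantization identity (\ref{TATP+2}) to reduce the problem to the diagonal action of $\Gamma(T^A_+)$ and $\Gamma(T^P_+)$. From (\ref{eigenP1})--(\ref{eigenP2}), the vectors $\{e^P_{+,k}\}_{k=-M}^{M}$ span $W^P_+$ with $T^P_+ e^P_{+,k}=e^{-\gamma(z_P(k))}e^P_{+,k}$, and analogously for $T^A_+$. Consequently $\Gamma(T^P_+)$ is diagonal on $\Alt(W^P_+)$ in the wedge-product basis indexed by $S\subseteq\{-M,\ldots,M\}$, with eigenvalue $\prod_{k\in S}e^{-\gamma(z_P(k))}$. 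Theorem~\ref{U=1} identifies the acting space of $V^P$ with $\Alt_{\even}(W^P_+)$, so that the restriction to $|S|$ even arises intrinsically. Combined with (\ref{TATP+2}) this gives the spectrum of $V^P$ as
\begin{equation*}
\Bigl\{\lambda^P_0\prod_{k\in S}e^{-\gamma(z_P(k))}\;:\;S\subseteq\{-M,\ldots,M\},\;|S|\text{ even}\Bigr\},
\end{equation*}
and similarly for $V^A$ with $z_P$ replaced by $z_A$. The parity constraint on $|S|$ is exactly the ``even number of minus signs'' clause in (\ref{transfereigA}) and (\ref{transfereigP}).

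I would then rewrite each such eigenvalue in the symmetric form claimed. Setting $\epsilon_k=+1$ for $k\notin S$ and $\epsilon_k=-1$ for $k\in S$, one has $-\sum_{k\in S}\gamma(z_P(k))=\tfrac{1}{2}\sum_k\epsilon_k\gamma(z_P(k))-\tfrac{1}{2}\sum_k\gamma(z_P(k))$, so that each eigenvalue of $V^P$ factors as the exponential in (\ref{transfereigP}) times the constant $\lambda^P_0\exp\bigl(-\tfrac{1}{2}\sum_k\gamma(z_P(k))\bigr)$. The theorem is then reduced to the identity $\lambda^P_0=\prod_{z\in\Sigma_P}e^{\gamma(z)/2}$, and analogously $\lambda^A_0=\prod_{z\in\Sigma_A}e^{\gamma(z)/2}$.

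To pin down $\lambda^P_0$, I would compute $\det V^P$ in two ways. The Fock-side computation is a straightforward count in the spirit of Lemma~\ref{comb3}: each $k$ lies in exactly $2^{2M-1}$ even-cardinality subsets of a $(2M+1)$-element set, giving
\begin{equation*}
\det\!\bigl(\Gamma(T^P_+)|_{\Alt_{\even}(W^P_+)}\bigr)=(\det T^P_+)^{2^{2M-1}},\qquad\det V^P=(\lambda^P_0)^{2^{2M}}(\det T^P_+)^{2^{2M-1}}.
\end{equation*}
On the Clifford-algebra side, $V^P$ restricted to the invariant subspace $(U=-1)$ factors into the elementary exponentials $e^{i\mathcal{K}^*_2 p_j q_j}$ and $e^{-i\mathcal{K}_1 p_{j+1} q_j}$ of (\ref{VAVP3}), each a Gaussian in the spin representation with an explicit determinant; matching the two expressions forces the claimed value of $\lambda^P_0$. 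Since $T^P_+$ is a strict contraction, the largest eigenvalue of $\Gamma(T^P_+)|_{\Alt_{\even}(W^P_+)}$ is $1$, attained at $S=\emptyset$, so the largest eigenvalue of $V^P$ equals $\lambda^P_0=\prod_{z\in\Sigma_P}e^{\gamma(z)/2}$; this is the all-plus-signs term in (\ref{transfereigP}).

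The main obstacle I anticipate is the direct evaluation of $\det V^P$ on the Clifford-algebra side, requiring careful tracking of the projection $\tfrac{1}{2}(I-U)$, the sign conventions in the periodic versus anti-periodic extension underlying $T(V_1)$, and the determinants of the elementary factors on the $(U=-1)$ subspace. A cleaner alternative is to compute $\langle 0_P,V^P\,0_P\rangle$ directly using the Gaussian (``normal-ordered'') form of $V^P$ in the Fock representation, which yields $\lambda^P_0$ without a separate determinant identity. The argument for $V^A$ is identical with $\Sigma_P$ replaced by $\Sigma_A$.
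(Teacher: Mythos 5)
Your overall strategy is the same as the paper's: use Theorem \ref{U=1} and (\ref{TATP+2}) to reduce to the diagonal action of $\Gamma(T^{A}_{+})$ and $\Gamma(T^{P}_{+})$ on wedge products, read off the spectrum as $\lambda_{0}\prod_{k\in S}e^{-\gamma}$ with $|S|$ even (this is exactly the ``even number of minus signs'' clause), note that the largest eigenvalue is attained at $S=\emptyset$ because $T_{+}$ is a strict contraction, and pin down the normalization $\lambda_{0}$ by a determinant comparison with a subset-counting argument in the spirit of Lemma \ref{comb3}. All of that structural part of your proposal is correct, including the count that a fixed $k$ lies in $2^{2M-1}$ even subsets of a $(2M+1)$-element set.

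The genuine gap is the step that actually determines $\lambda^{P}_{0}$ and $\lambda^{A}_{0}$: you reduce it to evaluating $\det V^{P}$ on the $(U=-1)$ subspace (or alternatively $\langle 0_{P},V^{P}0_{P}\rangle$ in normal-ordered form), and then explicitly defer that evaluation as ``the main obstacle,'' asserting only that the elementary exponential factors have ``an explicit determinant'' whose matching forces the claimed value. That evaluation is precisely the quantitative content of the theorem, and it is not supplied. The paper handles it with a one-line observation by working with the unprojected product of exponentials, i.e.\ with $\lambda_{0}\Gamma(T_{+})$ on all of $\Alt(W_{+})$ (dimension $2^{2M+1}$): since $\tr(pq)=\tr(qp)=-\tr(pq)=0$ by the Clifford relations, $\det e^{\sum pq}=e^{\sum\tr(pq)}=1$, so $1=(\lambda_{0})^{2^{2M+1}}\det\Gamma(T_{+})=(\lambda_{0})^{2^{2M+1}}(\det T_{+})^{2^{2M}}$ by Lemma \ref{comb3}, whence $\lambda_{0}=(\det T_{+})^{-1/2}=\prod e^{\gamma/2}$. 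Your restricted-space variant can be completed in the same spirit, but it needs one more fact that you do not state: the restriction of each factor $e^{cpq}$ to $(U=\mp1)$ also has determinant $1$, because besides $\tr(pq)=0$ one has $\tr(pq\,U)=0$ (the element $pq\,U$ is, up to a scalar, a nonscalar monomial in the Clifford generators, hence traceless), so $\det\big(e^{cpq}|_{(U=\mp1)}\big)=\exp\big(c\,\tr(pq\,\tfrac{1}{2}(I\mp U))\big)=1$; this yields $1=(\lambda^{P}_{0})^{2^{2M}}(\det T^{P}_{+})^{2^{2M-1}}$ and the same conclusion. As written, however, the proposal stops short of any such computation, so the identification $\lambda^{P}_{0}=\prod_{z\in\Sigma_{P}}e^{\gamma(z)/2}$ and $\lambda^{A}_{0}=\prod_{z\in\Sigma_{A}}e^{\gamma(z)/2}$ is not established.
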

\begin{proof}
Recall that the transfer matrices $V^{A}$ and $V^{P}$ given in (\ref{VAVP2}) and (\ref{VAVP3}) have factors of the form $\prod e^{pq}$.
We have
$$\det e^{\sum pq}=e^{\sum \tr(pq)}.$$ Using the Clifford relations and properties of trace, we have
$$\tr(pq)=\tr(qp)=-\tr(pq)$$ which implies that $\tr(pq)=0$. Hence $\det e^{\sum pq}=1$ and it follows that 
 $\det V^{A}=\det V^{P}=1$.
 From (\ref{TATP+2}) we have the identity
 \begin{align}\label{TATP+3}
 \det V^{A}=\det[ \lambda^{A}_{0} \,\Gamma(T^{A}_{+})].
 \end{align}
 Since $\Alt(W^{A}_{+})$ has dimension $2^{2M+1}$, and $\Gamma(T^{A}_{+})$ acts on $\Alt(W^{A}_{+})$, it follows from (\ref{TATP+3}) that
 \begin{align}\label{TATP+4}
 1={(\lambda_{0}^{A})}^{{2}^{2M+1}}\det(\Gamma(T^{A}_{+})).
 \end{align}
 From Lemma \ref{comb3}, we have
 $$\det \Gamma(T^{A}_{+})=(\det T^{A}_{+})^{2^{2M}},$$ and combining this with (\ref{TATP+4}), we obtain
 $$\lambda^{A}_{0}=\frac{1}{\sqrt{\det(T^{A}_{+})}}.$$
 Since the eigenvalues of $T^{A}_{+}$ are given by the set $\{e^{-\gamma(z)}\}_{z\in \Sigma_{A}}$, we obtain
 $$\lambda^{A}_{0}=\prod_{z\in \Sigma_{A}}e^{\tfrac{\gamma(z)}{2}}.$$
It follows from (\ref{TATP+2}) that an eigenvector for $V^{A}$ is on the form, $e^{A}_{+,{j}_{1}}\wedge e^{A}_{+,{j}_{2}}\wedge...\wedge e^{A}_{+,{j}_{k}}$, for $-M\leq j_{1} <j_{2}<\cdot \cdot \cdot <j_{k}\leq M$, where $k$ is even. The corresponding eigenvalue is $\displaystyle{\exp\bigg(\frac{1}{2}\sum_{z\in \Sigma_{A}}\gamma(z)-\sum_{i=1}^{k}\gamma(z_{A}({j_{i}}))\bigg)}$.  
Hence (\ref{transfereigA}) follows. 
The calculation of the eigenvalues of $V^{P}$ is done in the exact same way.
\end{proof}
\section{Spin Operator}\label{Inducedrotationforthespinoperator}
In this section, we compute the induced rotation associated with the spin operator. We determine the matrix of this rotation in the orthonormal basis of eigenvectors for $T(V)$. 
The spin operator $\sigma_{j}$ at site $(j,0)$ acts on the Clifford generators as
\begin{align*}
\sigma_{j}p_{k}\sigma_{j}^{-1}&=-\sgn(k-j-1)p_{k},\\
\sigma_{j} q_{k}\sigma_{j}^{-1}&=-\sgn(k-j)q_{k},
\end{align*}
where
\[ \sgn(x) = \left\{ \begin{array}{ll}
         +1 & \mbox{if $x \geq 0$}\\
           -1 & \mbox{if $x < 0$}.\end{array} \right. \]
Since the spin operator $\sigma_{j}$ anti-commutes with $U$, it maps the $-1$ eigenspace for $U$ into the $+1$ eigenspace for $U.$ %We therefore restrict the spin operator as a map from the $z_{P}$ representation to the $z_{A}$ representation (or the other way %around).
Thus, we have
$$\sigma_{j}=\left( \begin{array}{cc}
0&\sigma^{PA}_{j}  \\
\sigma^{AP}_{j} &0  \\
 \end{array} \right)$$ acting on $(U=-1)\oplus (U=+1)$. Since $(U=-1)\oplus (U=+1)$ is also unitarily equivalent to $\Alt(W^{P}_{\pm})$ and $\Alt(W^{A}_{\pm})$, we can consider $\sigma_{j}$ as the map, 
\begin{align}\label{sigma1}\sigma_{j}: \Alt(W^{P}_{+})\to \Alt(W^{A}_{+})
\end{align} or
\begin{align}\label{sigma2}\sigma_{j}: \Alt(W^{A}_{+})\to \Alt(W^{P}_{+}).
\end{align}

It is convenient to let the induced rotation for the spin operator be the identity at the endpoint $j=M$.
In order to do this, we translate the $q_{k}$ basis elements by $1$ lattice unit. This will change the operator at site $j$ to be multiplication by $-\sgn(k-j-1)$ for both $q_{k}$ and $p_{k}$. In the Fourier representation, the corresponding action on the induced rotation for the spin operator is multiplication by $z_{A}$ on the left and by $z^{-1}_{P}$ on the right.
The spin operator is then completely reduced to a ``difference'' of periodic and anti-periodic translations.
We therefore formulate the matrix representation of the induced rotation associated with $\sigma_{M}$.
Let $s:=T(\sigma_{M})$ denote multiplication by $-\sgn(k-M-1)$.
In the Fourier representation we have the following matrix representation for $s$,
\begin{align}\label{spinm1}
sf(z_{A})=\frac{1}{2M+1}\sum_{z\in\Sigma_{P}}\frac{2(z_{A}z)^{-M}}{z-z_{A}}f(z).
\end{align}
The induced rotation for the transfer matrix is then given by
$$T(\tilde{V}):=\left( \begin{array}{cc}
z&0  \\
0 &1  \\
 \end{array} \right)T(V)\left( \begin{array}{cc}
z^{-1}&0  \\
0 &1  \\
 \end{array} \right).$$
The conjugation of $T(V)$ was not introduced in Section \ref{Ising}, since it complicates the representation of the transfer matrix in the alternating tensor algebra.
The eigenvectors of $T(\tilde{V})$ corresponding to the eigenvalues $e^{-\gamma(z_{P})}$ and $e^{\gamma(z_{P})}$ are given by
\begin{align}\label{eigenP11}\tilde{e}^{P}_{+,k}(z):=\frac{1}{\sqrt{2}}\left( \begin{array}{cc}
a(z)  \\
ia(z)^{-1}   \\
 \end{array} \right)\delta_{z_{P}(k)}(z)
\end{align} and
\begin{align}\label{eigenP22}\tilde{e}^{P}_{-,k}(z):=\frac{1}{\sqrt{2}}\left( \begin{array}{cc}
a(z)  \\
-ia(z)^{-1}  \\
 \end{array} \right)\delta_{z_{P}(-k)}(z)
 \end{align} respectively for $z\in \Sigma_{P}$. The eigenvectors, $\tilde{e}^{A}_{\pm,k}(z)$, corresponding to the eigenvalues, $e^{\mp\gamma(z_{A})}$, are defined in the exact same way for $z\in \Sigma_{A}$.
We now consider the isotropic splittings,
$$W=\tilde{W}^{P}_{+}\oplus \tilde{W}^{P}_{-}\quad \mbox{and}\quad W= \tilde{W}^{A}_{+}\oplus \tilde{W}^{A}_{-},$$ where $\{\tilde{e}^{P}_{\pm,k}\}_{k=-M}^{M}$ is the choice of basis for $\tilde{W}^{P}_{\pm}$, and $\{\tilde{e}^{A}_{\pm,k}\}_{k=-M}^{M}$ is the choice of basis for $\tilde{W}^{A}_{\pm}$.
The following theorem follows.
\begin{theorem}\label{smatrix}
Suppose that $$\left( \begin{array}{cc}
A&B\\
C&D\\
\end{array} \right)$$ is the matrix of the identity map on $W$ going from the $\tilde{W}^{P}_{+}\oplus \tilde{W}^{P}_{-}$ splitting to the $\tilde{W}^{A}_{+}\oplus \tilde{W}^{A}_{-}$ splitting. Then the matrix elements of $A,B,C$ and $D$ are given by, 
\begin{align}\label{Am}
A_{lk} = D_{lk}=\frac{1}{(2M+1)}\frac{(z_{P}(k)z_{A}(l))^{-M}}{z_{P}(k)-z_{A}(l)}[a(z_{A}(l))a(z_{P}(k))^{-1}+a(z_{A}(l))^{-1}a(z_{P}(k))],
\end{align}
\begin{align}\label{Bm}
B_{lk}=\frac{1}{(2M+1)}\frac{(z_{P}(k)z_{A}(l))^{-M}}{1-z_{P}(k)z_{A}(l)}[a^{-1}(z_{A}(l))a^{-1}(z_{P}(k))-a(z_{A}(l))a(z_{P}(k))],
\end{align}
\begin{align}\label{Cm}
C_{lk}=-B_{lk}
\end{align}
for $k,l=-M,....,M$
with $$a(z)=\sqrt{\frac{\mathcal{A}_{1}(z)\mathcal{A}_{2}(z)}{\mathcal{A}_{1}(z^{-1})\mathcal{A}_{2}(z^{-1})}}\quad \mbox{and} \quad \mathcal{A}_{j}=\sqrt{\alpha_{j}-z},\quad\quad j=1,2$$  for $z \in \Sigma_{A}$ or $z\in \Sigma_{P}$.
\end{theorem}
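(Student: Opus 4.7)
The four blocks $A,B,C,D$ are the matrix elements of $s$ with respect to the basis $\{\tilde e^{P}_{+,k},\tilde e^{P}_{-,k}\}$ of the domain and $\{\tilde e^{A}_{+,l},\tilde e^{A}_{-,l}\}$ of the codomain. Using $a(z^{-1})=a(z)^{-1}$ one checks directly from \eqref{bilinearf} that $(\tilde e^{A}_{+,k},\tilde e^{A}_{-,l})=\delta_{kl}$, so the two families are biorthogonal and the blocks are read off as
\begin{align*}
A_{lk}=(s\tilde e^{P}_{+,k},\tilde e^{A}_{-,l}),\quad B_{lk}=(s\tilde e^{P}_{-,k},\tilde e^{A}_{-,l}),\quad C_{lk}=(s\tilde e^{P}_{+,k},\tilde e^{A}_{+,l}),\quad D_{lk}=(s\tilde e^{P}_{-,k},\tilde e^{A}_{+,l}).
\end{align*}
The plan is then to apply the kernel \eqref{spinm1} to the delta-supported vectors $\tilde e^{P}_{\pm,k}$ and to evaluate the resulting pairings against $\tilde e^{A}_{\pm,l}$.

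\textbf{Computation of $A_{lk}$.} Since $s$ in \eqref{spinm1} acts componentwise on the two slots of $W$ and $\tilde e^{P}_{+,k}$ is supported at the single point $z_{P}(k)$, the value of $s\tilde e^{P}_{+,k}$ at a generic $z_{A}\in\Sigma_{A}$ is the scalar $\tfrac{1}{2M+1}\tfrac{2(z_{A}z_{P}(k))^{-M}}{z_{P}(k)-z_{A}}$ times the constant column $\tfrac{1}{\sqrt{2}}\bigl(a(z_{P}(k)),\,ia(z_{P}(k))^{-1}\bigr)^{T}$. Because $\tilde e^{A}_{-,l}$ is supported at $z_{A}(-l)=z_{A}(l)^{-1}$, the bilinear form \eqref{bilinearf} evaluates the second factor at $z_{A}^{-1}=z_{A}(l)^{-1}$, and hence the sum collapses to its single nonzero term at $z_{A}=z_{A}(l)$. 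The $x$- and $y$-contributions combine as $a(z_{A}(l))a(z_{P}(k))^{-1}+a(z_{A}(l))^{-1}a(z_{P}(k))$ (the two $i$'s cancel because the lower entry of $\tilde e^{A}_{-,l}(z_{A}(-l))$ carries $-ia(z_{A}(l))$), and the prefactor is exactly the one in~\eqref{Am}.

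\textbf{The remaining blocks and the main obstacle.} The computations for $B_{lk}$, $C_{lk}$, and $D_{lk}$ are identical in structure, but the support conventions force the appearance of the extra phases $z_{P}(k)^{M+1}$ and/or $z_{A}(l)^{M+1}$ that arise from the $\delta_{z_{P}(-k)}$ and $\delta_{z_{A}(-l)}$ supports; the essential step is to convert these to the symmetric form $(z_{P}(k)z_{A}(l))^{-M}$ displayed in \eqref{Am}--\eqref{Bm} by using $z_{P}(k)^{2M+1}=1$ and $z_{A}(l)^{2M+1}=-1$, i.e.\ $z_{P}(k)^{M+1}=z_{P}(k)^{-M}$ and $z_{A}(l)^{M+1}=-z_{A}(l)^{-M}$. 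For $D_{lk}$ the two minus signs that appear in the algebra cancel and one recovers $D_{lk}=A_{lk}$; for $B_{lk}$ one arrives at the kernel $(z_{P}(k)z_{A}(l))^{-M}/(1-z_{P}(k)z_{A}(l))$ together with the antisymmetric combination $a(z_{A}(l))^{-1}a(z_{P}(k))^{-1}-a(z_{A}(l))a(z_{P}(k))$. Finally, $C_{lk}$ differs from $B_{lk}$ only in that one pairs $s\tilde e^{P}_{+,k}$ (rather than $s\tilde e^{P}_{-,k}$) against $\tilde e^{A}_{+,l}$; a parallel calculation gives the same scalar with the opposite overall sign, yielding $C_{lk}=-B_{lk}$. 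No step is conceptually deep, and the main obstacle is purely bookkeeping: tracking the way the conjugation $z\mapsto z^{-1}$ in the bilinear form interacts with each $a$, each power of $z_{P}$, $z_{A}$, and each denominator, and applying the anti-periodic relation $z_{A}^{2M+1}=-1$ at precisely the right place to get all signs to line up.
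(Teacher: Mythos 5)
Your proposal is correct in substance and follows exactly the route the paper intends: the paper gives no written proof of Theorem \ref{smatrix} beyond the setup (it simply asserts that the theorem ``follows''), and the intended argument is your direct computation --- apply the kernel \eqref{spinm1} to the delta-supported eigenvectors \eqref{eigenP11}--\eqref{eigenP22} and read off the blocks through the bilinear form \eqref{bilinearf}. The bookkeeping you describe is indeed what a full check produces: $z_{P}(k)^{M+1}=z_{P}(k)^{-M}$, $z_{A}(l)^{M+1}=-z_{A}(l)^{-M}$, two cancelling minus signs for $D$ giving $D_{lk}=A_{lk}$, and one residual sign for $C$ giving $C_{lk}=-B_{lk}$, reproducing \eqref{Am}--\eqref{Cm}.

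One correction to your setup, though: for the anti-periodic points $z_{A}(-l)\neq z_{A}(l)^{-1}$; rather $z_{A}(l)^{-1}=z_{A}(-l-1)$, since the half-integer shift in $\theta^{A}_{k}$ breaks the symmetry $k\mapsto -k$ that does hold for $\Sigma_{P}$. Consequently, if $\tilde{e}^{A}_{-,l}$ were literally supported at $z_{A}(-l)$, the biorthogonality $(\tilde{e}^{A}_{+,k},\tilde{e}^{A}_{-,l})=\delta_{kl}$ that you invoke at the outset would fail (the pairing becomes $\delta_{l,k+1}$ modulo $2M+1$), and the rows of the blocks would be mislabeled. What your computation actually uses --- collapsing the sum at $z_{A}=z_{A}(l)$ and taking the lower entry of $\tilde{e}^{A}_{-,l}$ at its support to be $-ia(z_{A}(l))$ --- is the convention that $\tilde{e}^{A}_{-,l}$ is supported at $z_{A}(l)^{-1}$, which is also the convention forced by the row indexing in Theorem \ref{smatrix}; with that convention stated explicitly, every step of your argument goes through and the final formulas are unchanged. (The paper is equally loose here: defining the anti-periodic eigenvectors ``in the exact same way'' as the periodic ones has the same defect, so this is a fix the exposition needs rather than a flaw specific to your proof.)
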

Here we have for example $D=Q^{A}_{-}|_{\tilde{W}^{P}_{-}}$ which maps $\tilde{W}^{P}_{-}$ into $\tilde{W}^{A}_{-}$.
Recall that the representations $\{q_{k}, p_{k}\}$ are generators of an irreducible $*$-representation of the Clifford algebra $\Cliff(W)$ on $\mathbb{C}^{2(2M+1)}$ \cite{palmer1}.
Let $F$ denote the Fock representation associated with these Clifford relations. Since
$\sigma_{M}x\sigma^{-1}_{M}=sx$ for $x\in W$, where $\sigma_{M}:\mathbb{C}^{2(2M+1)}\to \mathbb{C}^{2(2M+1)}$, it follows that
\begin{align}\label{Fock11}
\sigma_{M}F(x)=F(s(x))\sigma_{M}.
\end{align}
The Fock representation defined in (\ref{Fock7}) is an irreducible $*$-representation of the Clifford algebra. All irreducible $*$ representations of the Clifford algebra over $W$ are unitarily equivalent \cite{BW35}. Thus, it follows that there exists a unitary map, $$U_{P}:\mathbb{C}^{2(2M+1)}\to \Alt(\tilde{W}^{P}_{+})$$ such that
\begin{align}\label{Fock22}
F^{P}(x)=U_{P}F(x)U^{-1}_{P}
\end{align} and a unitary map, $$U_{A}:\mathbb{C}^{2(2M+1)}\to \Alt(\tilde{W}^{A}_{+})$$ such that
\begin{align}\label{Fock33}
F^{A}(s(x))=U_{A}F(s(x))U^{-1}_{A}.
\end{align}
              It follows from (\ref{Fock11}), (\ref{Fock22}) and (\ref{Fock33}) that
              $$U_{A}\sigma_{M}U^{-1}_{P}F^{P}(x)=F^{A}(s(x))U_{A}\sigma_{M}U^{-1}_{P}.$$
              The map,
              $$\sigma:=U_{A}\sigma_{M}U^{-1}_{P}:\Alt(\tilde{W}^{P}_{+})\to \Alt(\tilde{W}^{A}_{+}),$$ is an intertwining map which is unique up to a constant by Schur's lemma.

Thus, we have the following commutative diagram,
$$\xymatrix{  
 &  \Alt (\tilde{W}^{P}_{+} )\ar[r]^{\sigma} & \Alt (\tilde{W}^{A}_{+}) \\
              & \Alt ( \tilde{W}^{P}_{+})  \ar[r]^{\sigma} \ar[u]_{F^{P}} & \Alt (\tilde{W}^{A}_{+})  
\ar[u]_{F^{A}\circ s},\\
              }$$
             where $\sigma$ satisfies the intertwining relation,
              \begin{align}\label{Fockid}
\sigma F^{P}(x)=F^{A}(s(x))\sigma.
\end{align}
We will use this relation in Section \ref{Spinmatrixelements} when we find an expression for the spin matrix elements.
Now we can regard $\sigma^{AP}_{j}$ as the restriction of $\sigma_{j}$ to $\Alt_{\even}(\tilde{W}^{P}_{+})$. In a similar fashion, we can restrict $\sigma_{j}$ to $\Alt_{\even}(\tilde{W}^{A}_{+})$ and recover $\sigma^{PA}_{j}$.
This restriction is central since we will consider the spin operator in terms of an eigenvector basis for the transfer matrix $V$ which is simple only in $\Alt_{\even}(\tilde{W}^{A}_{+})$ and $\Alt_{\even}(\tilde{W}^{P}_{+})$.
We are interested in understanding the $N\to \infty$ limit of the two point correlation for the spin operator. The eigenvector associated with the largest eigenvalue for the transfer matrix will dominate in this limit under the action of the spin operator $\sigma$. In order to compute this limit, we will need to know the spin matrix elements for $\sigma$. As we will see in Section \ref{Spinmatrixelements}, the spin matrix elements can be written in terms of the inverse of $D$ in (\ref{Am}).
\section{Two-point Correlation Function}\label{Twopoint}
For $n>m$, we have \cite{kaufman}
\begin{align}\label{paircorr}
\langle \sigma_{mi}\sigma_{nj}\rangle_{\Lambda}=\frac{\tr(\sigma_{i}V^{n-m}\sigma_{j}V^{(2N+1)-n+m})}{\tr(V^{2N+1})}.
\end{align}
Recall that
$$V=V^{A}\oplus V^{P},$$ where $$V^{A}=\lambda^{A}_{0}\Gamma(T^{A}_{+})|_{\Alt_{\even}(W^{A}_{+})},\quad V^{P}=\lambda^{P}_{0}\Gamma(T^{P}_{+})|_{\Alt_{\even}(W^{P}_{+})},$$ and where $\Gamma(T^{A}_{+})$ and $\Gamma(T^{P}_{+})$ are defined in (\ref{TA+}) and (\ref{TP+}). Using the equation above, the identity $\Gamma(T^{A}_{+})0_{A}=0_{A},$ and the fact that only the terms which involve the vacuum vector $0_{A}$ survive in the semi-infinite volume limit, $N\to\infty$, we obtain
\begin{align*}
\lim_{N\to \infty}\langle \sigma_{mi}\sigma_{nj}\rangle_{\Lambda}& \nonumber =\langle 0_{A}, \sigma_{i}\big(\tfrac{V}{\lambda^{A}_{0}}\big)^{n-m}\sigma_{j}0_{A} \rangle\\
&= e^{\tfrac{1}{2}(n-m)\sum_{j=-M}^{M}\big[\gamma(z_{P}(j))-\gamma(z_{A}(j))\big]}\times\\
&\nonumber \times\sum_{K=1}^{M}\sumtwo{-M\leq k_{1}<\cdot\cdot\cdot}{\cdot\cdot\cdot<k_{2K}\leq M}\langle 0_{A},\sigma_{i}e^{P}_{+,{k}_{1}}\wedge\cdot\cdot\cdot \wedge e^{P}_{+,{k}_{2K}}\rangle \times\\
& \times e^{-(n-m)\sum_{j=1}^{2K}\gamma(z_{P}(k_{j}))}\langle e^{P}_{+,{k}_{1}}\wedge\cdot\cdot\cdot \wedge e^{P}_{+,{k}_{2K}}\sigma_{j},0_{A}\rangle.
\end{align*}
On the torus, where both $M$ and $N$ are fixed, the two-point correlation function can be written in a form involving the spin matrix elements
$\langle e^{A}_{+,{l}_{1}}\wedge\cdot\cdot\cdot\wedge e^{A}_{+,{l}_{L}},\sigma e^{P}_{+,{k}_{1}}\wedge\cdot\cdot\cdot\wedge e^{P}_{+,{k}_{K}}\rangle$
for $L+K=\even$.

\section{Spin Matrix Elements on the Finite Periodic Lattice}\label{Spinmatrixelements}
In this section, we will provide a formula for the matrix elements of the spin operator in a basis of eigenvectors for the transfer matrix $V$.
Recall that after ``conjugation'' the spin operator is a map,
$$\sigma: \Alt(\tilde{W}^{P}_{+}) \to \Alt(\tilde{W}^{A}_{+}),$$ and the induced rotation, $s:=T(\sigma)$, associated with the spin operator is the identity on $W$. We defined the matrix of this map to be
$$s:=\left( \begin{array}{cc}
A & B  \\
C & D  \\
\end{array} \right):\tilde{W}^{P}_{+}\oplus \tilde{W}^{P}_{-}\to \tilde{W}^{A}_{+}\oplus \tilde{W}^{A}_{-},$$ where for example $D=Q^{A}_{-}|_{\tilde{W}^{P}_{-}}$. Let $D^{\tau}$ denote the transpose of $D$ with respect to the bilinear form $(\cdot,\cdot)$, and $D^{-\tau}:\tilde{W}^{P}_{+}\to \tilde{W}^{A}_{+}$ is the inverse of $D^{\tau}$. Here $D$ is invertible when the one-point function $\langle 0_{A}, \sigma 0_{P} \rangle$ is nonzero. Let $\mathcal{P}$ denote the collection of subsets of \newline
$\{ -M,-M+1,.... ,M\}$.
For an element $J$ in $\mathcal{P}$, we write
$$J=\{J_{1}, J_{2},...,J_{k}\}\quad \mbox{with}\quad J_{1}<J_{2}< \cdot \cdot \cdot<J_{k}.$$
We write $\#J=k$ for the number of elements in $J$. If $R$ is a $2M\times 2M$ matrix, we let $R_{I,J}$ denote the $(\#I+\#J) \times (\#I+\#J)$ sub matrix of $R$ made from the rows and columns of $R$ indexed by $I$ and $J$ respectively. For $I,J\in \mathcal{P}$
define
$$\tilde{e}^{A}_{\pm,I}:=\tilde{e}^{A}_{{\pm,I}_{1}}\wedge \tilde{e}^{A}_{{\pm,I}_{2}}\wedge \cdot \cdot \cdot\wedge \tilde{e}^{A}_{{\pm,I}_{k}}$$ and
$$\tilde{e}^{P}_{\pm,J}:=\tilde{e}^{P}_{{\pm,J}_{1}}\wedge \tilde{e}^{P}_{{\pm,J}_{2}}\wedge \cdot \cdot \cdot \wedge \tilde{e}^{P}_{{\pm,J}_{k}}.$$ 
The sets $\{\tilde{e}^{A}_{\pm,I}\}$ and $\{\tilde{e}^{P}_{\pm,J}\}$ are orthonormal bases for $\Alt(\tilde{W}^{A}_{\pm})$ and $\Alt(\tilde{W}^{P}_{\pm})$ respectively.
We have the following theorem.
\begin{theorem}\label{spinmatrix5}
Let $I,J\in \mathcal{P}$ and suppose  $\langle 0_{A}, \sigma 0_{P}\rangle \neq 0$.
Then the spin matrix elements are given by
$$\langle\tilde{e}^{A}_{+,I},\sigma\tilde{e}^{P}_{+,J}\rangle =\langle 0_{A}, \sigma 0_{P} \rangle\Pf(R_{I,J}),$$
where $\Pf(R_{I,J})$ is the Pfaffian of the skew symmetric matrix,
$$R_{I,J}=\left( \begin{array}{cc}
BD^{-1}_{I\times I} & D^{-\tau}_{I\times J}  \\
-D^{-1}_{J\times I} & D^{-1}C_{J\times J}  \\
\end{array} \right).$$ 
The sum is over all such $I$ and $J$ with $\#I+\#J$ even.
\end{theorem}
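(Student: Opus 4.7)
Proof proposal. The plan is to reduce the computation to a Wick-type Pfaffian formula for matrix elements of a Clifford group intertwiner, using the intertwining relation (\ref{Fockid}) together with the Berezin-integral/Gaussian-kernel representation developed in Appendix B (Lemma \ref{LemmaB} and Theorem \ref{gPfaffian}). First, normalize by $\tilde\sigma := \langle 0_A,\sigma 0_P\rangle^{-1}\sigma$ so that $\langle 0_A,\tilde\sigma 0_P\rangle = 1$. Writing the basis vectors as products of creation operators acting on the respective vacua, the matrix element in question equals $\langle 0_A,\sigma 0_P\rangle$ times the vacuum expectation
\begin{equation*}
\langle 0_A,\, a^A_{I_{\#I}}\cdots a^A_{I_1}\,\tilde\sigma\, a^{P*}_{J_1}\cdots a^{P*}_{J_{\#J}}\, 0_P\rangle,
\end{equation*}
with $a^A_i=a(\tilde e^A_{-,i})$ and $a^{P*}_j=c(\tilde e^P_{+,j})$ the standard annihilation and creation operators.

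Next, I would specialize the intertwining relation $\sigma F^P(x) = F^A(s(x))\sigma$ to each polarization block. For $v\in\tilde W^A_-$, setting $y = D^{-1}v\in\tilde W^P_-$ in $\sigma F^P(y) = F^A(s(y))\sigma$ gives the propagation rule
\begin{equation*}
a^A(v)\,\tilde\sigma \;=\; \tilde\sigma\, a^P(D^{-1}v)\;-\;c^A(BD^{-1}v)\,\tilde\sigma,
\end{equation*}
which moves an $a^A$ past $\tilde\sigma$ to the right. For $u\in\tilde W^P_+$ the relation reads
\begin{equation*}
\tilde\sigma\, c^P(u) \;=\; c^A(Au)\,\tilde\sigma \;+\; a^A(Cu)\,\tilde\sigma,
\end{equation*}
moving a $c^P$ past $\tilde\sigma$ to the left. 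Using these moves together with the Clifford anticommutators and the vacuum identities $\langle 0_A|\,c^A(\cdot)=0$ and $a^P(\tilde W^P_-)\,|0_P\rangle = 0$, I would compute the three elementary two-point functions
\begin{align*}
\langle 0_A,\, a^A_i\, a^A_j\,\tilde\sigma\, 0_P\rangle &= -(BD^{-1})_{ij},\\
\langle 0_A,\, a^A_i\,\tilde\sigma\, a^{P*}_j\, 0_P\rangle &= (D^{-\tau})_{ij},\\
\langle 0_A,\, \tilde\sigma\, a^{P*}_i\, a^{P*}_j\, 0_P\rangle &= (D^{-1}C)_{ij},
\end{align*}
with matrix entries taken in the bases dual under the bilinear form (\ref{bilinear}). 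The orthogonality $s^\tau s = s s^\tau = I$ of the induced rotation (equivalently $B^\tau D + D^\tau B = 0$ and $D C^\tau + C D^\tau = 0$) implies that $BD^{-1}$ and $D^{-1}C$ are skew with respect to this form, so these pairings are automatically antisymmetric in $i,j$.

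Then by Theorem \ref{gPfaffian} of Appendix B, the Berezin kernel of the Clifford group element $\tilde\sigma$ between Grassmann coherent states on $\Alt(\tilde W^A_+)$ and $\Alt(\tilde W^P_+)$ is the exponential of a skew-symmetric quadratic form whose coefficients are exactly the three two-point functions above. Extracting the coefficient of the Grassmann monomial corresponding to the indices $I$ and $J$ in the expansion of this exponential gives, by the standard Gaussian Berezin-integral/Pfaffian identity, precisely $\Pf(R_{I,J})$; the bottom-left block $-D^{-1}_{J\times I}$ in $R_{I,J}$ is then forced by skew-symmetry $R_{\beta\alpha}=-R_{\alpha\beta}$ combined with $(D^{-\tau})_{ij}=(D^{-1})_{ji}$. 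The parity constraint $\#I+\#J$ even is automatic because $\sigma$ anticommutes with $U$ and hence, after the identifications of Theorem \ref{U=1}, preserves the even/odd grading: a matrix element between vectors of opposite parity vanishes on the nose.

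The main obstacle I expect is the careful tracking of signs: the minus in the first two-point function arises from a single anticommutation of $a^A$ past $c^A$ on the left of $\tilde\sigma$, and one must match this with the orientation of the Pfaffian expansion (which depends on the order in which the $a^A$'s and $a^{P*}$'s appear in the product) and the sign convention for the dual pairing. A secondary issue is the invertibility of $D$: via Theorem \ref{gPfaffian} applied to the one-point function one expects $\langle 0_A,\sigma 0_P\rangle$ proportional to $(\det D)^{1/2}$, so the hypothesis that this one-point function is nonzero guarantees that the $D^{-1}$ and $D^{-\tau}$ appearing in $R_{I,J}$ are well-defined.
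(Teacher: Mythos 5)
Your proposal is correct and takes essentially the same route as the paper: the paper's proof simply invokes the intertwining relation (\ref{Fockid}) and applies Theorem \ref{gPfaffian}, whose Gaussian kernel already encodes the three contractions $BD^{-1}$, $D^{-\tau}$, $D^{-1}C$ that you re-derive by hand via the propagation rules.
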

\begin{proof}
Since $\sigma$ satisfies the intertwining relation  (\ref{Fockid}), the proof follows from Theorem \ref{gPfaffian}.
\end{proof}
\begin{corollary}\label{Cor}
For $-M \leq k< j\leq M$ and $\langle 0_{A}, \sigma 0_{P}\rangle\neq 0$, we have
\begin{align*}
\frac{\langle 0_{A}, \sigma \,\tilde{e}^{P}_{+,k} \tilde{e}^{P}_{+,j}  \rangle}{\langle 0_{A}, \sigma \, 0_{P}\rangle}=( \,\tilde{e}^{P}_{+,k}, D^{-1}C\, \tilde{e}^{P}_{+,j}  )=(D^{-1}C)_{kj},
\end{align*}
\begin{align*}
\frac{\langle \tilde{e}^{A}_{+,k} \tilde{e}^{A}_{+,j}, \sigma \,0_{P} \rangle}{\langle 0_{A}, \sigma \,0_{P}\rangle}=( \,\tilde{e}^{A}_{-,k}, BD^{-1}\, \tilde{e}^{A}_{-,j}  )=(BD^{-1})_{kj},
\end{align*}
\begin{align}\label{Dm}
\frac{\langle \tilde{e}^{A}_{+,k},\sigma\, \tilde{e}^{P}_{+,j} \rangle}{\langle 0_{A}, \sigma \, 0_{P}\rangle}=( \,\tilde{e}^{A}_{-,k}, D^{-\tau}\, \tilde{e}^{P}_{+,j}  )=D^{-\tau}_{kj}.
\end{align} 
\end{corollary}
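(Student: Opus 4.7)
The plan is to derive each of the three identities by specializing Theorem~\ref{spinmatrix5} to an appropriate pair of index sets with $\#I+\#J=2$, so that $R_{I,J}$ becomes a $2\times 2$ matrix and its Pfaffian collapses to a single entry. For the first identity I would take $I=\emptyset$, $J=\{k,j\}$, so that $R_{I,J}=(D^{-1}C)_{J\times J}$; for the second, $I=\{k,j\}$, $J=\emptyset$, giving $R_{I,J}=(BD^{-1})_{I\times I}$; and for the third, $I=\{k\}$, $J=\{j\}$, producing
$$R_{I,J}=\begin{pmatrix}(BD^{-1})_{kk} & D^{-\tau}_{kj}\\ -D^{-1}_{jk} & (D^{-1}C)_{jj}\end{pmatrix}.$$

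The essential technical input needed is that $BD^{-1}$ and $D^{-1}C$ are skew-symmetric with respect to the bilinear form $(\cdot,\cdot)$. Once this is granted, the two diagonal entries in the third display vanish, so each $R_{I,J}$ is a genuine $2\times 2$ skew-symmetric matrix of shape $\bigl(\begin{smallmatrix} 0 & a\\ -a & 0\end{smallmatrix}\bigr)$ whose Pfaffian is simply $a$. The three identities of the corollary then read off directly; in the third case one also uses the trivial identity $(D^{-\tau})^{\tau}=D^{-1}$ to see that the off-diagonal entries $D^{-\tau}_{kj}$ and $-D^{-1}_{jk}$ are indeed negatives of one another.

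To establish the two skew-symmetries I will use that the induced rotation $s=T(\sigma_M)$ is complex orthogonal with respect to $(\cdot,\cdot)$, together with the fact that the polarizations $\tilde{W}^{P}=\tilde{W}^{P}_{+}\oplus\tilde{W}^{P}_{-}$ and $\tilde{W}^{A}=\tilde{W}^{A}_{+}\oplus\tilde{W}^{A}_{-}$ are isotropic with $\tilde{W}_{+}$ and $\tilde{W}_{-}$ dually paired. In these polarized bases the Gram matrix of $(\cdot,\cdot)$ is $J=\bigl(\begin{smallmatrix}0 & I\\ I & 0\end{smallmatrix}\bigr)$, so the orthogonality conditions read $s^{\tau}Js=J$ and $sJs^{\tau}=J$. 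Reading off the lower-right block of the first relation yields $D^{\tau}B+B^{\tau}D=0$, and reading off the lower-right block of the second yields $CD^{\tau}+DC^{\tau}=0$. Multiplying by $D^{-\tau}$ on the left and $D^{-1}$ on the right (respectively $D^{-1}$ on the left and $D^{-\tau}$ on the right) then gives $(BD^{-1})^{\tau}=-BD^{-1}$ and $(D^{-1}C)^{\tau}=-D^{-1}C$.

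The only non-mechanical step is the skew-symmetry argument above; everything else is a direct specialization of Theorem~\ref{spinmatrix5} and a one-line $2\times 2$ Pfaffian evaluation. The main obstacle, in the sense of conceptual content, is therefore the bookkeeping of which block relations among $A,B,C,D$ follow from complex orthogonality under the anti-diagonal Gram matrix $J$ of the polarized bilinear form.
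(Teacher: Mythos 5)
Your proposal is correct and follows essentially the same route as the paper: the corollary is just Theorem \ref{spinmatrix5} specialized to index sets with $\#I+\#J=2$, where the $2\times 2$ Pfaffian reduces to the single off-diagonal entry. The skew-symmetry of $BD^{-1}$ and $D^{-1}C$ that you verify from complex orthogonality is the same fact the paper carries implicitly through the relations (\ref{eq6}) and Theorem \ref{gPfaffian}, so your argument only makes explicit what the paper leaves tacit.
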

\section{Bugrij-Lisovyy Conjecture for the Spin Matrix Elements}\label{BLformula}
For simplicity, let $\Sigma_{P}$ and $\Sigma_{A}$ now denote the sets $\bigg\{\tfrac{2 \pi k}{2M+1}, k \in \{-M,...,M\}\bigg\}$ and \newline
$\bigg\{\tfrac{2 \pi \big(k+\tfrac{1}{2}\big)}{2M+1}, k \in \{-M,...,M\}\bigg\}$ respectively. 
We consider the isotropic case and define the interaction constant to be 
$\mathcal{K}:=\mathcal{K}_{1}=\mathcal{K}_{2}.$
The function $\gamma(\theta)$ is defined as the positive root of the equation
$$\cosh(\gamma(\theta))=\sinh(2\mathcal{K})+\sinh(2\mathcal{K})^{-1}-\cos(\theta).$$
A. I. Bugrij and O. Lisovyy \cite{BL03} proposed the following formula for the spin matrix elements on the finite periodic lattice for the isotropic case in the orthonormal basis of eigenvectors for the transfer matrix,
\begin{align}\label{BL1}
&\nonumber\mathrel{\phantom{=}}\langle \tilde{e}^{A}_{+,{l}_{1}}\wedge \tilde{e}^{A}_{+,{l}_{2}}\wedge...\wedge \tilde{e}^{A}_{+,{l}_{m}},\sigma \tilde{e}^{P}_{+,{k}_{1}}\wedge \tilde{e}^{P}_{+,{k}_{2}}\wedge...\wedge \tilde{e}^{P}_{+,{k}_{m'}} \rangle\\
& \nonumber=\sqrt{\xi\xi_{T}}\prod_{i=1}^{m}\frac{e^{\tfrac{1}{2}v\big(e^{i\theta^{A}_{{l_{i}}}}\big)}}{\sqrt{(2M+1) \sinh\gamma(\theta^{A}_{{l}_{i}})}}\prod_{j=1}^{m'}\frac{e^{-\tfrac{1}{2}v\big(e^{i\theta^{P}_{{k_{j}}}}\big)}}{\sqrt{(2M+1) \sinh\gamma(\theta^{P}_{{k}_{j}})}}\times\\
& \nonumber\times\prod_{1\leq i<i'\leq m}\frac{\sin\tfrac{\theta^{A}_{{l}_{i}}-\theta^{A}_{{l}_{i'}}}{2}}{\sinh\tfrac{\gamma(\theta^{A}_{{l}_{i}})+\gamma(\theta^{A}_{{l}_{i'}})}{2}} \prod_{1\leq j<j'\leq m'}\frac{\sin\tfrac{\theta^{P}_{{k}_{j}}-\theta^{P}_{{k}_{j'}}}{2}}{\sinh\tfrac{\gamma(\theta^{P}_{{k}_{j}})+\gamma(\theta^{P}_{{k}_{j'}})}{2}}\times\\
&\times\prod_{1\leq i\leq m, 1\leq j\leq m'}\frac{\sinh\tfrac{\gamma(\theta^{A}_{{l}_{i}})+\gamma(\theta^{P}_{{k}_{j}})}{2}}{\sin\tfrac{\theta^{A}_{{l}_{i}}-\theta^{P}_{k_{j}}}{2}},
\end{align}
where
$$\xi=|1-(\sinh(2\mathcal{K}))^{-4}|^{\tfrac{1}{4}},$$ and where
the cylindrical parameters $\xi_{T}$ and $v(e^{i\theta})$ are defined as
\begin{align*}
\xi^{4}_{T}:=\frac{\prod_{\theta'\in \Sigma_{P}}\prod_{\theta\in \Sigma_{A}}\sinh^{2}\big(\tfrac{\gamma(\theta')+\gamma(\theta)}{2}\big)}{\prod_{\theta'\in \Sigma_{P}}\prod_{\theta\in \Sigma_{P}}\sinh\big(\tfrac{\gamma(\theta')+\gamma(\theta)}{2}\big)\prod_{\theta\in \Sigma_{A}}\prod_{\theta'\in \Sigma_{A}}\sinh\big(\tfrac{\gamma(\theta)+\gamma(\theta')}{2}\big)},
\end{align*} 
\begin{align*}
v(e^{i\theta})=\log\frac{\prod_{\theta' \in \Sigma_{A}}\sinh\big(\tfrac{\gamma(\theta)+\gamma(\theta')}{2}\big)}{\prod_{\theta' \in \Sigma_{P}}\sinh \big(\tfrac{\gamma(\theta)+\gamma(\theta')}{2}\big)}.
\end{align*}
Here $m+m'$ is even.
It can be checked that $e^{\pm\tfrac{1}{2}v(e^{i\theta})}$ can be written in the forms,
\begin{align}\label{V+}
V_{+}(e^{i\theta}):=e^{\tfrac{v(e^{i\theta})}{2}}=\sqrt{\frac{e^{-\tfrac{\gamma(\pi)}{2}}(\lambda^{-1}(e^{i\theta})-\lambda(-1))}{e^{-\tfrac{\gamma(0)}{2}}(\lambda^{-1}(e^{i\theta})-\lambda(1))}}\frac{\prod_{\theta' >0\in \Sigma_{A}}e^{-\tfrac{\gamma(\theta')}{2}}(\lambda^{-1}(e^{i\theta})-\lambda(e^{i\theta'}))}{\prod_{\theta'>0 \in \Sigma_{P}}e^{-\tfrac{\gamma(\theta')}{2}}(\lambda^{-1}(e^{i\theta})-\lambda(e^{i\theta'}))}
\end{align}and
\begin{align}\label{V-}
V_{-}(e^{i\theta}):=e^{-\tfrac{v(e^{i\theta})}{2}}=\sqrt{\frac{e^{\tfrac{\gamma(0)}{2}}(\lambda(e^{i\theta})-\lambda^{-1}(1))}{e^{\tfrac{\gamma(\pi)}{2}}(\lambda(e^{i\theta})-\lambda^{-1}(-1))}}\frac{\prod_{\theta' >0\in \Sigma_{P}}e^{\tfrac{\gamma(\theta')}{2}}(\lambda(e^{i\theta})-\lambda^{-1}(e^{i\theta'}))}{\prod_{\theta' >0\in \Sigma_{A}}e^{\tfrac{\gamma(\theta')}{2}}(\lambda(e^{i\theta})-\lambda^{-1}(e^{i\theta'}))},
\end{align}
where $\lambda(e^{i\theta})=e^{\gamma(\theta)}$ for $\theta=\theta^{A}_{{l}_{i}}$ or $\theta=\theta^{P}_{{k}_{j}}$, and where we used the short-hand notation $\gamma(\theta):=\gamma(e^{i\theta})$.
Here the square roots are taken with positive real parts.
This particular form of $v(e^{i\theta})$ will play a central role in Section \ref{Spectral2} when we consider the factorization of the ratio of summability kernels on the spectral curve.
\section{Pfaffian Formalism and the Bugrij-Lisovyy Conjecture}\label{Pfaffianformalism}
The Bugrij-Lisovyy formula in (\ref{BL1}) and Corollary \ref{Cor} lead to the following conjecture for the isotropic Ising model:
\begin{conjecture}
The matrix elements of $D^{-\tau}$, $BD^{-1}$ and $D^{-1}C$ are given by,
\begin{align}\label{11}
&\mathrel{\phantom{=}}D^{-\tau}_{{i},{j}}
\propto \frac{V_{+}(z_{A}(i))}{\sqrt{(2M+1) \sinh\gamma(\theta^{A}_{i})}}\frac{V_{-}(z_{P}(j))}{\sqrt{(2M+1) \sinh\gamma(\theta^{P}_{j})}}\frac{\sinh\tfrac{\gamma(\theta^{A}_{i})+\gamma(\theta^{P}_{j})}{2}}{\sin\tfrac{\theta^{A}_{i}-\theta^{P}_{j}}{2}},
\end{align}
\begin{align*}
&\nonumber\mathrel{\phantom{=}}BD^{-1}_{{i},{i'}}\\
& \propto \frac{V_{+}(z_{A}(i))}{\sqrt{(2M+1) \sinh\gamma(\theta^{A}_{i})}}\frac{V_{+}(z_{A}(i'))}{\sqrt{(2M+1) \sinh\gamma(\theta^{A}_{i'})}}
\frac{\sin\tfrac{\theta^{A}_{{i}}-\theta^{A}_{i'}}{2}}{\sinh\tfrac{\gamma(\theta^{A}_{i})+\gamma(\theta^{A}_{{i'}})}{2}},
\end{align*}
\begin{align*}
&\nonumber \mathrel{\phantom{=}}D^{-1}C_{{j},{j'}}\\
& \propto \frac{V_{-}(z_{P}(j))}{\sqrt{(2M+1) \sinh\gamma(\theta^{P}_{j})}}\frac{V_{-}(z_{P}(j'))}{\sqrt{(2M+1) \sinh\gamma(\theta^{P}_{j'})}}
\frac{\sin\tfrac{\theta^{P}_{j}-\theta^{P}_{j'}}{2}}{\sinh\tfrac{\gamma(\theta^{P}_{j})+\gamma(\theta^{P}_{j'})}{2}}
\end{align*} respectively,
where $\propto$ denotes proportional to.
\end{conjecture}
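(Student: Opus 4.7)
The plan is to obtain each of the three proposed matrix formulas by specializing the Bugrij--Lisovyy expression (\ref{BL1}) to small spin matrix elements and invoking Corollary \ref{Cor} to identify the result with entries of $D^{-\tau}$, $BD^{-1}$, or $D^{-1}C$. The three claims correspond to the three identities in the corollary, which single out the $(|I|,|J|)$-patterns $(1,1)$, $(2,0)$ and $(0,2)$ on the left-hand side of the Bugrij--Lisovyy formula.

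First I would compute the vacuum-to-vacuum matrix element by setting $m=m'=0$ in (\ref{BL1}); all indexed products become empty and only $\langle 0_A,\sigma 0_P\rangle=\sqrt{\xi\xi_T}$ remains. This common denominator accounts for the proportionality symbol in the statement. Next I would derive the formula for $D^{-\tau}_{i,j}$ by specializing (\ref{BL1}) to $m=m'=1$ with $l_1=i$ and $k_1=j$. The two same-sector products are empty; only the two single-factor terms in $V_+(z_A(i))$ and $V_-(z_P(j))$ together with the single cross term $\sinh\tfrac{\gamma(\theta^A_i)+\gamma(\theta^P_j)}{2}/\sin\tfrac{\theta^A_i-\theta^P_j}{2}$ survive. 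Dividing by $\sqrt{\xi\xi_T}$ and applying the third identity of Corollary \ref{Cor} (equation (\ref{Dm})) yields exactly the proposed expression.

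The formulas for $BD^{-1}_{i,i'}$ and $D^{-1}C_{j,j'}$ arise in the same way from $(m,m')=(2,0)$ and $(m,m')=(0,2)$. In the first case the fifth (cross) product and the fourth ($P$--$P$) product are empty, while the third product contributes the single factor $\sin\tfrac{\theta^A_i-\theta^A_{i'}}{2}/\sinh\tfrac{\gamma(\theta^A_i)+\gamma(\theta^A_{i'})}{2}$ together with two $V_+$ terms; the second case is symmetric, with $V_-$ replacing $V_+$ and the $P$--$P$ cross term replacing the $A$--$A$ one. Dividing each by $\langle 0_A,\sigma 0_P\rangle$ and reading off the first two identities of Corollary \ref{Cor} produces the stated formulas.

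The genuine obstacle is not this specialization, which is entirely routine, but the fact that (\ref{BL1}) itself is stated as a conjecture in this paper, proved in \cite{GIPST07,GIPST08} by a rather intricate argument; consequently the result is unconditional only modulo (\ref{BL1}). A more self-contained route would be to verify the three matrix formulas for $D^{-\tau}$, $BD^{-1}$ and $D^{-1}C$ directly and then \emph{derive} (\ref{BL1}) via the Pfaffian expansion of Theorem \ref{spinmatrix5}; this reverse direction runs into the inversion of $D$ from (\ref{Am}), described in the introduction as ``extremely difficult,'' and the correct $V_\pm$ factors emerge only through the holomorphic factorization of the summability kernels on the spectral curve (Theorem \ref{V+V_lemma}), which is the substantive analytic ingredient and in my view the main obstacle to an unconditional proof.
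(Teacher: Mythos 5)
Your derivation is correct and is essentially the paper's own route: the statement is presented only as a conjecture obtained by specializing the Bugrij--Lisovyy formula (\ref{BL1}) to the cases $(m,m')=(1,1),(2,0),(0,2)$ and matching the resulting ratios with the three identities of Corollary \ref{Cor}, exactly as you do, so it remains conditional on (\ref{BL1}). Your closing remarks also align with the paper, which only adds a consistency check in the reverse direction (recovering the full product formula from the conjectured elementary elements via Theorem \ref{spinmatrix5} and Lemma \ref{Li}) rather than an unconditional proof.
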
 The proportionality constants in each of the terms above are the same and therefore omitted.
Here we have extended the Bugrij-Lisovyy conjecture to include the elements 
$\langle \tilde{e}^{A}_{+,k},\sigma\, \tilde{e}^{P}_{+,j} \rangle$ which are not ``physical elements''.
By applying Theorem \ref{spinmatrix5} and Wicks Theorem, we can express the spin matrix elements
 $\langle \tilde{e}^{A}_{+,{l}_{1}}\wedge...\wedge \tilde{e}^{A}_{+,{l}_{m}},\sigma \tilde{e}^{P}_{+,{k}_{1}}\wedge ...\wedge \tilde{e}^{P}_{+,{k}_{m'}} \rangle$ in terms of Pfaffians of a skew symmetric matrix whose elements are multiples of $\langle \tilde{e}^{A}_{+,{l}_{i}}\wedge \tilde{e}^{A}_{+,{l}_{i'}}\sigma 0_{P}\rangle$,
 $\langle 0_{A}, \sigma \tilde{e}^{P}_{+,{k}_{j}}\wedge \tilde{e}^{P}_{+,{k}_{j'}}\rangle$ and $\langle \tilde{e}^{A}_{+,{l}_{i}},\sigma \tilde{e}^{P}_{+,{k}_{j}}\rangle$. By using the proposed Bugrij-Lisovyy formula for these elements, we show that the Pfaffian of the matrix can be written as a product of Jacobian elliptic functions in the uniformization parameter of the spectral curve. We find that up to a constant, this product is the same as the one given in the Bugrij-Lisovyy formula. This supports the Bugrij-Lisovyy conjecture. We refer the reader to Appendix \ref{Appendix C} for a description of elliptic functions and the Uniformization Theorem \ref{palmer}. We start by writing the factor
$$\frac{\sinh(\tfrac{\gamma(\theta)+\gamma(\theta')}{2})}{\sin(\tfrac{\theta-\theta'}{2})}$$ in (\ref{11}) in terms of the uniformization parameters by performing the elliptic substitutions,
$$e^{i\theta }=z(u,a)=k\sn(u+ia)\sn(u-ia),\quad e^{i\theta'}=z(u',a)=k\sn(u'+ia)\sn(u'-ia),$$
where
$$\Im u=\Im u'=\frac{K'}{2}\quad \mbox{and}\quad \Re u \in (0,2K),$$ and $0< 2a< K'$ is defined by $s_{1}=-i\sn(2ia)$. Here $K$ and $K'$ are elliptic integrals given in (\ref{K}) and (\ref{K'}), and $\sn(u)$ is a Jacobian elliptic function of $u$ with modulus $k=\frac{1}{s_{1}s_{2}}$.
Then Theorem \ref{palmer} implies
\begin{align*}
e^{\tfrac{\gamma+\theta i}{2}} & = \sqrt{k}\sn(u-ia),\\
e^{\tfrac{-\gamma+\theta i}{2}} & = \sqrt{k}\sn(u+ia).
\end{align*}
We have
\begin{align}\label{hyperbolic}
\frac{\sinh(\tfrac{\gamma(\theta)+\gamma(\theta')}{2})}{\sin(\tfrac{\theta-\theta'}{2})}=i\frac{e^{\tfrac{\gamma+\gamma'}{2}}-e^{-\tfrac{\gamma+\gamma'}{2}}}{e^{\tfrac{i\theta -i\theta'}{2}}-e^{-\tfrac{i\theta -i\theta' }{2}}}& = i\frac{e^{\tfrac{\gamma+i\theta }{2}}e^{\tfrac{\gamma'+i\theta' }{2}}-e^{-\tfrac{\gamma+i\theta }{2}}e^{-\tfrac{\gamma'+i\theta'}{2}}}{e^{i\theta }-e^{i\theta' }}.
\end{align}
Thus the numerator of (\ref{hyperbolic}) becomes 
$$ik[\sn(u-ia)\sn(u'-ia)-\sn(u'+ia)\sn(u+ia)].$$
Using the identity
$$\sn(u-ia)\sn(u+ia)=\frac{\sn^{2}(u)-\sn^{2}(ia)}{1-k^{2}\sn^{2}(u)\sn^{2}(ia)}$$ which follows from the addition formulas (\ref{additionsn}) and (\ref{subtractionsn}),
the denominator becomes
$$k\bigg(\frac{\sn^{2}(u)-\sn^{2}(ia)}{1-k^{2}\sn^{2}(u)\sn^{2}(ia)}\bigg)-k\bigg(\frac{\sn^{2}(u')-\sn^{2}(ia)}{1-k^{2}\sn^{2}(u')\sn^{2}(ia)}\bigg).$$
After some simplifications, the expression in (\ref{hyperbolic}) becomes
\begin{align}\label{snin}
&\nonumber\mathrel{\phantom{=}}\frac{-2i[\cn(ia)\dn(ia)\sn(ia)][\sn(u)\cn(u')\dn(u')+\sn(u')\cn(u)\dn(u)]}{(\sn^{2}(u)-\sn^{2}(u'))(1-k^{2}\sn^{4}(ia))}\\
& = s_{1}\frac{1}{\sn(u-u')}.
\end{align}
In the isotropic case we have $s_{1}^{-1}=\sqrt{k}$.
If we substitute $v:=-u+\frac{iK'}{2}$ and 
$v':=-u'-\frac{iK'}{2}$ in (\ref{snin}), and apply identity (\ref{translation}), we obtain
$$s_{1}\frac{1}{\sn(u-u')}=-\sqrt{k}\sn(v-v').$$
Now we will make the following elliptic substitutions,
\begin{align*}
e^{i\theta^{A}_{{l}_{j}}} & = k\sn(u_{{l}_{j}}-ia)\sn(u_{{l}_{j}}+ia)\quad \mbox{for}\quad 1\leq j\leq m,\\
e^{i\theta^{P}_{{k}_{j}}}& = k\sn(u_{{k}_{j}}-ia)\sn(u_{{k}_{j}}+ia)\quad \mbox{for}\quad 1\leq j\leq m',
\end{align*} followed by the translations, 
\begin{align*}
v_{{l}_{j}}: & = -u_{{l}_{j}}+\frac{iK'}{2}\quad \mbox{for}\quad 1\leq j\leq m,\\
v_{{k}_{j}}: & =- u_{{k}_{j}}-\frac{iK'}{2}\quad \mbox{for} \quad 1\leq j\leq m'.
\end{align*}
Then using Theorem \ref{spinmatrix5} and Wicks Theorem, we obtain for $m+m'=\mbox{even}$,
\begin{align*}
&\mathrel{\phantom{=}}\langle \tilde{e}^{A}_{+,{l}_{1}}\wedge \tilde{e}^{A}_{+,{l}_{2}}\wedge...\wedge \tilde{e}^{A}_{+,{l}_{m}},\sigma \tilde{e}^{P}_{+,{k}_{1}}\wedge \tilde{e}^{P}_{+,{k}_{2}}\wedge...\wedge \tilde{e}^{P}_{+,{k}_{m'}} \rangle\\
&=\langle 0_{A}, \sigma 0_{P}\rangle\Pf(r),
\end{align*} where $\Pf(r)$ is the Pfaffian of the skew symmetric matrix $r$ with matrix elements above the diagonal given by:
For $1\leq i<i'\leq m$, \begin{align*}r_{{l}_{i}, {l}_{i'}}&=\frac{\langle \tilde{e}^{A}_{+,{l}_{i}}  \tilde{e}^{A}_{+,{l}_{i'}},\sigma 0_{P}\rangle}{\langle 0_{A},\sigma 0_{P}\rangle}\\
& \propto \frac{V_{+}(z_{A}({l_{i})})}{\sqrt{(2M+1) \sinh\gamma(\theta^{A}_{{{l}_{i}}}})}\frac{V_{+}(z_{A}(l_{i'}))}{\sqrt{(2M+1) \sinh\gamma(\theta^{A}_{{{l}_{i'}}})}}(-\sqrt{k}\sn(v_{{l}_{i}}-v_{{l}_{i'}})),
\end{align*}
for $1\leq i\leq m\quad \mbox{and} \quad 1\leq j\leq m'$,
\begin{align*}
r_{{l}_{i},{m+k}_{j}}&=\frac{\langle \tilde{e}^{A}_{+,{l}_{i}},\sigma \tilde{e}^{P}_{+,{k}_{j}}\rangle}{\langle 0_{A},\sigma 0_{P}\rangle}\\
& \propto \frac{V_{+}(z_{A}(l_{i}))}{\sqrt{(2M+1) \sinh\gamma(\theta^{A}_{{{l}_{i}}})}}\frac{V_{-}(z_{P}(k_{j}))}{\sqrt{(2M+1) \sinh\gamma(\theta^{P}_{{{k}_{j}}})}}(-\sqrt{k}\sn(v_{{l}_{i}}-v_{{k}_{j}}))
\end{align*}
and
for $1 \leq j< j' \leq m'$,
\begin{align*}
r_{{m+k}_{j},{m+k}_{j'}}&=\frac{\langle 0_{A},\sigma \tilde{e}^{P}_{+,{k}_{j}} \tilde{e}^{P}_{+,{k}_{j'}}\rangle}{\langle 0_{A},\sigma 0_{P}\rangle}\\
& \propto  \frac{V_{-}(z_{P}(k_{j}))}{\sqrt{(2M+1) \sinh\gamma(\theta^{P}_{{{k}_{j}}})}}\frac{V_{-}(z_{P}(k_{j'}))}{\sqrt{(2M+1) \sinh\gamma(\theta^{P}_{{{k}_{j'}}})}}(-\sqrt{k}\sn(v_{{k}_{j}}-v_{{k}_{j'}})).
\end{align*}
Define $E$ to be the $(m+m')\times (m+m')$ diagonal matrix
$$E=\left( \begin{array}{ccc}
e^{A} & 0 \\
0 & e^{P}  \end{array} \right)$$ with matrix elements
$$e^{A}_{{l}_{i},{l}_{i}}=\frac{V_{+}(z_{A}(l_{i}))}{\sqrt{(2M+1)\sinh \gamma(\theta^{A}_{{l}_{i}})}}\quad \mbox{for}\quad 1\leq i\leq m,$$
$$e^{P}_{m+{k}_{j},{m+k}_{j}}=\frac{V_{-}(z_{P}(k_{j}))}{\sqrt{(2M+1)\sinh\gamma(\theta^{P}_{{k}_{j}})}}\quad\mbox{for}\quad 1\leq j\leq m'.$$
Define $s$ to be the $(m+m')\times (m+m')$ skew symmetric matrix with matrix elements above the diagonal given by 
\begin{align*}
s_{{l}_{i},{l}_{i'}} & = -\sqrt{k}\sn(v_{{l}_{i}}-v_{{l}_{i'}})\quad \mbox{for}\quad 1\leq i<i' \leq m,\\
s_{{l}_{i,}{m+k}_{j}} & = -\sqrt{k}\sn(v_{{l}_{i}}-v_{{k}_{j}})\quad \mbox{for}\quad 1\leq i \leq m, \quad 1\leq j\leq m',\\
s_{m+{k}_{j},{m+k}_{j'}} & = -\sqrt{k}\sn(v_{{k}_{j}}-v_{{k}_{j'}})\quad \mbox{for}\quad 1\leq j<j' \leq m'.
\end{align*}
We will need the following lemma found on page $87$ of \cite{palmer1}.
\begin{lemma}\cite{palmer1}\label{Li}
Let $r$ be the  $2n \times 2n$ skew symmetric matrix with $i,j$ matrix element $r_{i,j}=-\sqrt{k}\sn(u_{i}-u_{j})$ for $i,j=1,2,...,2n.$
Then
$$\Pf(r)=\prod_{i<j}^{2n}r_{i,j}.$$
\end{lemma}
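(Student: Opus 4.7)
The plan is to prove this by induction on $n$. The base case $n=1$ is immediate: both sides equal $r_{12}$.

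For the inductive step, I would apply the Pfaffian cofactor expansion along the first row,
\[
\Pf(r) \;=\; \sum_{j=2}^{2n}(-1)^{j}\, r_{1j}\, \Pf\bigl(r^{(1,j)}\bigr),
\]
where $r^{(1,j)}$ is the $(2n-2)\times(2n-2)$ principal submatrix obtained by deleting rows and columns $1$ and $j$. Each $r^{(1,j)}$ retains the form $-\sqrt{k}\sn(u_i-u_{i'})$, so the inductive hypothesis applies and gives $\Pf(r^{(1,j)})$ as the product of its upper-triangular entries. Substituting and dividing through by the common factor $\prod_{2 \le i<i'\le 2n}r_{ii'}$, the desired identity collapses to the scalar elliptic-function identity
\[
\sum_{j=1}^{M}\frac{\sn(u-u_j)}{\prod_{i\neq j}\sn(u_i-u_j)} \;=\; k^{M-1}\prod_{j=1}^{M}\sn(u-u_j),\qquad M = 2n-1 \text{ odd}, \quad (\star)
\]
after the relabeling $u_1 \mapsto u$, $u_{i+1}\mapsto u_i$.

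I would prove $(\star)$ by a Liouville-type residue argument. Both sides are elliptic functions of $u$ with quasi-periods $4K$ and $2iK'$, and both have simple poles only at $u = u_{j^\ast} + iK'$ and $u = u_{j^\ast} + 2K + iK'$ modulo periods. Using $\operatorname{Res}_{u=iK'}\sn(u) = 1/k$ and $\sn(v+iK') = 1/(k\sn v)$, one checks directly that the residues of the two sides coincide at every pole; the oddness of $M$ is precisely what makes the sign $(-1)^{M-1} = 1$ arising from $\prod_{j\neq j^\ast}\sn(u_{j^\ast}+iK'-u_j)$ come out right. Hence the difference LHS$-$RHS is a holomorphic elliptic function of $u$, hence a constant $C$. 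Evaluating at $u = u_{j^\ast} + 2K$, where $\sn(2K) = 0$ kills the right side and the identity $\sn(v+2K) = -\sn(v)$ allows the left side to collapse after cancellation, yields
\[
C \;=\; -\sum_{j\neq j^\ast}\frac{1}{\prod_{i\neq j,\,i\neq j^\ast}\sn(u_i-u_j)},
\]
a sum over $M-1 = 2n-2$ (even) points.

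The main obstacle is thus the companion identity $\sum_{j=1}^{M'}\bigl(\prod_{i\neq j}\sn(u_i-u_j)\bigr)^{-1} = 0$ for $M'$ even, which forces $C=0$ and closes the induction. I would prove this directly by applying the sum-of-residues vanishing theorem to the auxiliary elliptic function $\Phi(u) := 1/\prod_{j=1}^{M'}\sn(u - u_j)$, whose simple poles in one fundamental domain lie at $u = u_j$ and $u = u_j + 2K$. Using $\sn'(0) = 1$, $\sn'(2K) = -1$, $\sn(v+2K) = -\sn(v)$, together with the parity $(-1)^{M'} = 1$, a short computation shows that the two paired residues both equal $A_j := \bigl(\prod_{l\neq j}\sn(u_j - u_l)\bigr)^{-1}$. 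Since the total residue of an elliptic function over a fundamental domain vanishes, $2\sum_j A_j = 0$, and rewriting $A_j$ via the antisymmetry of $\sn$ (which contributes the sign $(-1)^{M'-1} = -1$) produces the companion identity exactly. An alternative, more conceptual route would be to express $\sn$ via Jacobi theta functions and invoke Frobenius's elliptic generalization of Schur's Pfaffian identity, but the residue-based approach above is both shorter and pins down the normalization $k^{n(n-1)}$ on the right-hand side automatically.
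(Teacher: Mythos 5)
Your proposal is correct, but there is nothing in the paper to measure it against: the paper does not prove Lemma \ref{Li} at all, it simply quotes it from \cite{palmer1} (p.~87), so your argument is a genuinely independent, self-contained proof. The inductive reduction is right: with the expansion $\Pf(r)=\sum_{j=2}^{2n}(-1)^{j}r_{1j}\Pf(r^{(1,j)})$ and the oddness of $\sn$, dividing by $\prod_{2\le i<i'\le 2n}r_{ii'}$ does collapse the step to your identity $(\star)$ with exactly the power $k^{M-1}$, $M=2n-1$; and your residue bookkeeping checks out, since $\sn$ has residue $1/k$ at $iK'$, $\sn(v+iK')=1/(k\sn v)$, $\sn(v+2K)=-\sn v$, so the two sides of $(\star)$ have matching simple poles at $u=u_{j^{\ast}}+iK'$ and $u=u_{j^{\ast}}+2K+iK'$ precisely because $(-1)^{M-1}=1$, and the companion identity for an even number of points follows as you say from the vanishing of the total residue of $1/\prod_{j}\sn(u-v_{j})$ over a period parallelogram. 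One genuine simplification is available at the point where you work hardest: both sides of $(\star)$ change sign under $u\mapsto u+2K$ (the left because each term is a single $\sn$, the right because $M$ is odd), so the pole-free elliptic difference is a constant $C$ with $C=-C$, hence $C=0$ immediately; the evaluation at $u=u_{j^{\ast}}+2K$ and the auxiliary even-$M'$ identity are then unnecessary, although your derivation of them is also correct and the even/odd dichotomy you exhibit is of independent interest (it is the $\sn$-analogue of the two Schur-type Pfaffian evaluations, and degenerates at $k\to1$, $\sn\to\tanh$, to the classical Schur identity, and at $k\to0$ to the vanishing of $\Pf[\sin(u_{i}-u_{j})]$). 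What your route buys over the paper's bare citation is a short, checkable proof using only the addition/translation properties of $\sn$ already collected in Appendix \ref{Appendix C}; what the citation buys is brevity and consistency with \cite{palmer1}, whose conventions (including the normalization $-\sqrt{k}$, equivalently $\Pf[\sn(u_{i}-u_{j})]=k^{n(n-1)}\prod_{i<j}\sn(u_{i}-u_{j})$) the paper is following.
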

Using Lemma \ref{Li}, we have
\begin{align*}\Pf(r)&=\Pf(EsE^{\tau})\\
&=(\det E)\Pf(s)\\
& = \prod_{i=1}^{m}\frac{V_{+}(z_{A}(l_{i}))}{\sqrt{(2M+1) \sinh\gamma(\theta^{A}_{{l}_{i}})}}\prod_{j=1}^{m'}\frac{V_{-}(z_{P}(k_{j}))}{\sqrt{(2M+1) \sinh\gamma(\theta^{P}_{{k}_{j}})}}\times\\
& \times \prod_{1\leq i < i'\leq m}-\sqrt{k}\sn(v_{{l}_{i}}-v_{{l}_{i'}})\prod_{1\leq j < j'\leq m'}-\sqrt{k}\sn(v_{{k}_{j}}-v_{{k}_{j'}})\times\\
& \times \prod_{1\leq i \leq m, 1\leq j \leq m'}-\sqrt{k}\sn(v_{{l}_{i}}-v_{{k}_{j}}),
\end{align*}
which up to a constant is the same product formula as given in (\ref{BL1}).
\section{Numerical Calculations}\label{Numerical}
We have numerically compared the calculation of our $BD^{-1}$, $D^{-\tau}$, and $D^{-1}C$ matrix elements with the corresponding terms in the Bugrij-Lisovyy formula  (\ref{BL1}). 
What we find is, that as we let the temperature approach $T_{C}$ from below, our results are close to theirs,  but there is a discrepancy. We are uncertain of the reason for this. We also find that in the scaling regime, i.e. as we let $M$ get bigger while we keep the temperature close to $T_{C}$, the precision improves. We compared the formulas in MATLAB by creating our $(2M+1)\times (2M+1)$ matrix $D$ as given in (\ref{Am}). Then we multiplied the transpose of this matrix with the matrix with elements $\langle \tilde{e}^{A}_{+,k}, \sigma \tilde{e}^{P}_{+,j} \rangle$ for $-M\leq k,j \leq M$ as given in the Bugrij-Lisovyy formula (\ref{BL1}). If this had returned the identity matrix, it would have shown  numerically that the Bugrij-Lisovyy formula provides an inverse for our matrix $D^{\tau}$.

\section{Holomorphic Factorization of the Ratio of Summability Kernels on the Spectral Curve}\label{Spectral2}
In this section, we exhibit the ``new'' elements $V_{\pm}$ in the Bugrij-Lisovyy formula as part of a holomorphic factorization of the periodic and anti-periodic summability kernels on the spectral curve associated with the induced rotation for the transfer matrix.

Introduce the cycles
\begin{align*}
\mathcal{N}_{+}&=\big\{u|0<\Re u<2K, \Im u=0\big\},\\
\mathcal{N}_{-}&=\big\{u|0<\Re u<2K, \Im u=\pm K'\big\}, 
\end{align*}
where $K$ and $K'$ are elliptic integrals given in (\ref{K}) and (\ref{K'}). These cycles are the boundaries for the holomorphic factorization of the summability kernels.
Introduce the notation, $\lambda_{p}:=\lambda(e^{ip})=e^{\gamma(p)}$.  
We will prove the following theorem.
\begin{theorem}\label{V+V_lemma}
Near the curve, $$\mathcal{N}_{-}=\big\{u: 0<\Re u < 2K, \Im u=\pm K'\big\},$$
we have the identity
\begin{align}\label{VVV+}
\frac{\tilde{V}_{+}(u)}{z(u)^{2M+1}+1}=\frac{\tilde{V}_{-}(u)}{z(u)^{2M+1}-1}
\end{align} while near the curve, $$\mathcal{N}_{+}=\big\{u: 0<\Re u < 2K, \Im u=0\big\},$$
we have the identity
\begin{align}\label{VVV-}
\frac{\tilde{V}_{+}(u)}{z(u)^{2M+1}+1}=-\frac{\tilde{V}_{-}(u)}{z(u)^{2M+1}-1},
\end{align}
where \begin{align*}
\tilde{V}_{+}(u):=\sqrt{\frac{(\lambda(u)-\lambda_{\pi})}{(\lambda(u)-\lambda_{0})}}\frac{\prod_{p>0 \in \Sigma_{A}}(\lambda(u)-\lambda_{p})}{\prod_{p>0 \in \Sigma_{P}}(\lambda(u)-\lambda_{p})}
\end{align*}
\begin{align*}
\tilde{V}_{-}(u):=\sqrt{\frac{(\lambda(u)-\lambda^{-1}_{0})}{(\lambda(u)-\lambda^{-1}_{\pi})}}\frac{\prod_{p>0 \in \Sigma_{P}}(\lambda(u)-\lambda^{-1}_{p})}{\prod_{p>0 \in \Sigma_{A}}(\lambda(u)-\lambda^{-1}_{p})}.
\end{align*}
The square roots are here chosen to have positive real parts.
\end{theorem}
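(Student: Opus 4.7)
\medskip

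\noindent\textbf{Proof proposal.} The plan is to compute the ratio $\tilde{V}_+(u)/\tilde{V}_-(u)$ directly as an explicit rational function of $z(u)$, and then identify it with $\pm(z(u)^{2M+1}+1)/(z(u)^{2M+1}-1)$, with the sign fixed by the choice of cycle. The conversion from products in $\lambda$ to products in $z$ is driven by the spectral relation (\ref{coshgamma}), which couples the two coordinates on the spectral curve $\mathcal{M}$.

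First I would observe that $2\cosh\gamma(z)=\lambda+\lambda^{-1}$, so for any real $p$ equation (\ref{coshgamma}) gives
\begin{align*}
(\lambda+\lambda^{-1})-(\lambda_p+\lambda_p^{-1}) &= -s_1 s_2^*\bigl[(z+z^{-1})-(e^{ip}+e^{-ip})\bigr] \\
&= -s_1 s_2^* z^{-1}(z-e^{ip})(z-e^{-ip}).
\end{align*}
Multiplying by $\lambda$ yields the pairing identity
$$(\lambda-\lambda_p)(\lambda-\lambda_p^{-1}) = -\lambda\, s_1 s_2^*\, z^{-1}(z-e^{ip})(z-e^{-ip}),$$
valid for every $u\in\mathcal{M}$ and every real $p$. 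I would then pair each factor $(\lambda-\lambda_p)$ in $\tilde V_+$ with the factor $(\lambda-\lambda_p^{-1})$ that appears in $\tilde V_-^{-1}$, and similarly under the two square roots (where $p=0$ and $p=\pi$ are paired with themselves). Since $\Sigma_A\setminus\{\pi\}$ and $\Sigma_P\setminus\{0\}$ are closed under $p\mapsto-p$, every resulting pair contributes the common scalar $-\lambda s_1 s_2^* z^{-1}$, and the equal number of such pairs (namely $M$) in the numerator and denominator products causes this scalar to cancel.

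Second, I would collapse the surviving products using the cyclotomic factorizations
$$z^{2M+1}+1=(z+1)\prod_{p>0\in\Sigma_A}(z-e^{ip})(z-e^{-ip}),\qquad z^{2M+1}-1=(z-1)\prod_{p>0\in\Sigma_P}(z-e^{ip})(z-e^{-ip}).$$
The residual factors $(z+1)^2$ and $(z-1)^2$ produced by the pairing at $p=\pi$ and $p=0$ inside the square roots combine with the $(z+1)$ and $(z-1)$ absorbed into the cyclotomic identities, so that after all simplifications
$$\frac{\tilde V_+(u)}{\tilde V_-(u)} = \pm\,\frac{z(u)^{2M+1}+1}{z(u)^{2M+1}-1}.$$
Rearranging gives exactly the form of (\ref{VVV+}) and (\ref{VVV-}).

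The main obstacle is to pin down the correct sign on each of the two cycles. The manipulation above produces a residual factor of the form $\sqrt{((z+1)/(z-1))^2}$, which equals $\pm(z+1)/(z-1)$ depending on the branch. Since the square roots in $\tilde V_\pm$ are normalized to have positive real part, and the elliptic uniformization $z(u)=k\sn(u+ia)\sn(u-ia)$ maps $\mathcal{N}_+$ and $\mathcal{N}_-$ to opposite sheets of the double cover of $\mathbb{S}^1\subset\mathbb{P}^1$ by $\mathcal{M}$ (the sheet with $|\lambda|<1$ versus the sheet with $|\lambda|>1$), the branch of this square root reverses between the two cycles. I would verify which sign is which by evaluating the ratio $\tilde V_+/\tilde V_-$ at a convenient test point on each cycle, for instance at $u=K$ on $\mathcal{N}_+$ and at $u=K+iK'$ on $\mathcal{N}_-$, using standard values of $\sn$ and the identity $2\cosh\gamma(u)=\lambda(u)+\lambda(u)^{-1}$ to evaluate $\lambda(u)$ explicitly. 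This fixes the plus sign in (\ref{VVV+}) on $\mathcal{N}_-$ and the minus sign in (\ref{VVV-}) on $\mathcal{N}_+$, completing the proof.
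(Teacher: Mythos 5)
Your proposal is correct in substance and reaches the theorem by a partly different route than the paper. The paper first performs the shift $a\mapsto a+\tfrac{K'}{2}$, which interchanges $z$ and $\lambda$ on the spectral curve, and then evaluates $\prod_{p\in\Sigma_{A}}\bigl(c_{1}c_{2}-s_{1}\cos p-s_{2}\tfrac{z'+z'^{-1}}{2}\bigr)$ and the analogous product over $\Sigma_{P}$ in two ways, obtaining the squared identity
$\bigl(\tfrac{z(u)^{2M+1}+1}{z(u)^{2M+1}-1}\bigr)^{2}=\prod_{p\in\Sigma_{A}}(\lambda(u)-\lambda_{p})(\lambda(u)-\lambda_{p}^{-1})\big/\prod_{p\in\Sigma_{P}}(\lambda(u)-\lambda_{p})(\lambda(u)-\lambda_{p}^{-1})$;
you obtain the same identity factor by factor from the pairing relation $(\lambda-\lambda_{p})(\lambda-\lambda_{p}^{-1})=-\lambda s_{1}s_{2}^{*}z^{-1}(z-e^{ip})(z-e^{-ip})$, which indeed follows directly from the curve equation, together with the cyclotomic factorizations of $z^{2M+1}\pm1$ and the count of $M$ cancelling scalars on each side. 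Your derivation is more elementary (no modular shift of $a$ is needed), while the paper's route exhibits the computation as living on the $z\leftrightarrow\lambda$--swapped copy of the curve. The sign determination is identical in both arguments: the ratio is fixed by evaluating at one point of each cycle ($u=K$ on $\mathcal{N}_{+}$, $u=K\pm iK'$ on $\mathcal{N}_{-}$), and this is where the remaining concrete work in your plan sits; the paper carries it out with the addition formulas and $\alpha_{2}=-k^{-1}\ns^{2}(ia)$, showing $\lambda=1$ at both test points, $z(K\pm iK')>1$ and $z(K)<1$ for $T<T_{C}$, so that $\tilde V_{\pm}>0$ there and the signs follow. One caution: your heuristic that $\mathcal{N}_{\pm}$ are the two sheets with $|\lambda|<1$ versus $|\lambda|>1$ is not accurate --- on both $\mathcal{N}_{+}$ and $\mathcal{N}_{-}$ one has $|\lambda(u)|=1$ while $z(u)$ is real, with $z<1$ on $\mathcal{N}_{+}$ and $z>1$ on $\mathcal{N}_{-}$ (it is the cycles $\mathcal{M}_{\pm}$ at $\Im u=\pm\tfrac{K'}{2}$ that carry $\lambda\lessgtr1$); this misstatement is not load-bearing, since the sign is fixed by the test-point evaluation rather than by that description.
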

Before we give the proof we explain our principal concern with Theorem \ref{V+V_lemma}.
Recall from Appendix A that the two cycles $\mathcal{M}_{+}$ and $\mathcal{M}_{-}$ on the spectral curve $\mathcal{M}$, where $\lambda<1$ and $\lambda>1$ respectively, are given by
$$\mathcal{M}_{\pm}=\{(z,\lambda)=(e^{i\theta},e^{\mp\gamma(\theta)})\}.$$ 
 Recall from Appendix \ref{Appendix C} that in the uniformization parameter $u$, the cycles $\mathcal{M}_{\pm}$ are located at the following positions in the periodic parallelogram,
$$\mathcal{M}_{\pm}=\big\{u:0<\Re u < 2K, \Im u=\pm \tfrac{K'}{2}\big\}.$$ 
We observe that in a neighborhood of the cycle $\mathcal{M}_{+}$, the function $\tilde{V}_{+}(u)$ is a multiple of $V_{+}(u)$ in (\ref{V+}) from the Bugrij-Lisovyy formula  while $\tilde{V}_{-}(u)$ is a multiple of $V_{-}(u)$ in (\ref{V-}) in a neighborhood of the cycle  $\mathcal{M}_{-}$.
The factorization problem solved by $\tilde{V}_{\pm}$ makes it possible to transform sums over the periodic points to the anti-periodic points.
A related factorization in the scaling limit was employed in \cite{Lisovyy} to compute a relevant Green function for the Dirac operator on the cylinder. Subsequently, in \cite{PH10} the technique in \cite{Lisovyy} was used to compute the matrix elements of the spin operator for the periodic Ising model in the scaling limit. It is expected that this technique can be extended to the finite periodic lattice to provide an alternative proof of the Bugrij-Lisovyy conjecture. If one uses formulas for the spin matrix elements on a finite periodic lattice, one can control the convergence of the scaling limit in a much simpler way. This convergence result is given in \cite{Hy09} assuming the Bugrij-Lisovyy conjecture  \cite{BL03}.

We now return to the proof of Theorem \ref{V+V_lemma}.
\begin{proof}
Recall from Appendix \ref{Appendix C} that $0<2a<K'$ is defined by
$s_{1}=-i\sn(2ia).$ A simple calculation shows that a substitution of $a$ with $a+\frac{K'}{2}$ in the elliptic
parametrization of the Boltzmann weights, interchanges $z$ and $\lambda$ in the spectral curve (\ref{spectral}), and sends $s_{1}$ to $-s_{2}$, $s_{2}$ to $-s_{1}$, and $c_{1}c_{2}$ to $-c_{1}c_{2}$.
We therefore introduce the relation
$$ia':=\frac{iK'}{2}+ia.$$
Then 
\begin{align}\label{zu}
\lambda':
= \lambda(u+\tfrac{iK'}{2},a')=z(u,a)=k\sn(u+ia)\sn(u-ia)
\end{align} and
\begin{align}\label{z'}z':=z(u+\tfrac{iK'}{2},a')=\lambda(u,a)=\frac{\sn(u-ia)}{\sn(u+ia)}
\end{align} for $k=\frac{1}{s_{1}s_{2}}$.
Interchanging $\lambda$ and $z$ through the substitution, $a\mapsto a+\frac{iK'}{2}$, the spectral curve given in (\ref{spectral}) becomes
\begin{align}\label{spectralcurve}
s_{1}\frac{\lambda'+\lambda'^{-1}}{2}+s_{2}\frac{z'+z'^{-1}}{2}=c_{1}c_{2}.
\end{align}
Using the spectral curve in (\ref{spectralcurve}), we can write
\begin{align}\label{spectral1}
\prod_{p \in \Sigma_{P}}(c_{1}c_{2}-s_{1}\cos p-s_{2}\tfrac{z'+z'^{-1}}{2})& \nonumber=\prod_{p \in \Sigma_{p}}s_{1}(\tfrac{\lambda'+\lambda'^{-1}}{2})-s_{1}\cos p\\
& \nonumber= (\tfrac{s_{1}}{2})^{2M+1}\prod_{p \in \Sigma_{p}}((\lambda'-e^{ip})(1-\lambda'^{-1}e^{-ip}))\\
& = (\tfrac{s_{1}}{2})^{2M+1}({\lambda'}^{2M+1}-1)^{2}{\lambda'}^{-(2M+1)}
\end{align}
and similarly,
\begin{align}\label{spectral2}
\prod_{p \in \Sigma_{A}}(c_{1}c_{2}-s_{1}\cos p-s_{2}\tfrac{z'+z'^{-1}}{2})&= (\tfrac{s_{1}}{2})^{2M+1}({\lambda'}^{2M+1}+1)^{2}{\lambda'}^{-(2M+1)}.
\end{align}
Then from (\ref{zu}), (\ref{spectral1}) and (\ref{spectral2}) we obtain
\begin{align}\label{spectralnullvector}
&\nonumber\mathrel{\phantom{=}}\bigg(\frac{z(u)^{2M+1}+1}{z(u)^{2M+1}-1}\bigg)^{2}\\
&\nonumber=\bigg(\frac{{\lambda'}^{2M+1}+1}{{\lambda'}^{2M+1}-1}\bigg)^{2}\\
& =\frac{\prod_{p \in \Sigma_{A}}(c_{1}c_{2}-s_{1}\cos p-s_{2}\tfrac{z'+z'^{-1}}{2})}{\prod_{p \in \Sigma_{P}}(c_{1}c_{2}-s_{1}\cos p-s_{2}\tfrac{z'+z'^{-1}}{2})}.
\end{align}
Now we rewrite the factor in (\ref{spectralnullvector}) using (\ref{z'}) and the spectral curve in (\ref{spectral}).
It  becomes
\begin{align*} 
&\mathrel{\phantom{=}}\frac{\prod_{p \in \Sigma_{A}}(\lambda_{p}+\lambda^{-1}_{p}-(z'+z'^{-1}))}{\prod_{p \in \Sigma_{P}}(\lambda_{p}+\lambda^{-1}_{p}-(z'+z'^{-1}))}\\
& = \frac{\prod_{p \in \Sigma_{A}}(\lambda_{p}-z')(1-\lambda^{-1}_{p}z'^{-1})}{\prod_{p \in \Sigma_{P}}(\lambda_{p}-z')(1-\lambda^{-1}_{p}z'^{-1})}\\
& =  \frac{\prod_{p \in \Sigma_{A}}(\lambda(u)-\lambda_{p})(\lambda(u)-\lambda^{-1}_{p})}{\prod_{p \in \Sigma_{P}}(\lambda(u)-\lambda_{p})(\lambda(u)-\lambda^{-1}_{p})}.
\end{align*}
Since $\lambda(p)=\lambda(-p)$ for $p\neq 0,\pi$, the right hand side of the equation above can be written
\begin{align*} &\mathrel{\phantom{=}}\frac{(\lambda(u)-\lambda_{\pi})(\lambda(u)-\lambda^{-1}_{\pi})}{(\lambda(u)-\lambda_{0})(\lambda(u)-\lambda^{-1}_{0})}\frac{\prod_{p>0 \in \Sigma_{A}}(\lambda(u)-\lambda_{p})^{2}(\lambda(u)-\lambda^{-1}_{p})^{2}}{\prod_{p>0 \in \Sigma_{P}}(\lambda(u)-\lambda_{p})^{2}(\lambda(u)-\lambda^{-1}_{p})^{2}}.
\end{align*}
Now define
\begin{align*}
\tilde{V}_{+}(u):=\sqrt{\frac{(\lambda(u)-\lambda_{\pi})}{(\lambda(u)-\lambda_{0})}}\frac{\prod_{p>0 \in \Sigma_{A}}(\lambda(u)-\lambda_{p})}{\prod_{p>0 \in \Sigma_{P}}(\lambda(u)-\lambda_{p})},
\end{align*}
\begin{align*}
\tilde{V}_{-}(u):=\sqrt{\frac{(\lambda(u)-\lambda^{-1}_{0})}{(\lambda(u)-\lambda^{-1}_{\pi})}}\frac{\prod_{p>0 \in \Sigma_{P}}(\lambda(u)-\lambda^{-1}_{p})}{\prod_{p>0 \in \Sigma_{A}}(\lambda(u)-\lambda^{-1}_{p})}
\end{align*}
such that we have \begin{align}\label{VVVV}
\bigg(\frac{\tilde{V}_{+}(u)}{z(u)^{2M+1}+1}\bigg)^{2}=\bigg(\frac{\tilde{V}_{-}(u)}{z(u)^{2M+1}-1}\bigg)^{2}.
\end{align}
In order to figure out the correct signs near the curves $\mathcal{N}_{\pm}$ when we take the square root in (\ref{VVVV}), it is enough to check the identities in (\ref{VVV+}) and (\ref{VVV-}) at one point. We will check (\ref{VVV+}) by letting $u=K\pm iK'$ which are points on the curve $\mathcal{N}_{-}$. 
Using the identities (\ref{translation}), the addition formula (\ref{additionsn}), and Theorem \ref{palmer}, we obtain
\begin{align*}
&\phantom{=}z(K\pm iK')=k\sn(K\pm iK'+ia)\sn(K\pm iK'-ia)\\
& = k\bigg(\frac{k^{-1}\ns(K)\cn(ia)\dn(ia)-k^{-1}\ds(K)\cs(K)\sn(ia)}{1-\ns^{2}(K)\sn^{2}(ia)}\bigg)\times\\
& \times \bigg(\frac{k^{-1}\ns(K)\cn(ia)\dn(ia)+k^{-1}\ds(K)\cs(K)\sn(ia)}{1-\ns^{2}(K)\sn^{2}(ia)}\bigg).
\end{align*}
Here we also used the fact that $\sn(u)$ is an odd function of $u$ while $\cn(u)$ and $\dn(u)$ are even functions of $u$, \cite[p. 493]{WW82}.
Using the fact that $\sn(K)=1$, $\cn(K)=0$ \cite[p. 499]{WW82}, the identities in (\ref{cddnsn}) and the fact that $\alpha_{2}=-k^{-1}\ns^{2}(ia)$  \cite[p. 338]{palmer1}, we obtain
\begin{align*}
z(K\pm iK')&=\frac{k^{-1}\dn^{2}(ia)}{1-\sn^{2}(ia)}\\
& = \frac{k^{-1}(1-k^{2}\sn^{2}(ia))}{1-\sn^{2}(ia)}\\
& = \frac{k^{-1}(1+k\alpha^{-1}_{2})}{1+k^{-1}\alpha^{-1}_{2}}
\end{align*} which can be checked to be greater than one for $T<T_{C}$. Thus, $\frac{1}{z(K\pm iK')^{2M+1}-1}$ is positive. A similar calculation gives,
$\lambda(K\pm iK')=1$, and hence $\tilde{V}_{+}(u)$ and $\tilde{V}_{-}(u)$ are both positive for $u=K\pm iK'$. Thus the identity in (\ref{VVV+}) follows. Now we prove (\ref{VVV-}) by letting $u=K$ which is a point on the curve $\mathcal{N}_{+}$. Then it can be checked that 
$$z(K)=\frac{k\cn^{2}(ia)}{\dn^{2}(ia)}=z(K\pm iK')^{-1}<1,$$ and hence $\frac{1}{z(K)^{2M+1}-1}$ is negative. We also have $\lambda(K)=1$ so that $\tilde{V}_{\pm}(u)$ are positive for $u=K$. Hence (\ref{VVV-}) follows, and the theorem is proved.

\end{proof}

\thanks{The author would like to thank Professor John Palmer for his advice and support of this work. }

\appendix

\section{Grassmann Algebra and Fock Representations of the Clifford Algebra}\label{Appendix A}
In this section, we introduce the Fock representations of the Clifford algebra. We follow \cite{palmer1} closely and refer the reader to this work for more details.
Assume $W$ is a finite even-dimensional complex vector space with a distinguished nondegenerate complex bilinear form $(\cdot,\cdot)$.  
A subspace $V$ of $W$ is isotropic if $(x,y)=0$ for all $x,y\in V$.
We are interested in a decomposing of the space $W$ into two isotropic subspaces $W_{\pm}$ such that
\begin{align}\label{W}
W=W_{+}\oplus W_{-}.
\end{align} Such a splitting is called an isotropic splitting or a polarization.
We parametrize each isotropic splitting by an operator $Q$ defined by
\begin{align}\label{QAPx}
Qx= \left\{ \begin{array}{ll}
         x & \mbox{if $x \in W_{+}$};\\
        -x & \mbox{if $x \in W_{-}$}.\end{array} \right.
        \end{align}
Define $Q_{\pm}:=\tfrac{1}{2}(I\pm Q).$ Since $Q^{2}=I$, we observe that $Q_{\pm}$ are the projections onto the $\pm 1$ eigenspaces for $Q$. We notice that $Q_{+}+Q_{-}=I$ and $Q_{+}Q_{-}=0$. If we define $W_{\pm}:=Q_{\pm}W$, then the space $W$ is the direct sum given in (\ref{W}).
An operator $Q$ that is skew symmetric and satisfies $Q^{2}=I$ is called a polarization.
Let $S_{k}$ denote the group of permutations on the set $\{1,2,...,k\}$.
The linear operator defined by
$$W^{\otimes k}\ni w\mapsto \alt(w) :=\frac{1}{k !}\sum_{\sigma \in S_{k}}\sgn(\sigma)w_{\sigma_{1}}\otimes \cdot \cdot \cdot\otimes w_{\sigma_{k}}$$ is a projection from $W^{\otimes k}$ onto $\Alt^{k}(W)$, where $\Alt^{k}(W)$ is the space of alternating $k$ tensors over $W$.
The wedge product of $v\in \Alt^{k}(W)$ and $w\in \Alt^{l}(W)$ is here defined by
$$v\wedge w:=\frac{\sqrt{(k+l)!}}{\sqrt{k!}\sqrt{l!}}\alt(v \otimes w)\in \Alt^{k+l}(W)$$ and it follows that for $v_{i}\in W$ and $i=1,...,k$,
$$v_{1}\wedge v_{2}\wedge\cdot \cdot \cdot \wedge v_{k}=\frac{1}{\sqrt{k!}}\sum_{\sigma \in S_{k}}\sgn(\sigma)v_{\sigma_{1}}\otimes v_{\sigma_{2}}\otimes\cdot\cdot\cdot\otimes v_{\sigma_{k}}$$ (see \cite{palmer1} and \cite{Sp65}).
The Clifford algebra $\Cliff(W)$ of $W$ is the associative algebra with multiplicative unit $e$, generated by elements $x\in W$ that satisfy the Clifford relations,
\begin{align}\label{Clifford}
xy+yx=(x,y)e\quad \mbox{for}\quad x,y \in W.
\end{align}
For each polarization $Q$ of the isotropic splitting of the space $W$, there is a Fock representation $F_{Q}$ of the Clifford algebra $\Cliff(W)$. This representation acts on the alternating tensor algebra,
$$\Alt(W_{+}):=\bigoplus_{k=0}^{n}\Alt^{k}(W_{+}),$$ where $\Alt^{0}(W_{+})=\mathbb{C}$ and $n=\dim(W_{+})$.
The Fock representation is defined as
$$W\ni x \mapsto F_{Q}(x):=c(x_{+})+a(x_{-})$$ for $x_{\pm}=Q_{\pm}x$. Here $W_{-}$ is identified with the dual $W^{*}_{+}$ via the nondegenerate complex bilinear form $W_{+}\ni x_{+}\mapsto (x_{+},x_{-})$ for $x_{-}\in W_{-}$.
The creation operator $c(x_{+})$ associated with $x_{+}\in W_{+}$ acts on $\Alt^{k}(W_{+})$ in the following way, \begin{align}\label{creationop} \Alt^{k}(W_{+})\ni v \mapsto c(x_{+})v=x_{+}\wedge v \in \Alt^{k+1}(W_{+}).\end{align} The annihilation operator $a(x_{-})$ associated with $x_{-}\in W_{-}$ is defined as 
$a(x_{-}):=c^{\tau}(x_{-})$, where $c^{\tau}(x_{-})$ is the transpose of $c(x_{-})$ with respect to the complex bilinear form $(\cdot,\cdot).$ It is given by
\begin{align}\label{annihilationop} a({x_{-}})v=\sum_{j=1}^{k}(-1)^{j-1}({x_{-}},v_{j})v_{1}\wedge \cdot \cdot \cdot\wedge \hat{v}_{j}\wedge \cdot \cdot \cdot\wedge v_{k}\end{align} for $v=v_{1}\wedge \cdot \cdot \cdot\wedge v_{j}\wedge \cdot \cdot \cdot\wedge v_{k}\in \Alt^{k}(W_{+}),$ and
where $\hat{v}_{j}$ signifies that the factor $v_{j}$ is omitted from $v$.
It can be checked that the creation and annihilation operators satisfy the anti-commutation relations,
\begin{equation}\label{anticommutingA1}
\begin{array}{lccc}
c(x_{+})c(y_{+})+ c(y_{+})c(x_{+})=0,\\
a(x_{-})a(y_{-})+ a(y_{-})a(x_{-})=0,\\
a(x_{-})c(y_{+})+ c(y_{+})a(x_{-})=(x_{-},y_{+})I
\end{array}
\end{equation}
for $x_{\pm}, y_{\pm}\in W_{\pm}$.
Since $W_{\pm}$ are isotropic subspaces and by using the anti-commutation relations given in (\ref{anticommutingA1}), it is not hard to see that $F_{Q}$ satisfies the generator relations for the Clifford algebra,
\begin{align}\label{Fock1}
F_{Q}(x)F_{Q}(y)+F_{Q}(y)F_{Q}(x)=(x,y)I\quad \mbox{for} \quad x,y \in W.
\end{align} 
When the polarization is understood, we will drop the subscript $Q$ and write
$F:=F_{Q}$.
We write
$$0: =1\oplus 0\oplus \cdot \cdot \cdot\oplus 0$$ for the vacuum vector in $\Alt(W_{+})$. The vacuum vector is defined to be the unique vector that is annihilated by all the elements in $W_{-}$ in the $F_{Q}$ representation of the Clifford algebra. 

Define the Clifford group $\mathcal{G}$ as the group of invertible elements $g$ in the Clifford algebra $\Cliff(W)$ that satisfy
\begin{align*}
gvg^{-1}=T(g)v\quad \mbox{for}  \quad v \in W \subseteq \Cliff(W)\end{align*}
for some linear map $T(g)$ on $W$.
It follows from this equation that $T$ is complex orthogonal, i.e.
$$(T(g)v,T(g)w)=(v,w)\quad \mbox{for}\quad v,w \in W.$$

\section{Berezin Integral Representation for the Matrix Elements}\label{Appendix B}
In this section we introduce a representation of the creation and annihilation operators which is an analog of the holomorphic representations as given in Faddeev and Slavnov \cite{FS80}.
We will use this representation to write the matrix elements for the Fock representation of an element $g$ in the Clifford group as Pfaffians of a skew symmetric matrix whose entries depend on the matrix elements of the induced rotation associated with $g$.

Assume $W$ is a finite even-dimensional complex vector space with a Hermitian inner product $\langle \cdot, \cdot \rangle$ and a distinguished nondegenerate complex bilinear form $(\cdot, \cdot)$ defined by
$$(u,v)=\langle \bar{u},v \rangle \quad \mbox{for}\quad u,v \in W,$$
where $u\mapsto \bar{u}$ is a conjugation. We consider a Hermitian polarization,
$$W=W_{+}\oplus W_{-},$$ where $W_{\pm}$ are isotropic subspaces of $W$ as defined in Appendix A.
Let $\{e^{+}_{k}\}$ and $\{e^{-}_{k}\}$  denote  orthonormal bases for $W_{+}$ and $W_{-}$ with respect to the Hermitian symmetric inner product, and suppose that $\{e^{+}_{k}\}$ and $\{e^{-}_{k}\}$ are dual with respect to the complex bilinear form, i.e. $( e^{+}_{k},e^{-}_{l})=\delta_{kl}$.
Suppose that $W$ has dimension $2M$ and define
\begin{align}\label{ebasis}e^{\pm}_{I}:=e^{\pm}_{I_{1}}\wedge \cdot \cdot \cdot\wedge e^{\pm}_{I_{k}}\quad \mbox{for}\quad 1\leq I_{1}< \cdot \cdot \cdot< I_{k}\leq M.
\end{align}
The set $\{e^{\pm}_{I}\}$ is then an orthonormal basis for $\Alt(W_{\pm})$, where we define $e^{+}_{\emptyset}=1$.
Let $\mathcal{P}$ denote the collection of subsets of $\{1,.... ,M\}$.
For an element $J$ in $\mathcal{P}$, we write
$$J=\{J_{1}, J_{2},...,J_{k}\}\quad \mbox{with}\quad J_{1}<J_{2}<\cdot \cdot \cdot<J_{k}.$$
We write $\#J=k$ for the number of elements in $J$. If $R$ is a $2M\times 2M$ matrix, we let $R_{I,J}$ denote the $(\#I+\#J) \times (\#I+\#J)$ submatrix of $R$ made from the rows and columns of $R$ indexed by $I$ and $J$ respectively.
An element in $\Alt(W_{+})$ is given by
$$G(e^{+}):=\sum_{I\in \mathcal{P}}G_{I}e^{+}_{I},$$ where the map, $\mathcal{P}\ni I \to G_{I}\in \mathbb{C}$.
Here $G(e^{+})$ can be thought of as a ``polynomial'' in the elements $e^{+}_{I}$ in the exterior algebra.
For $1\leq i\leq M$, the creation operator $c(\cdot)$ acts on $\Alt^{k}(W_{+})$ as
$$\Alt^{k}(W_{+})\ni v \mapsto c(e^{+}_{i})v=e^{+}_{i}\wedge v \in \Alt^{k+1}(W_{+})$$ for $e^{+}_{i}\in W_{+}$. 
For $1\leq i\leq M$, the annihilation operator,
 $a(e^{+}_{i}):=\frac{\partial}{\partial e^{+}_{i}}$, is analogous to a ``derivative'', and is the linear map
$$\frac{\partial}{\partial e^{+}_{i}}:\Alt(W_{+})\to \Alt(W_{+})$$ defined as follows:
If the monomial, $X:={e^{+}_{{i}_{1}}}\wedge {e^{+}_{{i_{2}}}}\wedge \cdot \cdot \cdot\wedge e^{+}_{{i}_{n}}$, contains exactly one factor $e^{+}_{i}$ then
$$\frac{\partial}{\partial e^{+}_{i}}({e^{+}_{{i}_{1}}}\wedge {e^{+}_{{i_{2}}}}\wedge \cdot \cdot \cdot\wedge {e^{+}_{{i}_{n}}})=\pm{e^{+}_{{i}_{1}}}\wedge \cdot \cdot \cdot\wedge {\hat{e}^{+}_{{i}}}\wedge \cdot \cdot \cdot\wedge{e^{+}_{{i}_{n}}},$$
where ${\hat{e}^{+}_{i}}$ signifies that the factor ${e^{+}_{i}}$ is omitted from $X$, and the plus or minus sign is determined by number of interchanges the operator $\frac{\partial}{\partial e^{+}_{i}}$ has to make from the left before it contracts the factor ${e^{+}_{i}}$. An even number of interchanges gives a positive sign and an odd number gives a minus sign. 
If the monomial does not contain any factor $e^{+}_{i}$ or if it contains more than one factor of $e^{+}_{i}$, then
$\frac{\partial}{\partial e^{+}_{i}}X=0.$ 
We define Berezin integrals as linear functionals in the following way, $$\int e^{+}de^{+}=1, \quad \int e^{-}de^{-}=1, \quad \int de^{+}=0, \quad \int de^{-}=0,$$
where we assume that $de^{-}$ and $de^{+}$ anti-commute with each other as well as with $e^{-}$ and $e^{+}$. 
An element in the Grassmann algebra $\Alt(W)$ is given by
$$G(e^{+},e^{-})=G_{00}+G_{01}e^{-}_{1}+G_{10}e^{+}_{1}+G_{11}e^{-}_{1}\wedge e^{+}_{1}+...+G_{1,...,M,M,...,1}e^{-}_{1}\wedge..\wedge e^{-}_{M}\wedge e^{+}_{M}\wedge...\wedge e^{+}_{1},$$
where $G_{00}$, $G_{01}$, $G_{10}$ ,..., and $G_{1,...,M,M,...,1}$ are complex numbers \cite{FS80}.
The integral of $G(e^{+}, e^{-})$ is then defined as
$$\int G(e^{+}, e^{-})\prod_{k=1}^{M}de^{+}_{k}de^{-}_{k}=G_{1....M, M...1}.$$
For $G$ and $H$ in $\Alt(W_{+})$, we have  \cite[p. 53]{FS80}
\begin{align}\label{innerpb}
\langle G,H \rangle=\int \bar{G}(e^{+})H(e^{+})e^{-\sum_{k=1}^{M}e^{+}_{k}\wedge e^{-}_{k}}\prod_{k=1}^{M}de^{+}_{k}de^{-}_{k},
\end{align} where
$$\bar{G}(e^{+})=\sum_{I\in \mathcal{P}}\bar{G}_{I}e^{-}_{I}$$ and the conjugation is defined as
$$\overline{G_{I}e^{+}_{M}\wedge \cdot \cdot \cdot \wedge e^{+}_{1}}=\bar{G}_{I}e^{-}_{1}\wedge \cdot \cdot \cdot \wedge e^{-}_{M}.$$
It is here understood that $e^{\sum e^{+}_{k}\wedge e^{-}_{k}}$ is the power series in the exterior algebra,
$$\sum_{k=0}^{M}\bigg(\frac{\sum e^{+}_{k}\wedge e^{-}_{k}}{k!}\bigg)^{k}.$$
We are in particular interested in two polarizations,
$$W=W^{A}_{+}\oplus W^{A}_{-}\quad \mbox{and}\quad W=W^{P}_{+}\oplus W^{P}_{-}.$$ Here $W^{P}_{\pm}$ and $W^{A}_{\pm}$ are isotropic subspaces of $W$ defined by $$W^{A}_{\pm}=Q^{A}_{\pm}W \quad \mbox{and}\quad W^{P}_{\pm}=Q^{P}_{\pm}W,$$ where
$$Q^{A}_{\pm}:=\tfrac{1}{2}(I\pm Q^{A}), \quad Q^{P}_{\pm}:=\tfrac{1}{2}(I\pm Q^{P}),$$ and where $Q^{A}$ and $Q^{P}$ are polarizations as defined in (\ref{QAPx}).
Let $F^{P}$ and $F^{A}$ denote the Fock representations associated with the Clifford algebra $\Cliff(W)$ acting on $\Alt(W^{P}_{+})$ and  $\Alt(W^{A}_{+})$ respectively. 
We consider a map
$$g:\Alt(W^{P}_{+})\to \Alt(W^{A}_{+}),$$ which satisfies the intertwining relation $$gF^{P}(x)=F^{A}(T(g)x)g$$ for $ x\in W$, and where $T:=T(g)$ is the induced rotation associated with $g$. 
We write
\begin{align*}
T(g):=\left( \begin{array}{cc}
A & B  \\
C & D  \\
\end{array} \right)
\end{align*} for the matrix of the induced rotation for $g$, where $T$ is a map
 $W^{P}_{+}\oplus W^{P}_{-}\to W^{A}_{+}\oplus W^{A}_{-}$.
Since $T(g)$ is a complex orthogonal matrix, we have the relation
$$T(g)^{\tau}T(g)=I,$$ where
$$T^{\tau}=\left( \begin{array}{cc}
D^{\tau} & B^{\tau}  \\
C^{\tau} & A^{\tau}  \\
\end{array} \right).$$
This relation implies the following identities,
\begin{equation}\label{eq6}
\begin{array}{clcc}
D^{\tau}A+B^{\tau}C^{\tau} = I, \quad D^{\tau}B+B^{\tau}D=0\\
C^{\tau}A+A^{\tau}C = 0, \quad C^{\tau}B+A^{\tau}D =I
\end{array}
\end{equation}
when $D$ is invertible.
Introduce the anti-commuting ``dummy'' variables
$\alpha^{\pm}_{i}\in W^{P}_{\pm}$ which anti-commute with $e^{\pm}_{i}$ as well. The action of the operator $g$ on $G(\alpha^{+})\in \Alt(W^{P}_{+})$ is given by
 \cite[p. 55]{FS80}
\begin{align*}
(g G)(e^{+})=\int g(e^{+}, \alpha^{-})G(\alpha^{+})e^{-\sum_{k=1}^{M}\alpha^{+}_{k}\wedge\alpha^{-}_{k}}\prod_{k=1}^{M}d\alpha^{+}_{k}d\alpha^{-}_{k}.
\end{align*}
Let $\{e^{\pm}_{i}\}$ and $\{\alpha^{\pm}_{j}\}$ denote orthonormal bases for $W^{A}_{\pm}$ and $W^{P}_{\pm}$ respectively and for $I,J \in \mathcal{P}$ define $e^{\pm}_{I}$ and $\alpha^{\pm}_{J}$ as in (\ref{ebasis}).
We first prove the following lemma.
\begin{lemma}\label{LemmaB}
Let $\tilde{g}(e^{+},\alpha^{-}):=e^{\mathcal{R}}$, where
$$\mathcal{R}=\sum_{l,m}[(\tfrac{1}{2}a_{lm}e^{+}_{l}\wedge e^{+}_{m})+(b_{lm}e^{+}_{l}\wedge \alpha^{-}_{m})+(\tfrac{1}{2}c_{lm}\alpha^{-}_{l}\wedge \alpha^{-}_{m})]$$ and $a=BD^{-1}$, $b=D^{-\tau}$ and $c=D^{-1}C$, where 
$$T=\left( \begin{array}{cc}
A & B  \\
C & D  \\
\end{array} \right)$$ is the matrix of a complex orthogonal map, $W^{P}_{+}\oplus W^{P}_{-}\to W^{A}_{+}\oplus W^{A}_{-}$.
Then 
\begin{align}\label{gtilde}
\tilde{g}G(e^{+})=\int \tilde{g}(e^{+},\alpha^{-})G(\alpha^{+})e^{-\sum_{k=1}^{M}\alpha^{+}_{k}\wedge\alpha^{-}_{k}}\prod_{k=1}^{M}d\alpha^{+}_{k}d\alpha^{-}_{k}
\end{align} defines a linear map which satisfies the intertwining relation,:
\begin{align}\label{FPFAL}
\tilde{g}F^{P}(x)G=F^{A}(T x)\tilde{g}G
\end{align}
for $G\in \Alt(W^{P}_{+})$ and $x \in W$.
\end{lemma}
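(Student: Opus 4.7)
The plan is to verify the intertwining relation (\ref{FPFAL}) separately on the two summands of the polarization $W = W^P_+ \oplus W^P_-$, and to use Berezin integration by parts to convert the action of creation/annihilation operators on $G$ inside the integral into modifications of the kernel $e^{\mathcal{R}}$. By linearity it suffices to test on basis vectors: for $x = \alpha^+_k \in W^P_+$ we have $F^P(x) = c(\alpha^+_k)$ and $F^A(Tx) = c(A\alpha^+_k) + a(C\alpha^+_k)$, while for $x = \alpha^-_k \in W^P_-$ we have $F^P(x) = a(\alpha^-_k) = \partial/\partial\alpha^+_k$ and $F^A(Tx) = c(B\alpha^-_k) + a(D\alpha^-_k)$.

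For the creation side I would start from the key Berezin identity
$$\alpha^+_k\, e^{-\sum_j \alpha^+_j \wedge \alpha^-_j} = \tfrac{\partial}{\partial \alpha^-_k}\bigl(e^{-\sum_j \alpha^+_j \wedge \alpha^-_j}\bigr),$$
insert it into $\tilde g(\alpha^+_k \wedge G)(e^+)$, and integrate by parts in $\alpha^-_k$ to move the derivative onto $e^{\mathcal{R}}$. Because $\mathcal{R}$ is quadratic (hence even), we get $\partial_{\alpha^-_k} e^{\mathcal{R}} = (\partial_{\alpha^-_k}\mathcal{R})\, e^{\mathcal{R}}$, and $\partial_{\alpha^-_k}\mathcal{R}$ is a linear combination of the $e^+_l$ (with coefficients from $b = D^{-\tau}$) and of the $\alpha^-_m$ (with coefficients from $c = D^{-1}C$). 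The first piece factors out of the integral as multiplication by an element of $W^A_+$, giving a creation operator on $\tilde g G$; for the second piece I would apply the same $\alpha^-_m$-to-$\partial_{\alpha^+_m}$ rewrite once more and integrate by parts, producing an annihilation operator on $\tilde g G$.

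For the annihilation side the computation is similar but simpler: $\partial/\partial\alpha^+_k$ passes through $e^{\mathcal{R}}$ untouched (since $\alpha^+$ does not appear in $\mathcal{R}$), so the only effect of integration by parts is that $\partial_{\alpha^+_k}$ acts on $e^{-\sum \alpha^+ \alpha^-}$ to produce an $\alpha^-_k$ under the integral, which by the same two-step mechanism as above generates a creation contribution (from the $b\,e^+ \wedge \alpha^-$ term) and an annihilation contribution (from the $\tfrac{1}{2}c\,\alpha^- \wedge \alpha^-$ term). Matching coefficients then requires that the operator-valued factors produced in this way agree term by term with $c(A\alpha^+_k) + a(C\alpha^+_k)$ and $c(B\alpha^-_k) + a(D\alpha^-_k)$ respectively, and this is precisely where the orthogonality of $T$ enters: one uses the identities in (\ref{eq6}) — i.e.\ $D^\tau A + B^\tau C^\tau = I$, $D^\tau B + B^\tau D = 0$, $C^\tau A + A^\tau C = 0$, $C^\tau B + A^\tau D = I$ — together with the definitions $a = BD^{-1}$, $b = D^{-\tau}$, $c = D^{-1}C$ to match each block.

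I expect the main obstacle to be not the conceptual scheme, which is standard holomorphic (Fock-Grassmann) calculus, but the careful sign bookkeeping endemic to the Berezin formalism: every displacement of $\partial/\partial\alpha^\pm$ past a Grassmann monomial of given parity carries a sign, and the $\tfrac{1}{2}$ factors in the $a$ and $c$ terms of $\mathcal{R}$ must be tracked correctly when differentiating. Once those signs are pinned down, the final algebraic step — rewriting, for instance, the combination $A - BD^{-1}C$ that appears naturally on the LHS as $D^{-\tau}$, and similarly simplifying the other blocks — is a direct consequence of the four orthogonality identities of (\ref{eq6}), which makes it clear why the specific choices $a = BD^{-1}$, $b = D^{-\tau}$, $c = D^{-1}C$ are forced by the requirement that $\tilde g$ intertwines the two Fock representations through $T$.
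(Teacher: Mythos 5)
Your overall skeleton --- testing (\ref{FPFAL}) on the generators $\alpha^{\pm}_{k}$, using the Berezin identity $\alpha^{+}_{k}\,e^{-\sum_{j}\alpha^{+}_{j}\wedge\alpha^{-}_{j}}=\partial_{\alpha^{-}_{k}}\bigl(e^{-\sum_{j}\alpha^{+}_{j}\wedge\alpha^{-}_{j}}\bigr)$, integrating by parts, and closing the algebra with the orthogonality identities (\ref{eq6}) --- is the same as the paper's, and your first piece on the creation side (the $e^{+}_{l}$-multiplication coming from the $b$-term of $\mathcal{R}$, which pulls out of the integral as a creation operator on $\tilde{g}G$) is correct. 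The gap is in how you dispose of the terms in which a factor $\alpha^{-}_{m}$ is left under the integral. Your ``same two-step mechanism'' --- rewriting $\alpha^{-}_{m}$ times the Gaussian weight as $\mp\partial_{\alpha^{+}_{m}}$ of the weight and integrating by parts --- moves the derivative onto the only remaining $\alpha^{+}$-dependent factor, namely $G(\alpha^{+})$, and therefore produces $\tilde{g}\bigl(a(\alpha^{-}_{m})G\bigr)$, an annihilation operator acting on $G$ \emph{before} $\tilde{g}$, not the operator $\partial/\partial e^{+}$ acting on $\tilde{g}G$ that appears in $F^{A}(Tx)\tilde{g}G$. As a result the creation case gets expressed through the annihilation case, while on the annihilation side the same rewrite merely undoes the integration by parts you have just performed; the two cases feed into each other and the argument never closes. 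Your attribution of the annihilation-side contributions to the $b$- and $c$-terms of $\mathcal{R}$ is a symptom of this: in a correct treatment of that case it is the $a$- and $b$-terms that enter, and the $c$-term plays no role.

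What is missing is the step the paper actually performs: compute the other side of (\ref{FPFAL}) at the level of the kernel, i.e.\ let $F^{A}(T\alpha^{+}_{i})=\sum_{k}\bigl[A_{ki}e^{+}_{k}+C_{ki}\,\partial/\partial e^{+}_{k}\bigr]$ (and similarly $F^{A}(T\alpha^{-}_{k})$) act on $\tilde{g}(e^{+},\alpha^{-})=e^{\mathcal{R}}$ through the $e^{+}$ variables, using $\partial_{e^{+}_{k}}e^{\mathcal{R}}=\bigl(\sum_{m}a_{km}e^{+}_{m}+\sum_{m}b_{km}\alpha^{-}_{m}\bigr)e^{\mathcal{R}}$, and compare the resulting multiplication factor with the one produced on the $\tilde{g}F^{P}(x)$ side ($-\partial_{\alpha^{-}_{i}}e^{\mathcal{R}}$ in the creation case, a bare $\alpha^{-}_{k}$ in the annihilation case). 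Equivalently, use the invertibility of $b=D^{-\tau}$ to solve for $\alpha^{-}_{m}e^{\mathcal{R}}$ as a combination of $\partial_{e^{+}}e^{\mathcal{R}}$ and $e^{+}e^{\mathcal{R}}$, so that the leftover $\alpha^{-}$-terms genuinely pull out of the integral as annihilation and creation operators on $\tilde{g}G$. Only at this point do the identities (\ref{eq6}) enter, in the form $A-BD^{-1}C=D^{-\tau}$, the skew-symmetry of $BD^{-1}$ and $D^{-1}C$, and $D^{\tau}D^{-\tau}=I$; with that kernel comparison supplied, your outline becomes the paper's proof.
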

\begin{proof}
For $\alpha^{+}_{i}, G\in \Alt(W^{P}_{+})$, we have  
\begin{align}\label{integralFP-}
(\tilde{g}F^{P}(\alpha^{+}_{i})G)(e^{+})&=\int \tilde{g}(e^{+},\alpha^{-})\alpha^{+}_{i}G(\alpha^{+})e^{-\sum_{k=1}^{M}\alpha^{+}_{k}\wedge \alpha^{-}_{k}}\prod_{k=1}^{M}d\alpha^{+}_{k}d\alpha^{-}_{k}.
\end{align}
By the signed Leibniz rule, we have \begin{align*}
\frac{\partial}{\partial \alpha^{-}_{i}}e^{-\sum_{k=1}^{M}{\alpha^{+}_{k}\wedge \alpha^{-}_{k}}}=\alpha^{+}_{i}e^{-\sum_{k=1}^{M}{\alpha^{+}_{k}\wedge \alpha^{-}_{k}}}
\end{align*} and it follows that the integral in (\ref{integralFP-}) can be written
$$(\tilde{g}F^{P}(\alpha^{+}_{i}G))(e^{+})=-\int\frac{\partial}{\partial\alpha^{-}_{i}}[\tilde{g}(e^{+},\alpha^{-})]G(\alpha^{+})e^{-\sum_{k}\alpha^{+}_{k}\wedge\alpha^{-}_{k}}\prod_{k=1}^{M}d\alpha^{+}_{k}d\alpha^{-}_{k},$$
where
\begin{align*}
-\frac{\partial}{\partial \alpha^{-}_{i}}\tilde{g}(e^{+},\alpha^{-})&=\bigg(\sum_{l}D^{-\tau}_{li}e^{+}_{l}-\sum_{l}(D^{-1}C)_{il}\alpha^{-}_{l}\bigg)\tilde{g}(e^{+},\alpha^{-})\\
& = (D^{-\tau}+D^{-1}C)\alpha^{+}_{i}\tilde{g}(e^{+},\alpha^{-}).
\end{align*}
Now we have for $v \in \Alt(W^{P}_{+})$,
\begin{align*}
F^{A}(T\alpha^{+}_{i})v
 = \sum_{k=1}^{M}\bigg[A_{ki}e^{+}_{k}+C_{ki}\frac{\partial}{\partial e^{+}_{k}}\bigg]v.
\end{align*}
It follows that
\begin{align*}
&\mathrel{\phantom{=}}(F^{A}(T\alpha^{+}_{i})\tilde{g}G)(e^{+})\\
&=\int\bigg[ \sum_{k=1}^{M}A_{ki}e^{+}_{k}+C_{ki}\frac{\partial}{\partial e^{+}_{k}}\bigg]\tilde{g}(e^{+},\alpha^{-})G(\alpha^{+})e^{-\sum_{k=1}^{M}\alpha^{+}_{k}\wedge\alpha^{-}_{k}}\prod_{k=1}^{M}d\alpha^{+}_{k}d\alpha^{-}_{k}.
\end{align*}
We have
\begin{align*}
&\mathrel{\phantom{=}}\bigg[ \sum_{k}A_{ki}e^{+}_{k}+C_{ki}\frac{\partial}{\partial e^{+}_{k}}\bigg]\tilde{g}(e^{+},\alpha^{-})\\ 
& = \bigg(\sum_{k}A_{ki}e^{+}_{k}+\sum_{k,l}C_{ki}(BD^{-1})_{kl}e^{+}_{l}+\sum_{k,l}C_{ki}D^{-\tau}_{kl}\alpha^{-}_{l}\bigg)\tilde{g}(e^{+},\alpha^{-})\\
& = (A+D^{-\tau}B^{\tau}C+D^{-1}C)\alpha^{+}_{i}\tilde{g}(e^{+},\alpha^{-})\\
& = (D^{-\tau}+D^{-1}C)\alpha^{+}_{i}\tilde{g}(e^{+},\alpha^{-}),
\end{align*}
where we in the last equation used (\ref{eq6}). Thus, we have showed that $$\tilde{g}F^{P}(\alpha^{+}_{i})G=F^{A}(T\alpha^{+}_{i})\tilde{g}G$$ for $\alpha^{+}_{i},G \in \Alt(W^{P}_{+})$. We can prove (\ref{FPFAL}) in a similar fashion for $\alpha^{-}_{i}\in W^{P}_{-}$.
\end{proof}
Let $0_{P}$ and $0_{A}$ denote the unit vacuum vectors in $\Alt(W^{P}_{+})$ and $\Alt(W^{A}_{+})$ respectively.
Since the map $\tilde{g}$ satisfies the relation in (\ref{FPFAL}), the range of $\tilde{g}$ is invariant under the action of the Fock representation $F^{A}(x)$ for all $x\in W$. The Fock representation is irreducible \cite{palmer1}, so the range of $\tilde{g}$ is either trivial or all of $\Alt(W^{A}_{+})$. Since $\tilde{g}0_{P}$ is nonzero, the range of $\tilde{g}$ must be $\Alt(W^{A}_{+})$. Thus $\tilde{g}$ is invertible, so we have that $\tilde{g}$ is an element of the Clifford group.

Recall (see \cite{palmer1}) that the Pfaffian of a $2M\times 2M$ skew symmetric matrix $R$ with matrix elements $R_{j,k}$ is defined in the following way: Let $\{e_{j}\}$ denote the standard basis of $\mathbb{C}^{2n}$. The Pfaffian, $\Pr(R)$, of $R$ is defined by
\begin{align}\label{Pfaffian}
\frac{1}{2^{M}M!}\bigg(\sum_{j,k=1}^{2M}R_{j,k}e_{j}\wedge e_{k}\bigg)^{M}=\Pf(R)e_{1}\wedge \cdot \cdot \cdot\wedge e_{2M}.
\end{align}
We use Lemma \ref{LemmaB} to prove that $g$ can be written as an exponential, whose
argument is a quadratic form. 
\begin{theorem}\label{gPfaffian}
Suppose that $g$ satisfies the intertwining relation,
$$gF^{P}(x)v=F^{A}(Tx)gv\quad \mbox{for}\quad x\in W\quad \mbox{and}\quad v \in \Alt(W^{P}_{+}),$$ where $$T(g):=\left( \begin{array}{cc}
A & B  \\
C & D  \\
\end{array} \right)$$ is the matrix of its induced rotation, $T:W^{P}_{+}\oplus W^{P}_{-}\to W^{A}_{+}\oplus W^{A}_{-}$.
Suppose that the one-point function $\langle 0_{A}, g\, 0_{P}\rangle$ is nonzero.
Then the kernel $g(e^{+},\alpha^{-})$ of $g$ can be written as
$$g(e^{+},\alpha^{-})=\langle 0_{A}, g\, 0_{P} \rangle \sum_{I,J\in \mathcal{P}}\Pf(R_{I,J})e^{+}_{I}\wedge \alpha^{-}_{J},$$
where $\Pf(R_{I,J})$ is the Pfaffian of the $(\#I +\#J)\times(\#I +\#J)$ skew symmetric matrix,
\begin{align}\label{mR}R_{I,J}=\left( \begin{array}{cc}
BD^{-1}_{I\times I} & D^{-\tau}_{I\times J}  \\
-D^{-1}_{J\times I} & D^{-1}C_{J\times J}  \\
\end{array} \right).
\end{align}
The sum is over all such $I$ and $J$ with $\# I+\#J$ even.
\end{theorem}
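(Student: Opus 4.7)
The plan is to identify $g$ with a scalar multiple of the exponential-kernel operator $\tilde g$ furnished by Lemma \ref{LemmaB}, and then to recognize the Taylor expansion of $e^{\mathcal{R}}$ as a generating function for the Pfaffians $\Pf(R_{I,J})$. Concretely, Lemma \ref{LemmaB} produces an operator $\tilde g:\Alt(W^{P}_{+})\to \Alt(W^{A}_{+})$ with kernel $\tilde g(e^{+},\alpha^{-})=e^{\mathcal{R}}$ and with $T(\tilde g)=T(g)$. As noted immediately after the lemma, $\tilde g$ is invertible because $F^{A}$ is irreducible and $\tilde g\,0_{P}\neq 0$. Hence $g\,\tilde g^{-1}$ intertwines $F^{A}$ with itself, so by Schur's lemma $g=\mu\,\tilde g$ for some scalar $\mu\in\mathbb{C}$.

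To expand $e^{\mathcal{R}}$, I would merge the two bases into a single basis $f_{1},\dots,f_{2M}$ by setting $f_{k}:=e^{+}_{k}$ for $1\le k\le M$ and $f_{M+k}:=\alpha^{-}_{k}$ for $1\le k\le M$, and write
\begin{equation*}
\mathcal{R}=\tfrac{1}{2}\sum_{j,k=1}^{2M} R_{j,k}\,f_{j}\wedge f_{k},\qquad R=\begin{pmatrix} BD^{-1} & D^{-\tau} \\ -D^{-1} & D^{-1}C \end{pmatrix}.
\end{equation*}
The relations in (\ref{eq6}), together with the companion identities from $TT^{\tau}=I$, imply that $BD^{-1}$ and $D^{-1}C$ are skew-symmetric, while the off-diagonal block $-D^{-1}$ is the negative transpose of $D^{-\tau}$; thus $R$ is skew-symmetric. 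Applying (\ref{Pfaffian}) to each even-sized principal sub-matrix $R_{I,J}$ indexed by the rows and columns $I\cup(M+J)$ gives, for $|I|+|J|=2n$, the identity
\begin{equation*}
\tfrac{1}{n!}\bigl(\tfrac{1}{2}\textstyle\sum_{j,k\in I\cup(M+J)} R_{j,k}f_{j}\wedge f_{k}\bigr)^{n}=\Pf(R_{I,J})\,e^{+}_{I}\wedge \alpha^{-}_{J}.
\end{equation*}
Summing over all $I,J$ with $|I|+|J|$ even therefore yields
\begin{equation*}
e^{\mathcal{R}}=\sum_{I,J\in\mathcal{P}}\Pf(R_{I,J})\,e^{+}_{I}\wedge \alpha^{-}_{J},
\end{equation*}
with the empty Pfaffian convention $\Pf(R_{\emptyset,\emptyset})=1$.

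Finally, I would pin down the scalar $\mu$ by pairing with vacuum vectors. Taking $G=0_{P}=1$ in (\ref{gtilde}) and extracting the coefficient of $1\in\Alt(W^{A}_{+})$ via the Berezin inner product (\ref{innerpb}), the only contribution to $\langle 0_{A},\tilde g\,0_{P}\rangle$ comes from the $(I,J)=(\emptyset,\emptyset)$ term of $e^{\mathcal{R}}$, whose coefficient is $1$; combined with the top-degree term of $e^{-\sum\alpha^{+}_{k}\wedge\alpha^{-}_{k}}$, this evaluates to $1$. Therefore $\mu=\langle 0_{A},g\,0_{P}\rangle$, and substituting into $g=\mu\tilde g$ gives the kernel stated in the theorem.

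The routine parts are the expansion of $e^{\mathcal{R}}$ and the skew-symmetry check for $R$. The main obstacle is Step 3: bookkeeping the sign conventions in the Berezin integral so as to confirm $\langle 0_{A},\tilde g\,0_{P}\rangle=1$ rather than a nontrivial factor; once this normalization is established, the identification $g=\langle 0_{A},g\,0_{P}\rangle\,\tilde g$ yields the Pfaffian representation immediately.
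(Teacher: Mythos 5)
Your proposal is correct and takes essentially the same route as the paper's proof: it invokes Lemma \ref{LemmaB} together with irreducibility (Schur) to conclude $g=\mu\,\tilde g$, expands $e^{\mathcal{R}}$ as the generating sum of Pfaffians $\Pf(R_{I,J})$, and fixes $\mu=\langle 0_{A}, g\,0_{P}\rangle$ via the Berezin-integral computation $\langle 0_{A},\tilde g\,0_{P}\rangle=1$. The only difference is that you make explicit the Schur argument and the skew-symmetry check and Pfaffian expansion that the paper simply cites as well known.
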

\begin{proof}
Since $g$ satisfies the intertwining relation,
$$gF^{P}(x)=F^{A}(Tx)g\quad \mbox{for}\quad x\in W,$$ Lemma \ref{LemmaB} implies that there is a nonzero constant $\lambda$ such that
\begin{align}\label{g}
g(e^{+}, \alpha^{-})=\lambda\, e^{\mathcal{R}},
\end{align}
where
$$\mathcal{R}=\sum_{l,m}[(\tfrac{1}{2}a_{lm}e^{+}_{l}\wedge e^{+}_{m})+(b_{lm}e^{+}_{l}\wedge \alpha^{-}_{m})+(\tfrac{1}{2}c_{lm}\alpha^{-}_{l}\wedge \alpha^{-}_{m})]$$ for $a=BD^{-1}$, $b=D^{-\tau}$ and $c=D^{-1}C$.
It can be checked that
\begin{align*}
\langle 0_{A}, \tilde{g}0_{P} \rangle =\int e^{\tfrac{1}{2}\sum_{m,l=1}^{M}c_{lm}\alpha^{-}_{l}\wedge \alpha^{-}_{m}}e^{-\sum_{k=1}^{M}\alpha^{+}_{k}\wedge \alpha^{-}_{k}}\prod_{k=1}^{M}d\alpha^{+}_{k}\alpha^{-}_{k}.
\end{align*}
Writing out the Taylor series expansion in the integrand expression above, we obtain
\begin{align*}
\langle 0_{A}, \tilde{g}0_{P} \rangle =1.
\end{align*}
Hence, $$\lambda =\langle 0_{A}, g0_{P}\rangle$$ and
\begin{align}\label{e}
g(e^{+},\alpha^{-})=\langle 0_{A}, g0_{P}\rangle e^{\mathcal{R}}.
\end{align}
It is well-known that
\begin{align}\label{ePfaffian}
e^{\mathcal{R}}=\sum_{I,J\in \mathcal{P}}\Pf(R_{I,J})e^{+}_{I}\wedge \alpha^{-}_{J},
\end{align}
where $R_{I,J}$ is given in (\ref{mR}).
Here $\#I+\#J$ must be even in order to contribute to the sum.
\newline
Combining (\ref{e}) and (\ref{ePfaffian}), we obtain the result.%and
\end{proof}
\section{The Spectral Curve associated with the Induced Rotation for The Transfer Matrix}\label{Appendix C}
In this section, we recall facts about the Jacobian elliptic functions, $\sn(u,k)$, $\cn(u,k)$ and $\dn(u,k)$, where $u$ is the uniformization parameter and $k$ is the modulus. These functions play a central role in the holomorphic factorization of the ratio of summability kernels on the spectral curve in Section \ref{Spectral2}. They are also a key element in a product formula for the spin matrix elements on the finite periodic lattice as we will discover in Section \ref{Pfaffianformalism}.
We follow the introduction of the Jacobian elliptic functions as given in \cite{WW82} and \cite{palmer1}, and refer the reader to these books for more details. 
The spectral curve $\mathcal{M}$ associated with the induced rotation $T_{z}(V)$ for the transfer matrix is given by the set of $(z,\lambda)$ such that
\begin{align}\label{spectral}
s_{1}\frac{z+z^{-1}}{2}+s_{2}\frac{\lambda+\lambda^{-1}}{2}=c_{1}c_{2}.
\end{align}
The spectral curve is topologically a torus and the two fold covering, $(\lambda,z)\mapsto z$, is ramified at $z=\alpha^{\pm}_{1},\alpha^{\pm}_{2}$, where  \cite[p. 65]{palmer1} 
$$\alpha_{1}=(c^{*}_{1}-s^{*}_{1})(c_{2}+s_{2})\quad \mbox{and}\quad \alpha_{2}=(c^{*}_{1}+s^{*}_{1})(c_{2}+s_{2}).$$ The roots $\alpha_{1}$ and $\alpha_{2}$ were introduced in Section \ref{Ising} in connection with the Boltzmann weights.
We use the shorthand notation $\sn(u)$, $\cn(u)$ and $\dn(u)$, for $\sn( u,k)$, $\cn( u,k)$ and $\dn( u,k)$ since the modulus $k$ is fixed at $k=\frac{1}{s_{1}s_{2}}$ in our calculations.
The functions  $\sn(u)$, $\cn(u)$ and $\dn(u)$ are doubly periodic meromorphic functions of $u$ and they satisfy the equations
\begin{align}\label{cddnsn}
\sn^{2}(u)+\cn^{2}(u)=1, \quad \dn^{2}(u)+k^{2}\sn^{2}(u)=1
\end{align}
and
\begin{align}
\label{sn'}
\frac{d}{du}\sn(u)=\cn(u)\dn(u).
\end{align}
We use the standard notation \cite{palmer1},
$$\ns(u):=\frac{1}{\sn(u)}, \quad \cs(u):=\frac{\cn(u)}{\sn(u)}$$ and in general
$$\nx(u):=\frac{1}{\xn(u)}\quad \mbox{and}\quad \x\y(u):=\frac{\xn(u)}{\yn(u)},$$ where $\x$ and $\y$ are one of either $c$, $d$ or $s$.
The Jacobian elliptic functions satisfy the following addition formulas \cite[p. 496-497]{WW82},
\begin{align}
\sn(u+v)& = \frac{\sn (u)\cn (v) \dn (v)+\sn (v)\cn (u)\dn (u)}{1-k^{2}\sn^{2} (u)\sn^{2}(v)}, \label{additionsn} \\ 
\sn(u-v)& = \frac{\sn^{2}(u)-\sn^{2}(v)}{\sn(u)\cn(v)\dn(v)+\sn(v)\cn(u)\dn(u)}, \label{subtractionsn} \\ 
\cn(u+v)& = \frac{\cn (u)\cn (v) -\sn (u) \sn (v) \dn (u)\dn (v)}{1-k^{2}\sn^{2} (u)\sn^{2}(v)}, \label{additioncn} \\ 
\dn(u+v)& = \frac{\dn (u)\dn (v) -k^{2}\sn (u) \sn (v) \cn (u)\cn (v)}{1-k^{2}\sn^{2} (u)\sn^{2}(v)}. \label{additiondn} 
\end{align}
Introduce the elliptic integrals \cite[p. 501]{WW82},
\begin{align}\label{K}K=\int_{0}^{1}(1-t^{2})^{-\tfrac{1}{2}}(1-k^{2}t^{2})^{-\tfrac{1}{2}}\,dt,\end{align}
\begin{align}\label{K'}K'=\int_{0}^{1}(1-t^{2})^{-\tfrac{1}{2}}(1-k'^{2}t^{2})^{-\tfrac{1}{2}}\,dt\end{align}
for which the complementary modulus $k'$ is defined by 
$k^{2}+k'^{2}=1.$
If we translate the Jacobian elliptic functions by $iK'$, we have \cite[p. 503]{WW82}
\begin{align}\label{translation}
\begin{array}{ll}
\sn(u\pm iK')&=k^{-1}\ns(u),\\
\cn(u\pm iK')&=\mp ik^{-1}\ds(u),\\
\dn(u\pm iK')&=\mp i\cs(u).
\end{array}
\end{align}
Introduce the two cycles on $\mathcal{M}$,
$$\mathcal{M}_{\pm}=\{(z,\lambda)=(e^{i\theta},e^{\mp\gamma(\theta)})\},$$ for $\theta\in [-\pi, \pi)$.
Here we have introduced the notation $\gamma(\theta):=\gamma( e^{i\theta})$, where $\gamma$ is defined as the positive root of
\begin{align*}\cosh \gamma(z)=c^{*}_{2}c_{1}-s_{2}^{*}s_{1}\big(\tfrac{z+z^{-1}}{2}\big).\end{align*} 
Palmer \cite[p. 71]{palmer1} showed that an elliptic substitution gives a uniformization of the whole complex curve:
\begin{theorem}\cite{palmer1}\label{palmer}
The map,
$$[0,2K]\times i[-K',K']\ni u\mapsto (z(u,a),\lambda(u,a))$$
with
$$z(u,a)=k\sn(u+ia)\sn(u-ia)$$
and
$$\lambda(u,a)=\frac{\sn(u-ia)}{\sn(u+ia)}$$
is a uniformization of the spectral curve,
$$s_{1}\frac{z+z^{-1}}{2}+s_{2}\frac{\lambda+\lambda^{-1}}{2}=c_{1}c_{2},$$
where $k=\frac{1}{s_{1}s_{2}},$ and
$0<2a<K'$ is defined by
$$s_{1}=-i\sn(2ia).$$
In the uniformization parameter $u$, the cycles $\mathcal{M}_{\pm}$ are located at 
$$\mathcal{M}_{\pm}=\bigg\{u:0< \Re u< 2K, \Im u=\pm \frac{K'}{2}\bigg\}.$$
\end{theorem}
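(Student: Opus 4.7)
The plan is to prove the theorem in three steps: verify that the parametrization satisfies the spectral curve equation, show that the map descends to a biholomorphism between the torus $T := \mathbb{C}/(2K\mathbb{Z}\oplus 2iK'\mathbb{Z})$ and $\mathcal{M}$, and locate the real cycles $\mathcal{M}_{\pm}$.

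For the algebraic verification, setting $A := \sn(u+ia)$ and $B := \sn(u-ia)$ so that $z = kAB$ and $\lambda = B/A$, the curve equation after clearing denominators reduces to
$$\tfrac{s_1}{k}(1+k^2A^2B^2) + s_2(A^2+B^2) = 2c_1c_2\,AB.$$
I will use the addition/subtraction formulas (\ref{additionsn})--(\ref{subtractionsn}) to express $AB$ and $A^2+B^2$ as rational functions of $\sn^2 u$ and $\sn^2 ia$, then combine with the defining relation $s_1 = -i\sn(2ia)$ (together with the duplication formula for $\sn(2ia)$), the identities $c_j^2 - s_j^2 = 1$, and $k = 1/(s_1 s_2)$. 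This reduces the claim to a polynomial identity in $\sn^2 u$ of bounded degree; rather than expanding brute-force, I will verify it by comparing leading coefficients and checking a few special values (for instance $u=0$ and $u=K$).

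For the global structure, the periodicities $\sn(u+2K) = -\sn u$ and $\sn(u+2iK') = \sn u$ \cite[p.~503]{WW82} show that both $z(u,a)$ and $\lambda(u,a)$ descend to functions on $T$. Counting zeros and poles in the fundamental parallelogram, $z(u,a)$ has two simple zeros at $u = \pm ia$ and two simple poles at $u = iK' \pm ia$, so it is a degree-two map $T \to \mathbb{P}^1$. Differentiating via (\ref{sn'}) pins down its four ramification points, whose branch values coincide with $\{\alpha_1^{\pm 1}, \alpha_2^{\pm 1}\}$ --- the same as the branch locus of the two-fold covering $\mathcal{M} \to \mathbb{P}^1$. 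Two genus-one double covers of $\mathbb{P}^1$ with the same branch set are isomorphic, so the induced holomorphic map $T \to \mathcal{M}$ is a biholomorphism.

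To locate the cycles, I will use the anti-holomorphic involution $\rho(u) := \bar u + iK'$, which fixes the horizontal line $\Im u = K'/2$ modulo the lattice. Combining $\overline{\sn w} = \sn \bar w$ (valid since $k$ is real) with the quarter-period identity $\sn(w + iK') = 1/(k\sn w)$ from (\ref{translation}), a direct substitution yields $z(\rho(u),a) = 1/\overline{z(u,a)}$ and $\lambda(\rho(u),a) = \overline{\lambda(u,a)}$. On the line $\Im u = \pm K'/2$ this forces $|z|=1$ and $\lambda \in \mathbb{R}$, precisely the defining condition for $\mathcal{M}_{\pm}$, and the sign of $\Im u$ selects between $|\lambda|<1$ and $|\lambda|>1$, verified at a single reference point such as $u = K + iK'/2$. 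The main obstacle is the first step: organizing the thicket of elliptic identities compactly, for which the rational-function/bounded-degree argument is indispensable --- without it, the verification becomes a lengthy expansion whose bookkeeping is easy to mishandle.
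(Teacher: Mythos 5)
The paper does not actually prove this statement; it is quoted from Palmer \cite[p.~71]{palmer1}, so there is no internal proof to compare against. Taken on its own terms, your three-step outline is a sound route, and the second and third steps are essentially correct: the periodicities $\sn(u+2K)=-\sn(u)$ and $\sn(u+2iK')=\sn(u)$ do make $z$ and $\lambda$ well defined on $T=\mathbb{C}/(2K\mathbb{Z}\oplus 2iK'\mathbb{Z})$ with $z$ of degree two, and your antiholomorphic involution $\rho(u)=\bar u+iK'$ computation is exactly right (the identities $z\circ\rho=1/\bar z$ and $\lambda\circ\rho=\bar\lambda$ follow from $\sn(w\pm iK')=k^{-1}\ns(w)$ together with the reality of $k$ and $a$), so the fixed circles $\Im u=\pm\tfrac{K'}{2}$ land in $\{|z|=1,\ \lambda\in\mathbb{R}\}=\mathcal{M}_{+}\cup\mathcal{M}_{-}$; injectivity of the uniformization then upgrades ``maps into'' to ``maps onto,'' and a single sign check finishes the identification.

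Two points need tightening. First, in the biholomorphism step the inference ``two genus-one double covers of $\mathbb{P}^{1}$ with the same branch set are isomorphic, so the induced map $T\to\mathcal{M}$ is a biholomorphism'' is a non sequitur: abstract isomorphism of the two covers does not make your particular map an isomorphism. What you actually want is the degree count: once step one shows the image lies in $\mathcal{M}$, the composite of $T\to\mathcal{M}$ with the $z$-projection $\mathcal{M}\to\mathbb{P}^{1}$ equals $z(u)$ and has degree $2$, while the $z$-projection itself has degree $2$, so $T\to\mathcal{M}$ has degree $1$ and is a biholomorphism; this also makes the branch-value computation unnecessary. Second, step one is the real content and is left schematic. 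To run the bounded-degree argument you must first express all the Boltzmann data in $a$: from $s_{1}=-i\sn(2ia)$ and $k=1/(s_{1}s_{2})$ one gets $s_{2}=i/(k\sn(2ia))$, $c_{1}=\cn(2ia)$ and $c_{2}=i\dn(2ia)/(k\sn(2ia))$ (with a consistent choice of signs for $T<T_{C}$); only then does $2c_{1}c_{2}AB-\tfrac{s_{1}}{k}(1+k^{2}A^{2}B^{2})-s_{2}(A^{2}+B^{2})$ become, via the addition formulas, a rational function of $\sn^{2}u$ whose numerator is a polynomial of explicit degree, and ``checking a few special values'' is only a proof if you check strictly more points than that degree. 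With these two repairs the outline is a complete and correct proof of the cited theorem.
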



\begin{thebibliography}{00}






\bibitem{BW35} R. Brauer and H. Weyl. Spinors in n-dimensions.
\emph{Amer. J. Math.}, 57:425-449, 1935.



\bibitem{BL03} A. I. Bugrij and O. O. Lisovyy. Spin matrix elements in 2D Ising model on the finite lattice.
\emph{Phys. Lett. A}, 319(3-4):390-394, 2003.

\bibitem{BL04} A. I. Bugrij and O. O. Lisovyy. Correlation Function of the Two-Dimensional Ising Model on a Finite Lattice. II.
\emph{Teoret. Mat. Fiz.}, 140(1):113-127, 2004.


\bibitem{FS80} L. D. Faddeev and A. A. Slavnov. \emph{Gauge Fields, Introduction to Quantum Theory.} The Benjamin/Cummings Publishing Company, \emph{INC, Advanced Book Program}, Reading, Massachusetts, 1980.

\bibitem{GIPST07} G. von Gehlen, N. Iorgov, S. Pakuliak, V. Shadura, and Yu Tykhyy.  Form-factors in the Baxter-
Bazhanov-Stroganov model I: Norms and matrix elements. J. Phys. A: Math. Theor., 40:14117, 2007.

\bibitem{GIPST08} G. von Gehlen, N. Iorgov, S. Pakuliak, V. Shadura, and Yu Tykhyy.  Form-factors in the Baxter-
Bazhanov-Stroganov model II: Ising model on the finite lattice. J. Phys. A: Math. Theor., 41:095003, 2008.

\bibitem{Hy09} G. Hystad. \emph{Periodic Ising Correlations.} PhD thesis, University of Arizona, Tucson, AZ, 2009.




\bibitem{kaufman} B. Kaufman. Crystal Statistics. II. Partition Function Evaluated by Spinor Analysis.
\emph{Phys. Rev.}, 76:1232-1243, 1949.

\bibitem{KO49} B. Kaufman and L. Onsager. Crystal Statistics. III. Short-Range Order in a Binary Ising Lattice. \emph{Phys. Rev.}, 76:1244-1252, 1949.






\bibitem{Lisovyy} O. O. Lisovyy. Tau Functions for the Dirac Operator on the Cylinder. \emph{Communications in Math. Phys.}, 255:61-95, 2005.



\bibitem{MP72} J. H. McClellan and T. W. Parks. Eigenvalue and Eigenvector Decomposition of the Discrete Fourier Transform.
\emph{IEEE Trans. Audio and Electroacoustics}, 20(1):66-74, 1972.






\bibitem{onsager} L. Onsager. Crystal Statistics. I. A Two-Dimensional Model with an Order-Disorder Transition.
\emph{Phys. Rev.} (2), 65:117-149, 1944.


\bibitem{palmer1} J. Palmer. \emph{Planar Ising Correlations.}
\emph{Progress in Math. Phys.}, Volume 49, Birkh\"{a}user, 2006.

\bibitem{PH10} J. Palmer and G. Hystad. Spin Matrix for the Scaled Periodic Ising Model. To Appear in \emph{J. Math. Phys.}. Preprint at arXiv:1008.0352v2 [nlin.SI].


\bibitem{BS96} B. Simon. \emph{Representations of Finite and Compact Groups.}
Graduate Studies in Mathematics, Volume 10, AMS, 1996.



\bibitem{Sp65} M. Spivak. \emph{Calculus on Manifolds.}
W. A. Benjamin Inc., New York, New York, 1965.




\bibitem{WW82} E. T. Whittaker and G. N. Watson. \emph{A course of Modern analysis. An Introduction to the General Theory of Infinite Processes and of Analytic Functions; with an Account of the Principal Transcendental functions.} Fourth edition. Reprinted. Cambridge University Press, New York, 1962. 





\end{thebibliography}
\end{document}